\newtheorem{lemma}{Lemma}
\newtheorem{proposition}{Proposition}
\newtheorem{remark}{Remark}
\newcommand{\eqdef}{:=}
\newcommand{\rvec}[1]{\mathbbm{#1}} 		% random vectors 
\newcommand{\rmat}[1]{\mathbbm{#1}} 	% random matrices
\newcommand{\E}{\mathsf{E}}		% expectation
\newcommand{\V}{\mathsf{V}}			% variance
\newcommand{\stdset}[1]{\mathbbmss{#1}}	% standard sets
\newcommand{\set}[1]{\mathcal{#1}}		% sets
\renewcommand{\vec}[1]{\bm{#1}}		% deterministic vector or matrix
\newcommand{\CN}{\mathcal{CN}}			% complex Gaussian
\newcommand{\herm}{\mathsf{H}}			% Hermitian transpose
\newcommand{\T}{\mathsf{T}}				% Transpose
\author{Lorenzo Miretti,~\IEEEmembership{Member,~IEEE}, Renato L.~G. Cavalcante,~\IEEEmembership{Member,~IEEE}, Emil Björnson,~\IEEEmembership{Fellow,~IEEE}, Sławomir~Sta\'nczak,~\IEEEmembership{Senior~Member,~IEEE}}
\title{UL-DL Duality for Cell-free Massive MIMO with Per-AP Power and Information Constraints% <-this % stops a space
\thanks{L.~Miretti and S.~Sta\'nczak are with the Network Information Theory group, Technische Universität  Berlin, Berlin 10587, Germany, and the Department of 
Wireless Communications and Networks, Fraunhofer Institute for Telecommunications Heinrich-Hertz-Institut HHI, Berlin 10587, Germany (email: \{miretti, slawomir.stanczak\}@tu-berlin.de). R.~L.~G.~Cavalcante is with the Department of 
Wireless Communications and Networks, Fraunhofer Institute for Telecommunications Heinrich-Hertz-Institut HHI, Berlin 10587, Germany (email: renato.cavalcante@hhi.fraunhofer.de). E.~Bj\"ornson is with the Department of Computer Science, KTH Royal Institute of Technology, SE-100~44 Stockholm, Sweden (email: emilbjo@kth.de). \\ L.~Miretti, R.~L.~G.~Cavalcante, and S.~Sta\'nczak acknowledge the financial support by the Federal Ministry of Education and Research of Germany in the programme of “Souverän. Digital. Vernetzt.” Joint project 6G-RIC, project identification numbers: 16KISK020K, 16KISK030. E.~Bj\"ornson acknowledges the financial support by the FFL18-0277 grant from the Swedish Foundation for Strategic Research.
}\thanks{© 2024 IEEE.  Personal use of this material is permitted.  Permission from IEEE must be obtained for all other uses, in any current or future media, including reprinting/republishing this material for advertising or promotional purposes, creating new collective works, for resale or redistribution to servers or lists, or reuse of any copyrighted component of this work in other works.}}
\begin{document}
\maketitle

%\IEEEpubid{\begin{minipage}{\textwidth}\ \\[12pt] \centering
%  © 2024 IEEE.  Personal use of this material is permitted.  Permission from IEEE must be obtained for all other uses, in any current or future media, including reprinting/republishing this material for advertising or promotional purposes, creating new collective works, for resale or redistribution to servers or lists, or reuse of any copyrighted component of this work in other works.
%\end{minipage}}

\begin{abstract}
We derive a novel uplink-downlink duality principle for optimal joint precoding design under per-transmitter power and information constraints in fading channels. The information constraints model limited sharing of channel state information and data bearing signals across the transmitters. The main application is to cell-free networks, where each access point (AP) must typically satisfy an individual power constraint and form its transmit signal using limited cooperation capabilities. Our duality principle applies to ergodic achievable rates given by the popular hardening bound, and it can be interpreted as a nontrivial generalization of a previous result by Yu and Lan for deterministic channels. This generalization allows us to study involved information constraints going beyond the simple case of cluster-wise centralized precoding covered by previous techniques. Specifically, we show that the optimal joint precoders are, in general, given by an extension of the recently developed team minimum mean-square error method. As a particular yet practical example, we then solve the problem of optimal local precoding design in user-centric cell-free massive MIMO networks subject to per-AP power constraints.
\end{abstract}
\begin{IEEEkeywords}
    Duality, cell-free, massive MIMO, distributed precoding, team decision theory, MMSE.
\end{IEEEkeywords}

\section{Introduction}
\IEEEpubidadjcol
Cell-free massive MIMO networks have attracted significant interest for their potential in enhancing the performance of future generation mobile access networks. The main focus is the evolution of known coordinated multi-point concepts (CoMP) towards practically attractive access solutions that combine the benefits of access point (AP) cooperation and ultra-dense deployments. To this end, considerable research effort has been devoted to the development of scalable and possibly user-centric system architectures and algorithms covering, for instance, power control, pilot-based channel estimation, joint processing such as precoding and combining, fronthaul overhead, network topology, and initial access \cite{ngo2017cellfree,nayebi2017precoding,interdonato2019scalability,bashar2019maxmin,bashar2019quantization,interdonato2019ubiquitous,buzzi2020,emil2020scalable,attarifar2020subset,liu2020analysis,italo2021ota,interdonato2021enhanced,dandrea2021soft,lancho2021short,du2021cellfree,shaik2021mmse,chen2021access,gottsch2022subspace}.      

Against this background, in this study we address the open problem of optimal joint downlink precoding design in cell-free massive MIMO networks, by considering minimum quality-of-service requirements and by assuming that each AP is subject to an individual power constraint and an individual information constraint. The information constraints model limited AP cooperation capabilities, which are motivated by the need for realizing scalable cell-free architectures with reduced fronthaul and joint processing load. More specifically, the information constraints model:
\begin{itemize}
\item limited sharing of channel state information (CSI), covering, for instance, the local CSI model as in the original version of the cell-free massive MIMO paradigm \cite{ngo2017cellfree}; and
\item limited sharing of data bearing signals, for instance, only within user-centric cooperation clusters as in \cite{buzzi2020,emil2020scalable}.
\end{itemize}
Each AP must form its transmit signal as a function of the CSI and data bearing signals specified by the constraints, and no additional information exchange between the APs is allowed. This is in contrast to related works such as \cite{italo2021ota}, which covers iterative information exchange during precoding computation. In the cell-free massive MIMO literature, the best performing joint precoders are typically designed from joint uplink combiners, motivated by a known uplink-downlink duality principle for fading channels \cite[Ch.~4]{massivemimobook} \cite[Ch.~6]{demir2021foundations}. However, optimal joint precoders are generally unknown owing to the following two reasons. First, until very recently, optimal joint combiners were not known except for the relatively simple case of full CSI sharing within each cooperation cluster, an information constraint leading to so-called \textit{centralized} combining. Second, the known uplink-downlink duality principle for fading channels holds for a looser and somewhat less practical sum power constraint. 

\IEEEpubidadjcol
Addressing the first issue is known to be quite challenging. In essence, depending on the information constraints, each AP may need to take combining decisions that are robust not only to channel estimation errors, but also to the possibly unknown combining decisions taken at the other APs. A novel method for optimally addressing this issue, called the \textit{team} minimum mean-square error (MMSE) method, has been recently proposed in \cite{miretti2021team}. Together with the available uplink-downlink duality principle for fading channels, this method is used in \cite{miretti2021team,miretti2021team2} to derive novel \textit{distributed} precoders under various relevant information constraints, including the aforementioned local CSI model. However, in its current form, this method is provenly optimal under a sum power constraint only, and hence it does not provide a solution to our problem.

%, and for  variations based on equivalent expressions of the optimal centralized solution \cite{shaik2021mmse,italo2021ota}. However, these variations require either additional radio resources for iterative bidirectional over-the-air training procedure, while \cite{shaik2021mmse} exploits properties of sequential information processing in serial fronthauls that cannot be transposed to downlink precoding.
A partial solution to the second issue is given by the alternative uplink-downlink duality principle under per-antenna power constraints developed in \cite{yu2007transmitter}, applied for instance in \cite{emil2013optimal} in the context of CoMP. However, the method in \cite{yu2007transmitter} applies to deterministic channels, i.e., to fixed channel fading realizations, which imposes many limitations. A first limitation is that optimal schemes typically involve solving relatively complex optimization problems for each channel realization, which may be impractical in large systems. Perhaps the most important limitation is that the technique in \cite{yu2007transmitter} applies to centralized precoding only, i.e.,  it does not cover optimal distributed precoding design  under limited CSI sharing. This crucial point can be informally explained as follows.

Consider the following classical feasibility problem over a $L\times K$ MISO broadcast channel with per-antenna power constraints:
\begin{equation}\label{eq:deterministic}
\begin{aligned}
\text{find} \quad & \vec{T} \in \stdset{C}^{L\times K}\\
\text{subject to} \quad & (\forall k \in \{1,\ldots,K\})~r_k(\vec{H},\vec{T}) \geq R,\\
& \textstyle(\forall l \in \{1,\ldots,L\})~\sum_{k=1}^K|[\vec{T}]_{l,k}|^2\leq P,
\end{aligned}
\end{equation}
where $r_k(\vec{H},\vec{T})=\log(1+\mathrm{SINR}_k(\vec{H},\vec{T}))$ denotes the $k$th user's instantaneous achievable rate for a fixed channel matrix $\vec{H}$ and precoding matrix $\vec{T}$. A solution to this problem is given in \cite{yu2007transmitter}, and it takes the form of a properly regularized pseudo-inverse of $\vec{H}$, i.e., a standard MMSE precoding matrix. Now, in order to enforce a nontrivial information constraint on $\vec{T}$ going beyond the case of cluster-wise centralized precoding, such as letting each entry of $\vec{T}$ to depend on  different channel estimates, the above problem needs to be radically modified. One possible approach is to formulate the problem in a statistical sense, by considering ergodic rate optimization over fading channels\footnote{Other formulations may include the replacement of the average constraints with almost sure inequalities, or deterministic worst-case approaches, i.e., to optimize each entry of $\vec{T}$ such that the  rate and power constraints are satisfied for all possible values of the other entries of $\vec{T}$. However, these formulations are typically  intractable and may be too conservative. Hence, they are not covered by both the current literature and this study.}:
\begin{equation}\label{eq:ergodic}
\begin{aligned}
\text{find} \quad & \rvec{T} \in \set{T}\\
\text{subject to} \quad & (\forall k \in \{1,\ldots,K\})~\E[r_k(\rvec{H},\rvec{T})] \geq R,\\
& \textstyle(\forall l \in \{1,\ldots,L\})~\sum_{k=1}^K\E\left[|[\rvec{T}]_{l,k}|^2\right]\leq P,
\end{aligned}
\end{equation}
where $\rvec{T}$ and $\rvec{H}$ are random matrices, and where the information constraint is encoded in $\set{T}$ as a certain subset of the space of $L\times K$ random matrices (see Section~\ref{sec:model} for details). However, the technique in \cite{yu2007transmitter} does not cover fading channels, i.e., it cannot solve \eqref{eq:ergodic}.

%This is because standard optimization problems taking as input a fixed channel realization produce solutions that may depend on fixed yet unknown channel variables, and hence violate the information constraints. 

To address the above issues, in this study\footnote{A part of the results in this study is presented in \cite{miretti2022duality} without proof. This study extends \cite{miretti2022duality} by providing complete derivations, the expressions for optimal local and centralized precoding, and additional details on the numerical implementation of the proposed algorithms.} we derive a novel uplink-downlink duality principle for fading channels under per-AP power constraints. More precisely, we extend the technique in \cite{yu2007transmitter} to cover a variation of \eqref{eq:ergodic} with $\E[r_k(\rvec{H},\rvec{T})]$ replaced by a more tractable lower bound where the expectation is rigorously moved inside the logarithm following  the so-called \textit{hardening} bounding technique \cite{marzetta2016fundamentals}. As discussed in details throughout this study, our derivation significantly departs from \cite{yu2007transmitter}, mostly due to the challenges introduced while moving from classical optimization problems over the (finite dimensional) space of deterministic matrices, as in \eqref{eq:deterministic}, to more involved functional optimization problems over the (infinite dimensional) space of random matrices, as in \eqref{eq:ergodic}. Furthermore, building on the above result, we show that optimal joint precoders are given by solutions to properly parametrized MMSE problems under per-AP information constraints, i.e., by properly parametrized variations of team MMSE precoders \cite{miretti2021team}. In summary, the main contribution of this study can be interpreted as a nontrivial extension of the method in \cite{yu2007transmitter} to fading channels, and of the method in \cite{miretti2021team} to per-AP power constraints. As a concrete application of our findings, we then solve the previously open problem of optimal \textit{local} precoding design in cell-free massive MIMO networks under per-AP power constraints. Moreover, we provide a potentially simpler variation of the known centralized solution under per-AP power constraints where, in contrast to previous techniques, its parameters are not optimized for each channel realization, but based on relatively slowly varying channel statistics. 

The paper is organized as follows. Section~\ref{sec:math} and Section~\ref{sec:model} provide the main definitions, mathematical tools, and modeling assumptions. Section~\ref{sec:lagrange} presents and studies the main optimization problem using Lagrangian duality arguments. Building on the obtained insights, Section~\ref{sec:duality} derives the proposed uplink-downlink duality principle, which is then exploited in Section~\ref{sec:applications} to characterize the optimal solution structure. Simple applications of the main results are illustrated in Section~\ref{sec:sim} by means of numerical simulations. Finally, Section~\ref{sec:conclusion} summarizes the main results, and outlines some limitations and possible future directions. 

\textit{Reproducible Research}: the simulation code is available at \url{https://github.com/lorenzomiretti/duality}.

\section{Mathematical preliminaries}
\label{sec:math}
\subsection{Notation and definitions}
We denote by $\stdset{R}_+$ and $\stdset{R}_{++}$ the sets of, respectively, nonnegative and positive reals. The Euclidean norm in $\stdset{C}^K$ is denoted by $\|\cdot\|$. Let $(\Omega,\Sigma,\mathbb{P})$ be a probability space. We denote by $\set{H}^K$ the set of complex-valued random vectors, i.e., $K$-tuples of $\Sigma$-measurable functions $\Omega \to \stdset{C}$ satisfying $(\forall \rvec{x}\in \set{H}^K)$ $\E[\|\rvec{x}\|^2]<\infty$. Together with the standard operations of addition and real scalar multiplication, we recall that $\set{H}^K$ is a real vector space. Given a random variable $X\in \set{H}$, we denote by $\E[X]$ and $\V(X)$ its expected value and variance, respectively. We use $\rvec{x}-\rvec{y}-\rvec{z}$ to denote a Markov chain, i.e., to denote that the random vectors $\rvec{x}$ and $\rvec{z}$ are conditionally independent given another random vector $\rvec{y}$. Inequalities involving vectors in $\stdset{R}^K$ should be understood coordinate-wise. The $k$th column of the $K$-dimensional identity matrix $\vec{I}_K$ is denoted by $\vec{e}_k$. 

\subsection{Lagrangian duality in general vector spaces} 
The following key result is frequently invoked throughout our study, and can be found in \cite{zalinescu2002convex}.
\begin{proposition}\label{prop:duality}
Consider the functions $f: \set{X} \to \stdset{R}$ and $\vec{g}: \set{X} \to \stdset{R}^N$, where $\set{X}$ is a real vector space, and the optimization problem
\begin{equation*}
\begin{aligned}
\underset{X \in \set{X}} {\text{minimize}} \quad &  f(X)\\
\text{subject to} \quad & \vec{g}(X)\leq \vec{0}.
\end{aligned}
\end{equation*}
Define the primal optimum $p^\star := \inf\{f(X)~|~\vec{g}(X)\leq \vec{0}, X\in\set{X}\}$, and the dual optimum $ d^\star := \sup\{d(\vec{\lambda})~|~\vec{\lambda}\in \stdset{R}^N_+\}$, where $d(\vec{\lambda}):=\inf\{f(X)+\vec{\lambda}^\T\vec{g}(X)~|~ X\in\set{X}\}$ for $\vec{\lambda}\in \stdset{R}_+^N$. Each of the following holds: 
\begin{enumerate}[(i)]
\item $d^\star \leq p^\star$ (weak duality);
\item If $f$ and $\vec{g}$ are proper convex functions \cite[pp. 39]{zalinescu2002convex}, and $\{X\in \set{X} ~|~ \vec{g}(X) < \vec{0}\}\neq \emptyset$ (Slater's condition), then $d^\star = p^\star$ holds (strong duality). Furthermore, there exist Lagrangian multipliers $\vec{\lambda}^\star \in \stdset{R}_+^N$ such that $d(\vec{\lambda}^\star) = d^\star$.
\end{enumerate} 
\end{proposition}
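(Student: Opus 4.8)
The plan is to prove the weak duality bound (i) directly from the definitions, and the strong duality equality (ii) together with dual attainment by a finite-dimensional separating hyperplane argument applied to the value set of the problem; the whole point is that, even though $\set{X}$ may be infinite dimensional, the relevant geometry can be pushed into $\stdset{R}^{N+1}$, so no topological structure on $\set{X}$ is needed.

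For (i), I would fix an arbitrary $\vec{\lambda}\in\stdset{R}^N_+$ and an arbitrary feasible $X$, i.e.\ one with $\vec{g}(X)\leq\vec{0}$. Since both $\vec{\lambda}\geq\vec{0}$ and $\vec{g}(X)\leq\vec{0}$ hold coordinate-wise, we have $\vec{\lambda}^\T\vec{g}(X)\leq 0$, hence $f(X)\geq f(X)+\vec{\lambda}^\T\vec{g}(X)\geq d(\vec{\lambda})$ by definition of $d$ as an infimum over all of $\set{X}$. Taking the infimum over feasible $X$ gives $p^\star\geq d(\vec{\lambda})$, and then the supremum over $\vec{\lambda}\in\stdset{R}^N_+$ gives $p^\star\geq d^\star$. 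This step uses no convexity.

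For (ii), I would introduce the value set
\[
\set{A}:=\{(\vec{u},t)\in\stdset{R}^N\times\stdset{R}\mid \exists\,X\in\set{X},\ \vec{g}(X)\leq\vec{u},\ f(X)\leq t\}.
\]
Convexity of $f$ and $\vec{g}$ makes $\set{A}$ a convex subset of $\stdset{R}^{N+1}$, and by construction $\set{A}$ is closed under increasing either $\vec{u}$ or $t$. By definition of $p^\star$, the point $(\vec{0},s)$ lies outside $\set{A}$ for every $s<p^\star$, so $(\vec{0},p^\star)$ is not interior to $\set{A}$. The finite-dimensional supporting/separating hyperplane theorem then yields a nonzero $(\vec{\mu},\nu)\in\stdset{R}^{N+1}$ with $\vec{\mu}^\T\vec{u}+\nu t\geq\nu p^\star$ for all $(\vec{u},t)\in\set{A}$, and letting individual coordinates of $\vec{u}$ or $t$ tend to $+\infty$ (which keeps us in $\set{A}$) forces $\vec{\mu}\geq\vec{0}$ and $\nu\geq 0$.

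The crux — and the only place a hypothesis beyond convexity is needed — is excluding the vertical hyperplane $\nu=0$, which is exactly what Slater's condition buys. If $\nu=0$, then $\vec{\mu}\neq\vec{0}$, $\vec{\mu}\geq\vec{0}$, and $\vec{\mu}^\T\vec{u}\geq 0$ on $\set{A}$; evaluating at a strictly feasible point $\bar{X}$ with $\vec{g}(\bar{X})<\vec{0}$ (which also guarantees $p^\star<+\infty$) gives $\vec{\mu}^\T\vec{g}(\bar{X})<0$, a contradiction. Hence $\nu>0$, and after normalizing $\nu=1$ I would set $\vec{\lambda}^\star:=\vec{\mu}\in\stdset{R}^N_+$. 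Evaluating the separation inequality at the points $(\vec{g}(X),f(X))\in\set{A}$ gives $f(X)+(\vec{\lambda}^\star)^\T\vec{g}(X)\geq p^\star$ for every $X\in\set{X}$, i.e.\ $d(\vec{\lambda}^\star)\geq p^\star$; combined with the weak duality chain $d(\vec{\lambda}^\star)\leq d^\star\leq p^\star$ from (i), every inequality collapses to an equality, which simultaneously proves $d^\star=p^\star$ and that $\vec{\lambda}^\star$ attains the dual optimum. The remaining degenerate case $p^\star=-\infty$ needs no separation, since weak duality already forces $d^\star=-\infty=p^\star$ and any multiplier attains it.
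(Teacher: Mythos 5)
Your proof is correct, but it takes a genuinely different route from the paper: the paper does not prove the proposition at all, instead delegating both statements to the general convex duality theory in Z\u{a}linescu's monograph (weak duality via \cite[Theorem~2.6.1(iii)]{zalinescu2002convex} and strong duality with multiplier existence via \cite[Theorem~2.9.3(ii)]{zalinescu2002convex}, which is developed through the perturbation-function machinery for extended-real-valued proper convex functions). You instead give a self-contained elementary argument: the classical image-set construction $\set{A}\subset\stdset{R}^{N+1}$ together with finite-dimensional separation, taking care of the sign constraints via upward closure of $\set{A}$, excluding the vertical hyperplane via Slater's point, and handling the degenerate case $p^\star=-\infty$ separately (with $p^\star<+\infty$ guaranteed by Slater and the real-valuedness of $f$). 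Your key observation --- that convexity of $\set{A}$ requires only the algebraic (not topological) vector-space structure on $\set{X}$, so the entire separation argument lives in $\stdset{R}^{N+1}$ --- is exactly what makes the textbook finite-dimensional proof port verbatim to the infinite-dimensional setting $\set{X}=\set{H}^K$ the paper needs, and your write-up makes this transparent in a way the citation does not. What the citation buys in exchange is brevity and strictly greater generality (extended-real-valued proper functions and weaker constraint qualifications), none of which is needed here since the paper's $f$ and $\vec{g}$ are real-valued, making properness automatic. I find no gap in your argument; each step (convexity of $\set{A}$, non-interiority of $(\vec{0},p^\star)$, the sign of $(\vec{\mu},\nu)$, the contradiction under $\nu=0$, and the collapse of the weak-duality chain at $\vec{\lambda}^\star$) is sound.
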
 
\begin{proof}
For statement (i), see \cite[Theorem 2.6.1(iii)]{zalinescu2002convex}. For statement (ii), see \cite[Theorem 2.9.3(ii)]{zalinescu2002convex}.
\end{proof}
For $\set{X} = \stdset{R}^K$, the above proposition corresponds to the familiar Lagrangian duality principle for optimization problems in Euclidean spaces \cite{boyd2004convex}. In this study, we exploit the more general result in Proposition~\ref{prop:duality} to address optimization problems in the real vector space of complex-valued random vectors $\set{X}=\set{H}^K$. 

\section{System model}
\label{sec:model}
\subsection{Downlink achievable rates}
Consider the downlink of a cell-free wireless network composed of $L$ APs indexed by $\mathcal{L}:=\{1,\ldots,L\}$, each of them equipped with $N$ antennas, and $K$ single-antenna UEs indexed by $\mathcal{K}:=\{1,\ldots,K\}$. By assuming  a standard synchronous and frequency-flat channel model governed by an ergodic and stationary fading process, and simple transmission techniques based on 
linear precoding and on treating interference as noise, we focus on simultaneously achievable ergodic rates in the classical Shannon sense, approximated by the popular \emph{hardening} inner bound \cite{marzetta2016fundamentals}. In more detail, we define the downlink rates achieved by each UE $k\in \set{K}$ for a given precoding design as  
\begin{equation}\label{eq:hardening_bound}
R_k^{\mathrm{DL}}(\rvec{T}) \eqdef  \log\left(1+\mathrm{SINR}_k^{\mathrm{DL}}(\rvec{T})\right),
\end{equation}
\begin{equation}\mathrm{SINR}_k^{\mathrm{DL}}(\rvec{T}) \eqdef \dfrac{|\E[\rvec{h}_k^\herm\rvec{t}_k]|^2}{\V(\rvec{h}_k^\herm\rvec{t}_k) + \sum_{j\neq k}\E[|\rvec{h}_k^\herm\rvec{t}_j|^2]+1},
\end{equation}
where $\rvec{h}_k \in\set{H}^{NL}$ is a random channel vector modeling the fading state between UE $k$ and all APs, $\rvec{t}_k \in \set{H}^{NL}$ is a joint precoding vector applied by all APs to the coded and modulated data bearing signal for UE~$k$, and $\rvec{T} \eqdef [\rvec{t}_1,\ldots,\rvec{t}_K]\in \set{H}^{NL\times K}$ is the aggregate joint precoding matrix. We stress that precoders are defined and denoted as random quantities, since they may adapt to random fading realizations on the basis of the the available instantaneous CSI. This aspect is treated in detail in the next sections.  

\subsection{Per-AP power and information constraints}
\label{ssec:info}
In practical cell-free wireless networks, each AP must typically satisfy an individual power constraint. In addition, motivated by the need for realizing scalable cell-free architectures, each AP is also typically subject to an individual information constraint induced by limited data and instantaneous CSI sharing, which impair its cooperation capabilities. In this work, the above per-AP constraints are modelled as follows. Let $[\rvec{t}_{1,k}^\T,\ldots,\rvec{t}_{L,k}^\T]^\T \eqdef \rvec{t}_k$, where $\rvec{t}_{l,k} \in \set{H}^N$ denotes the portion of the precoder applied by AP~$l\in\set{L}$ to serve UE~$k\in\set{K}$. By assuming unitary power data bearing signals, we consider the average power constraints
\begin{equation}
(\forall l \in \set{L})~\sum_{k=1}^K\E[\|\rvec{t}_{l,k}\|^2]\leq P_l < \infty.
\end{equation}
We choose average power constraints instead of perhaps more common instantaneous power constraints mostly for tractability reasons. This is in line with the related results on uplink-downlink duality for fading channels under a sum power constraint \cite{massivemimobook}. However, we point out that average power constraints may be in fact quite appropriate for modern wideband systems, where power allocation over multiple subcarriers is standard practice.

For modeling impairments related to limited CSI sharing, we follow the recently proposed approach in \cite{miretti2021team} and let \begin{equation}\label{eq:CSI}
(\forall k \in \set{K})~\rvec{t}_k\in \set{T}_k \eqdef \set{H}_1^N \times \ldots \times \set{H}_L^N,
\end{equation}
where $\set{H}_l^N\subseteq \set{H}^N$ denotes the set of $N$-tuples of $\Sigma_l$-measurable functions $\Omega \to \stdset{C}$ satisfying $(\forall \rvec{x}\in \set{H}_l^N)$ $\E[\|\rvec{x}\|^2]<\infty$, and where $\Sigma_l \subseteq \Sigma$ is the sub-$\sigma$-algebra induced by the available CSI at AP $l\in \set{L}$, also called the \emph{information subfield} of AP $l$ \cite{yukselbook}. Informally, by letting $S_l$ be a given random variable modeling the available CSI at AP $l\in \set{L}$, this constraint enforces the precoders $\{\rvec{t}_{l,k}\}_{k\in \set{K}}$ of the $l$th AP to be functions of $S_l$ only. Limited CSI sharing typically leads to the case $S_l \neq S_j$ (and hence $\Sigma_l\neq \Sigma_j$) for some $j\neq l$.  
\begin{remark}
\label{rem:CSI}
The constraint in \eqref{eq:CSI} is fairly general. For example, it covers the case of local CSI \cite{ngo2017cellfree} (i.e., where each AP has information on only the channel between the UEs and itself), but also more advanced cooperation structures exploiting the peculiarities of efficient fronthauls such as in the so-called radio stripes concept where the APs are daisy-chained \cite{miretti2021team}. More precisely, by letting $\hat{\rvec{H}}_l$ denote the local estimate of the local channel matrix from AP $l\in \set{L}$ to all UEs, the following cases are modeled and studied in detail using \eqref{eq:CSI} in \cite{miretti2021team}:
\begin{itemize}
\item (local CSI) $(\forall l\in \set{L})~S_l = \hat{\rvec{H}}_l$;
\item (unidirectional CSI) $(\forall l\in \set{L})~S_l = (\hat{\rvec{H}}_1,\ldots,\hat{\rvec{H}}_l)$;
\item (centralized CSI) $(\forall l\in \set{L})~S_l = (\hat{\rvec{H}}_1,\ldots,\hat{\rvec{H}}_L)$. 
\end{itemize}
In addition, as mentioned in \cite{miretti2021team}, \eqref{eq:CSI} may also cover more general cases where, owing to fronthaul imperfections such as quantization or delay, the APs can only obtain degraded versions of the local CSI estimates from  other APs. To keep our results general, in most of this study we do not specify the CSI structure. However, Section~\ref{sec:applications} discusses concrete applications of \eqref{eq:CSI}. More precisely, it studies the cases of local CSI and centralized CSI in detail.
\end{remark}
From a mathematical point of view, $\set{H}_l^N$ is a subspace of the real vector space $\set{H}^N$, which in turn implies that the constraint in $\eqref{eq:CSI}$ models limited CSI sharing by constraining $\rvec{t}_k$ within a subspace $\set{T}_k$ of the real vector space $\set{H}^{LN}$ \cite{yukselbook}. Furthermore, following the well-known user-centric network clustering approach \cite{buzzi2020,emil2020scalable}, we assume that each UE $k \in \set{K}$ is only served by a subset $\set{L}_k\subseteq \set{L}$ of APs. As shown in \cite{miretti2021team2}, this additional practical impairment can be straightforwardly included in \eqref{eq:CSI} by replacing $\set{H}_l^N$ with the set $\{\vec{0}_N~ (\mathrm{a.s.})\}$ for each AP~$l \notin \set{L}_k$, i.e., not serving UE~$k$. Since $\{\vec{0}_N~ (\mathrm{a.s.})\}$ is a (trivial) subspace of $\set{H}^N$, this replacement does not alter the property of $\set{T}_k$ being a real vector space (in fact, a subspace of $\set{H}^{NL}$). This key property will allow us to address the problem of optimal joint precoding design under per-AP information constraints using Lagrangian duality for real vector spaces (Proposition~\ref{prop:duality}). 

\subsection{Dual uplink rates under arbitrary noise powers}\label{ssec:dual_UL_model}
The main optimization approach developed in this work has a natural interpretation in terms of a virtual dual uplink channel with arbitrary noise power. More specifically, similarly to the chosen downlink model, we consider virtual uplink ergodic rates given by the \textit{use-and-then-forget} (UatF) inner bound \cite{marzetta2016fundamentals}  
\begin{equation}\label{eq:UatF_bound}
R_k^{\mathrm{UL}}(\rvec{v}_k,\vec{p},\vec{\sigma}) \eqdef  \log\left(1+\mathrm{SINR}_k^{\mathrm{UL}}(\rvec{v}_k,\vec{p},\vec{\sigma}) \right),
\end{equation}
\begin{equation}
\begin{split}&\mathrm{SINR}_k^{\mathrm{UL}}(\rvec{v}_k,\vec{p},\vec{\sigma}) \eqdef \\ 
&\dfrac{p_k|\E[\rvec{h}_k^\herm\rvec{v}_k]|^2}{p_k\V(\rvec{h}_k^\herm\rvec{v}_k) + \sum_{j\neq k}p_j\E[|\rvec{h}_j^\herm\rvec{v}_k|^2]+\E[\|\rvec{v}_k\|_{\vec{\sigma}}^2]},
\end{split}
\end{equation}
where $\rvec{v}_k=[\rvec{v}_{1,k}^\T,\ldots,\rvec{v}_{L,k}^\T]^\T\in \set{H}^{NL}\backslash\{\vec{0}\}$ is a joint combiner, $\vec{p} \eqdef [p_1,\ldots,p_K]^\T \in \stdset{R}_{+}^K$ is a vector of transmit powers, and where we define $(\forall \rvec{v}_k \in \set{H}^{NL})(\forall \vec{\sigma} \in \stdset{R}_{++}^L)$
\begin{equation}\label{eq:norm}
\E[\|\rvec{v}_k\|_{\vec{\sigma}}^2] := \sum_{l=1}^L\sigma_l\E[\|\rvec{v}_{l,k}\|^2]
\end{equation}
for given $\vec{\sigma}\eqdef[\sigma_1,\ldots,\sigma_L]^\T \in \stdset{R}_{++}^L$. In the above expressions, $\vec{\sigma}$ can be interpreted as a vector collecting uplink noise powers for each AP. We remark that the term \textit{virtual} here refers to the fact that the above rates may not be achievable in the true uplink channel, since $(\vec{p},\vec{\sigma})$ may differ from the true uplink transmit and noise powers. The major difference between $R_k^{\mathrm{UL}}$ and $R_k^{\mathrm{DL}}$ is that the former depends only on the joint combiner $\rvec{v}_k$ for the signal of UE~$k\in\set{K}$, while the latter depends on the entire precoding matrix $\rvec{T}$. In fact, the uplink achievable rates are coupled only via the transmit and noise powers $\vec{p},\vec{\sigma}$. This known aspect makes optimization on the uplink channel generally easier than on the downlink channel. 

\section{Problem statement and Lagrangian duality}
\label{sec:lagrange}
To address the problem of optimal joint precoding design in cell-free networks, in this section we study a certain optimization problem subject to minimum SINR requirements and per-AP power and information constraints. In particular, given a tuple of power constraints $(P_1,\ldots,P_L)\in \stdset{R}_{++}^K$ and of SINR requirements $(\gamma_1,\ldots,\gamma_K)\in \stdset{R}_{++}^K$, we consider the following infinite dimensional optimization problem:
\begin{equation}\label{prob:DL_QoS}
\begin{aligned}
\underset{\rvec{T} \in \set{T}} {\text{minimize}} \quad &  \sum_{k=1}^K\E[\|\rvec{t}_k\|^2]\\
\text{subject to} \quad & (\forall k \in \set{K})~\mathrm{SINR}^{\mathrm{DL}}_k(\rvec{T}) \geq \gamma_k,\\
& (\forall l \in \set{L})~\sum_{k=1}^K\E[\|\rvec{t}_{l,k}\|^2]\leq P_l,
\end{aligned}
\end{equation}
where $\set{T}\subset \set{H}^{NL\times K}$ is a real vector space obtained by collecting all per-AP information constraints defined in Section~\ref{ssec:info}. We recall that these constraints accommodate both limited instantaneous CSI sharing and user-centric network clustering. 
In the following, to avoid technical digressions, we assume that strictly feasible joint precoders exist, i.e., 
\begin{equation*}
\left\{ \rvec{T}\in\set{T} \middle| \begin{array}{l}
(\forall k \in \set{K}) ~\mathrm{SINR}^{\mathrm{DL}}_k(\rvec{T}) > \gamma_k \\
(\forall l \in \set{L})~\sum_{k=1}^K\E[\|\rvec{t}_{l,k}\|^2]< P_l
\end{array} \right\} \neq \emptyset.
\end{equation*} 
Furthermore, due to both mathematical convenience and practical reasons, in Problem~\eqref{prob:DL_QoS} we focus on the subset of feasible joint precoders minimizing the total power consumption.

\subsection{Lagrangian dual problems}
Inspired by the related results in \cite{yu2007transmitter} based on Lagrangian duality for finite dimensional optimization problems, in this section we apply Lagrangian duality for infinite dimensional optimization problems to study  Problem~\eqref{prob:DL_QoS}. For convenience, we adopt the compact notation 
\begin{equation}
\mathrm{SINR}_k^{\mathrm{DL}}(\rvec{T}) = \dfrac{|b_k(\rvec{T})|^2}{|c_k(\rvec{T})|^2-|b_k(\rvec{T})|^2},
\end{equation}
where $b_k(\rvec{T}) \eqdef \E[\rvec{h}_k^\herm\rvec{t}_k]$ is the useful signal term, and $|c_k(\rvec{T})|^2 \eqdef \sum_{j=1}^K\E[|\rvec{h}_k^\herm\rvec{t}_j|^2]+1$ is the interference plus noise power term. Furthermore, we rearrange the SINR constraints using simple algebraic manipulations, leading to the following simple property $(\forall \rvec{T}\in\set{T})(\forall k\in\set{K})$ 
\begin{equation}\label{eq:SINR_rearranged}
\mathrm{SINR}_k^{\mathrm{DL}}(\rvec{T})\geq \gamma_k \iff |c_k(\rvec{T})|-\nu_k|b_k(\rvec{T})|\leq 0,
\end{equation}
where $\nu_k\eqdef \sqrt{1+1/\gamma_k}$. 
A Lagrangian dual problem to \eqref{prob:DL_QoS} is then given by
\begin{equation}\label{prob:DL_QoS_dual}
\underset{(\vec{\lambda},\vec{\mu})\in \stdset{R}_+^L\times \stdset{R}_+^K} {\text{maximize}}~d(\vec{\lambda},\vec{\mu}),
\end{equation}
where (recalling \eqref{eq:norm}) we define the dual function $d(\vec{\lambda},\vec{\mu}) \eqdef$
\begin{equation*}
\inf_{\rvec{T} \in \set{T}}  \sum_{k=1}^K \E[\|\rvec{t}_k\|_{\vec{1}+\vec{\lambda}}^2] -\sum_{l=1}^L \lambda_l P_l + \sum_{k=1}^K\mu_k(|c_k(\rvec{T})|-\nu_k|b_k(\rvec{T})|).
\end{equation*}
Since the primal problem in \eqref{prob:DL_QoS} is nonconvex, by Proposition~\ref{prop:duality} we can only guarantee (a-priori) weak duality. Furthermore, guaranteeing existence of a solution is not immediate. However, the following important result holds.
\begin{proposition}\label{prop:strong_duality}
Problem~\eqref{prob:DL_QoS} admits a solution $\rvec{T}^\star\in \set{T}$. Furthermore, denote by $p^\star$ and $d^\star$ the optimum of the primal problem \eqref{prob:DL_QoS} and of the dual problem \eqref{prob:DL_QoS_dual}, respectively. Strong duality holds, i.e., $d^\star = p^\star$, and there exist Lagrangian multipliers $(\vec{\lambda}^\star,\vec{\mu}^\star)\in \stdset{R}_+^L\times \stdset{R}_+^K$ solving Problem~\eqref{prob:DL_QoS_dual}.
\end{proposition}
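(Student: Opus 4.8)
The plan is to sidestep the nonconvexity of Problem~\eqref{prob:DL_QoS} by exploiting the phase invariance of the hardening SINR, reducing the analysis to a convex problem to which Proposition~\ref{prop:duality} applies, and then to establish primal attainment separately by a direct-method (weak-compactness) argument in the Hilbert space $\set{H}^{NL\times K}$.

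First I would observe that, for any unit-modulus scalars $u_k\in\stdset{C}$ with $|u_k|=1$, replacing each $\rvec{t}_k$ by $u_k\rvec{t}_k$ leaves the objective, the power constraints, and every interference term $\E[|\rvec{h}_m^\herm\rvec{t}_k|^2]$ unchanged (hence every $|c_m(\rvec{T})|$ unchanged), while rotating $b_k(\rvec{T})$ to $u_k b_k(\rvec{T})$; moreover $u_k\rvec{t}_k$ stays in $\set{T}_k$, since multiplication by a constant preserves $\Sigma_l$-measurability. Consequently, in the rearranged constraint~\eqref{eq:SINR_rearranged} one may replace $|b_k(\rvec{T})|$ by $\mathrm{Re}(b_k(\rvec{T}))$ without altering the optimal value: any feasible $\rvec{T}$ can be phase-aligned so that $b_k(\rvec{T})\in\stdset{R}_+$ and thus $\mathrm{Re}(b_k(\rvec{T}))=|b_k(\rvec{T})|$, whereas $\mathrm{Re}(b_k(\rvec{T}))\le|b_k(\rvec{T})|$ shows the reformulated feasible set is contained in the original one. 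The reformulated constraint $|c_k(\rvec{T})|-\nu_k\,\mathrm{Re}(b_k(\rvec{T}))\le 0$ is now convex, being the difference of the convex functional $|c_k(\cdot)|$ (a square root of a sum of squared $\set{H}$-seminorms plus one) and the bounded linear functional $\nu_k\,\mathrm{Re}(b_k(\cdot))$. The same phase-alignment argument shows that the infimum defining $d(\vec{\lambda},\vec{\mu})$ is unchanged when $|b_k|$ is replaced by $\mathrm{Re}(b_k)$, so the dual function of the convex reformulation coincides with the one stated before~\eqref{prob:DL_QoS_dual}.

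Second, I would apply Proposition~\ref{prop:duality}(ii) to this convex reformulation over the real vector space $\set{X}=\set{T}$. The objective $\rvec{T}\mapsto\sum_k\E[\|\rvec{t}_k\|^2]$, the convexified SINR functionals, and the power functionals $\sum_k\E[\|\rvec{t}_{l,k}\|^2]-P_l$ are proper and convex; boundedness of the linear functionals $b_k$ and finiteness of $|c_k|$ on precoders of bounded support ensure properness. Slater's condition is exactly the strictly-feasible-precoders assumption made after~\eqref{prob:DL_QoS}, phase-aligned so that the SINR inequalities hold strictly in the convexified form and the power inequalities hold strictly. Proposition~\ref{prop:duality}(ii) then yields $d^\star=p^\star$ together with maximizing multipliers $(\vec{\lambda}^\star,\vec{\mu}^\star)\in\stdset{R}_+^L\times\stdset{R}_+^K$; by the previous paragraph these conclusions transfer verbatim to the dual~\eqref{prob:DL_QoS_dual} and to the original primal optimum $p^\star$.

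Third, I would prove attainment. Endowed with the inner product induced by the second moment, $\set{H}^{NL\times K}$ is a real Hilbert space. The feasible set of the convex reformulation is bounded, since the power constraints bound $\|\rvec{T}\|^2=\sum_{l,k}\E[\|\rvec{t}_{l,k}\|^2]\le\sum_l P_l$; it is convex; and it is closed, because $\set{T}$ is a closed subspace (each $\set{H}_l^N$ is the range of the conditional-expectation projection onto $\Sigma_l$, hence closed) and the convexified SINR and power sublevel sets are closed by lower semicontinuity of $|c_k(\cdot)|$ and continuity of $b_k(\cdot)$. A bounded, convex, closed set in a Hilbert space is weakly sequentially compact, and the objective, being convex and norm-continuous, is weakly lower semicontinuous; the direct method then delivers a minimizer of the convex reformulation, which after phase alignment is feasible for and optimal in Problem~\eqref{prob:DL_QoS}, giving the desired $\rvec{T}^\star\in\set{T}$.

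The step I expect to be the main obstacle is the infinite-dimensional attainment argument, where the functional setting departs from the finite-dimensional treatment in \cite{yu2007transmitter}. Specifically, one must verify weak closedness of the feasible set and weak lower semicontinuity of the interference functionals $\rvec{t}_j\mapsto\E[|\rvec{h}_k^\herm\rvec{t}_j|^2]$, which calls for a Fatou-type argument along a weakly convergent minimizing sequence, together with a careful treatment of the measurability-constrained subspace $\set{T}$ to ensure that the convex functionals are well defined and lower semicontinuous there.
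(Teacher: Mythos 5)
Your proposal is correct and follows essentially the same route as the paper's Appendix~\ref{app:convex_reformulation}: the phase-rotation argument (the paper's Lemma~\ref{lem:phase_invariance}), the convex reformulation replacing $|b_k(\rvec{T})|$ by $\Re(b_k(\rvec{T}))$ with equality of both the primal optima (Lemma~\ref{lem:convex_optimum}) and the dual functions (Lemma~\ref{lem:dual_equivalence}), and Slater's condition plus Proposition~\ref{prop:duality}(ii) for strong duality and existence of multipliers. The only substantive divergence is the attainment step. You use the direct method: the power constraints bound the feasible set, closed convex bounded sets in Hilbert space are weakly sequentially compact, and the weakly lower semicontinuous objective attains its minimum. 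The paper instead invokes the Hilbert projection theorem (Lemma~\ref{lem:strong_duality}): the objective is the norm induced by an inner product on $\set{T}$, so the minimizer is the projection of the zero vector onto the nonempty closed convex feasible set. The projection argument buys two things your version does not: it needs no boundedness (so it also applies in Appendix~\ref{app:primaldual} and Lemma~\ref{lem:inf_hilbert}, where only SINR constraints are present and the feasible set is unbounded), and it yields \emph{uniqueness} of the reformulated solution, which the paper exploits later to prove Proposition~\ref{prop:primal-dual}. Conversely, your closing caveat is a genuine improvement in rigor: the interference functional $\rvec{t}_j\mapsto\E[|\rvec{h}_k^\herm\rvec{t}_j|^2]$ can take the value $+\infty$ (fourth moments need not exist) and is in general only lower semicontinuous via a Fatou argument, not continuous as the paper asserts when claiming closedness of the constraint set; since the constraint functions are convex, norm-lower-semicontinuity suffices for both the paper's closedness claim and your weak closedness, so both proofs survive, but your flagging of this point is apt. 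One trivial slip: the minimizer of the reformulation needs no phase alignment to be feasible for Problem~\eqref{prob:DL_QoS}, since $\Re(b_k)\leq|b_k|$ already places the reformulated feasible set inside the original one.
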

\begin{proof}
The proof follows a similar idea as in \cite{yu2007transmitter,wiesel2006linear}, and it is based on establishing connections between Problem~\eqref{prob:DL_QoS} and a convex reformulation obtained by replacing $|b_k(\rvec{T})|$ in \eqref{eq:SINR_rearranged} with $\Re(b_k(\rvec{T}))$. For additional details, see Appendix~\ref{app:convex_reformulation}. 
\end{proof}

We conclude this section by stating a useful consequence of Proposition~\ref{prop:strong_duality} that will be instrumental for proving our main results based on uplink-downlink duality. In particular, we provide an alternative version of Proposition~\ref{prop:strong_duality} based on a \emph{partial} dual problem, obtained by keeping the SINR constraints implicit.
\begin{proposition}\label{lem:partial_dual}
Given the subset $\set{T}_{\vec{\gamma}} \eqdef \{\rvec{T}\in\set{T} ~|~ (\forall k \in \set{K})~\mathrm{SINR}_k^{\mathrm{DL}}(\rvec{T}) \geq \gamma_k\}$ of precoders satisfying the SINR constraints in Problem~\eqref{prob:DL_QoS}, define the \emph{partial} dual problem 
\begin{equation}\label{prob:partial_dual}
\underset{\vec{\lambda}\in \stdset{R}_+^L} {\text{maximize}}~\tilde{d}(\vec{\lambda}) \eqdef \inf_{\rvec{T} \in \set{T}_{\vec{\gamma}}} \sum_{k=1}^K \E[\|\rvec{t}_k\|_{\vec{1}+\vec{\lambda}}^2] -\sum_{l=1}^L \lambda_l P_l.
\end{equation}
Strong duality holds, i.e., Problem~\eqref{prob:DL_QoS} and Problem~\eqref{prob:partial_dual} have the same optimum $p^\star$. Furthermore, there exist Lagrangian multipliers $\vec{\lambda}^\star$ solving Problem~\eqref{prob:partial_dual}.
\end{proposition}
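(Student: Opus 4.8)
The plan is to deduce the partial strong duality from the \emph{full} strong duality already established in Proposition~\ref{prop:strong_duality}, by recognizing the partial dual function $\tilde{d}(\vec{\lambda})$ as the optimum of an auxiliary SINR-constrained problem to which the very same duality argument applies. Throughout I would use the decomposition, valid via \eqref{eq:norm} with $\vec{\sigma}=\vec{1}+\vec{\lambda}$, that $\sum_{k=1}^K\E[\|\rvec{t}_k\|_{\vec{1}+\vec{\lambda}}^2] = \sum_{k=1}^K\E[\|\rvec{t}_k\|^2] + \sum_{l=1}^L\lambda_l\sum_{k=1}^K\E[\|\rvec{t}_{l,k}\|^2]$.

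First I would record the elementary \emph{partial} weak duality. For every $\vec{\lambda}\in\stdset{R}_+^L$ and every $\rvec{T}$ feasible for Problem~\eqref{prob:DL_QoS}, the power constraints $\sum_k\E[\|\rvec{t}_{l,k}\|^2]\leq P_l$ together with $\lambda_l\geq 0$ give $\sum_k\E[\|\rvec{t}_k\|_{\vec{1}+\vec{\lambda}}^2]-\sum_l\lambda_l P_l\leq \sum_k\E[\|\rvec{t}_k\|^2]$. Taking the infimum over $\set{T}_{\vec{\gamma}}$, which contains the feasible set of Problem~\eqref{prob:DL_QoS}, yields $\tilde{d}(\vec{\lambda})\leq p^\star$ for all $\vec{\lambda}\geq\vec{0}$.

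The core step is the reverse direction. For fixed $\vec{\lambda}\geq\vec{0}$, I would view $\tilde{d}(\vec{\lambda})+\sum_l\lambda_l P_l = \inf_{\rvec{T}\in\set{T}_{\vec{\gamma}}}\sum_k\E[\|\rvec{t}_k\|_{\vec{1}+\vec{\lambda}}^2]$ as the optimum of the auxiliary problem of minimizing the reweighted power $\sum_k\E[\|\rvec{t}_k\|_{\vec{1}+\vec{\lambda}}^2]$ over $\rvec{T}\in\set{T}$ subject \emph{only} to the SINR constraints, rewritten as in \eqref{eq:SINR_rearranged}. This problem has exactly the structure of Problem~\eqref{prob:DL_QoS} after dropping the per-AP power constraints and replacing the Euclidean norm by the equivalent weighted norm $\|\cdot\|_{\vec{1}+\vec{\lambda}}$; since all weights $1+\lambda_l$ are strictly positive, the convex-reformulation argument underlying Proposition~\ref{prop:strong_duality}, namely replacing $|b_k(\rvec{T})|$ by $\Re(b_k(\rvec{T}))$ and invoking Proposition~\ref{prop:duality}(ii) (whose Slater condition is guaranteed by the assumed existence of strictly SINR-feasible precoders), applies verbatim. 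This gives the inner strong duality $\tilde{d}(\vec{\lambda}) = \sup_{\vec{\mu}\in\stdset{R}_+^K}d(\vec{\lambda},\vec{\mu})$, because the Lagrangian dual function of the auxiliary problem, after subtracting $\sum_l\lambda_l P_l$, is precisely $d(\vec{\lambda},\vec{\mu})$.

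Finally, I would take the supremum over $\vec{\lambda}$ and exchange the two suprema, obtaining $\sup_{\vec{\lambda}\geq\vec{0}}\tilde{d}(\vec{\lambda}) = \sup_{\vec{\lambda}\geq\vec{0}}\sup_{\vec{\mu}\geq\vec{0}}d(\vec{\lambda},\vec{\mu}) = d^\star = p^\star$, where the last equality is Proposition~\ref{prop:strong_duality}. For existence, I would take the optimal multipliers $(\vec{\lambda}^\star,\vec{\mu}^\star)$ furnished by Proposition~\ref{prop:strong_duality}: the inner identity gives $\tilde{d}(\vec{\lambda}^\star)\geq d(\vec{\lambda}^\star,\vec{\mu}^\star) = d^\star = p^\star$, which together with the partial weak duality $\tilde{d}(\vec{\lambda}^\star)\leq p^\star$ forces $\tilde{d}(\vec{\lambda}^\star)=p^\star$, so $\vec{\lambda}^\star$ solves Problem~\eqref{prob:partial_dual}. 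I expect the main obstacle to be the core step: the auxiliary problem remains nonconvex because of the SINR constraints, so Proposition~\ref{prop:duality}(ii) cannot be applied to it directly; one must instead go through the convex reformulation and certify that it attains the same value as the nonconvex problem, checking that this argument carries over unchanged under the positive reweighting $\vec{1}+\vec{\lambda}$ and the removal of the power constraints.
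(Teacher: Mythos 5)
Your proof is correct, and its skeleton coincides with the paper's: both establish the partial weak duality $\tilde{d}(\vec{\lambda})\leq p^\star$ (the paper phrases this as weak duality applied to the reformulation \eqref{prob:indicator} with the SINR constraints absorbed into an indicator term), both close the sandwich $p^\star=\sup_{\vec{\lambda}}\sup_{\vec{\mu}}d(\vec{\lambda},\vec{\mu})\leq\sup_{\vec{\lambda}}\tilde{d}(\vec{\lambda})\leq p^\star$ via Proposition~\ref{prop:strong_duality}, and both extract $\vec{\lambda}^\star$ from the optimal multipliers of \eqref{prob:DL_QoS_dual}. The one genuine difference is your core step: you prove the \emph{inner strong duality} $\tilde{d}(\vec{\lambda})=\sup_{\vec{\mu}\in\stdset{R}_+^K}d(\vec{\lambda},\vec{\mu})$ by rerunning the Appendix machinery (phase-invariance, convex reformulation with $\Re(b_k(\rvec{T}))$, Slater, Proposition~\ref{prop:duality}(ii)) on the weighted auxiliary problem, whereas the paper only uses the one-sided inequality $\sup_{\vec{\mu}\in\stdset{R}_+^K}d(\vec{\lambda},\vec{\mu})\leq\tilde{d}(\vec{\lambda})$, which is immediate from weak duality (Proposition~\ref{prop:duality}(i)) applied to $\inf_{\rvec{T}\in\set{T}_{\vec{\gamma}}}\sum_{k=1}^K\E[\|\rvec{t}_k\|_{\vec{1}+\vec{\lambda}}^2]$ after rewriting $\set{T}_{\vec{\gamma}}$ via \eqref{eq:SINR_rearranged}. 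Note that this ``$\geq$'' direction is all that your own argument actually consumes: in the final exchange of suprema you only need $\tilde{d}(\vec{\lambda})\geq\sup_{\vec{\mu}}d(\vec{\lambda},\vec{\mu})$, and in the existence step you only need $\tilde{d}(\vec{\lambda}^\star)\geq d(\vec{\lambda}^\star,\vec{\mu}^\star)$. So the obstacle you flag at the end --- certifying that the convex-reformulation argument survives the reweighting $\vec{1}+\vec{\lambda}$ and the removal of the power constraints --- is real but entirely avoidable; the equality you labor over is a strictly stronger statement than the proposition requires (though it does carry over: the paper itself invokes the weighted-norm analogue of Lemma~\ref{lem:convex_optimum} in Appendix~B). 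The trade-off is that your route yields a pointwise identity between the partial and full dual functions, of possible independent interest, at the cost of extra verification the paper's two-line weak-duality bound renders unnecessary.
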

\begin{proof}
Consider the alternative optimization problem 
\begin{equation}\label{prob:indicator}
\begin{aligned}
\underset{\rvec{T} \in \set{T}} {\text{minimize}} \quad &  \sum_{k=1}^K\E[\|\rvec{t}_{k}\|^2] + \gamma(\rvec{T})\\
\text{subject to} \quad & (\forall l \in \set{L})~\sum_{k=1}^K\E[\|\rvec{t}_{l,k}\|^2]\leq P_l,
\end{aligned}
\end{equation}
where $\gamma(\rvec{T}) = 0$ if $\rvec{T}$ belongs to the set $\set{T}_{\vec{\gamma}}$, and $\gamma(\rvec{T}) = +\infty$ otherwise. Problem~\eqref{prob:indicator} is equivalent to Problem~\eqref{prob:DL_QoS}, in the sense that it has the same optimum $p^\star$ and set of  optimal solutions. Its Lagrangian dual problem can be written as \eqref{prob:partial_dual}. By applying weak duality (see Proposition~\ref{prop:duality}) to the term $\inf_{\rvec{T} \in \set{T}_{\vec{\gamma}}} \sum_{k=1}^K \E[\|\rvec{t}_k\|_{\vec{1}+\vec{\lambda}}^2]$ for any fixed $\vec{\lambda}$, and by rewriting $\set{T}_{\vec{\gamma}}$ according to \eqref{eq:SINR_rearranged}, we obtain
$\sup_{\vec{\mu}\in\stdset{R}_+^K}d(\vec{\lambda},\vec{\mu})\leq \tilde{d}(\vec{\lambda})$. Taking the supremum over $\vec{\lambda}$ on both sides gives
\begin{equation*}
p^\star = \sup_{\vec{\lambda}\in\stdset{R}_+^L}\sup_{\vec{\mu}\in\stdset{R}_+^K}d(\vec{\lambda},\vec{\mu}) \leq \sup_{\vec{\lambda}\in\stdset{R}_+^L} \tilde{d}(\vec{\lambda}) \leq p^\star,
\end{equation*}
where the first equality follows from Proposition~\ref{prop:strong_duality} and the last inequality follows from weak duality applied to Problem~\eqref{prob:indicator}. The second part of the statement follows from the existence of a solution $(\vec{\lambda}^\star,\vec{\mu}^\star)$ to Problem~\eqref{prob:DL_QoS_dual}, and $p^\star = \sup_{\vec{\mu}\in\stdset{R}_+^K}d(\vec{\lambda}^\star,\vec{\mu}) \leq  \tilde{d}(\vec{\lambda}^\star) \leq p^\star$.
\end{proof}
In the reminder of this study, we focus on the partial dual problem~\eqref{prob:partial_dual}. The reason is that the dual problem~\eqref{prob:DL_QoS_dual}, or other  variations that keep the SINR constraints augmented, do not seem tractable, mostly because of the difficulties in solving the (infinite dimensional) inner minimization problem over the space of precoders subject to nontrivial information constraints. These problems are known to be very challenging, even if convex \cite[Chapter~2]{yukselbook}. In contrast, as we will see later in the manuscript, the use of the partial dual problem leads to a tractable inner minimization problem. This is one of the major differences with respect to the finite dimensional case in \cite{yu2007transmitter}.

\subsection{Primal-dual solution methods}
A key aspect of Lagrangian optimization is the possibility of recovering a primal solution from a dual solution. However, we emphasize that this is not always possible even if strong duality holds. Nevertheless, the following proposition ensures that a primal solution $\rvec{T}^\star$ to Problem~\eqref{prob:DL_QoS} can be indeed recovered from a solution to the partial dual problem~\eqref{prob:partial_dual}.
\begin{proposition}\label{prop:primal-dual}
Let $\vec{\lambda}^\star$ be a solution to Problem~\eqref{prob:partial_dual}. Then, a solution to Problem~\eqref{prob:DL_QoS} is given by any solution to
\begin{equation}\label{prob:DL_QoS_augmented}
\underset{\rvec{T}\in\set{T}_{\vec{\gamma}}} {\text{minimize}}~\sum_{k=1}^K\E[\|\rvec{t}_k\|_{\vec{1}+\vec{\lambda}^\star}^2].
\end{equation}
\end{proposition}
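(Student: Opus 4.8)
The plan is to certify that \emph{every} minimizer of \eqref{prob:DL_QoS_augmented} attains the same primal objective value and the same per-AP power consumption as a genuine optimizer of \eqref{prob:DL_QoS}, and that these common values vouch for both feasibility and optimality. Throughout I would write $\Phi(\rvec{T}) \eqdef \sum_{k=1}^K\E[\|\rvec{t}_k\|_{\vec{1}+\vec{\lambda}^\star}^2]$ for the objective of \eqref{prob:DL_QoS_augmented} and exploit the decomposition $\Phi(\rvec{T}) = \sum_{k=1}^K\E[\|\rvec{t}_k\|^2] + \sum_{l=1}^L\lambda_l^\star\big(\sum_{k=1}^K\E[\|\rvec{t}_{l,k}\|^2]\big)$, which ties $\Phi$ to the primal objective and to the per-AP powers. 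Since every minimizer of \eqref{prob:DL_QoS_augmented} lies in $\set{T}_{\vec{\gamma}}$, the SINR requirements of \eqref{prob:DL_QoS} hold automatically, so only the per-AP power constraints and optimality remain to be verified.

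First I would fix a primal optimizer $\rvec{T}^\star$ (which exists by Proposition~\ref{prop:strong_duality}) and use Proposition~\ref{lem:partial_dual} to identify the optimal value of \eqref{prob:DL_QoS_augmented} as $\min_{\rvec{T}\in\set{T}_{\vec{\gamma}}}\Phi(\rvec{T}) = \tilde{d}(\vec{\lambda}^\star) + \sum_{l=1}^L\lambda_l^\star P_l = p^\star + \sum_{l=1}^L\lambda_l^\star P_l$. Evaluating the decomposition at $\rvec{T}^\star$ gives $\Phi(\rvec{T}^\star) - \min_{\set{T}_{\vec{\gamma}}}\Phi = \sum_{l=1}^L\lambda_l^\star\big(\sum_{k=1}^K\E[\|\rvec{t}_{l,k}^\star\|^2]-P_l\big)$, a sum of nonpositive terms (the multipliers are nonnegative and $\rvec{T}^\star$ is power-feasible). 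As $\Phi(\rvec{T}^\star)\geq\min_{\set{T}_{\vec{\gamma}}}\Phi$, equality is forced: $\rvec{T}^\star$ is itself a minimizer of \eqref{prob:DL_QoS_augmented} and complementary slackness $\sum_{l=1}^L\lambda_l^\star\big(\sum_{k=1}^K\E[\|\rvec{t}_{l,k}^\star\|^2]-P_l\big)=0$ holds. Thus the minimizer set of \eqref{prob:DL_QoS_augmented} is nonempty and contains a power-feasible, primal-optimal point.

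The crux, and the step I expect to be the main obstacle, is to show that \emph{all} minimizers of \eqref{prob:DL_QoS_augmented} carry identical second moments $\E[\|\rvec{t}_{l,k}\|^2]$. This does not follow from the stationarity of $\vec{\lambda}^\star$ alone: a single minimizer furnishes only one element of the superdifferential of the concave $\tilde{d}$, whereas the optimality of $\vec{\lambda}^\star$ constrains only the superdifferential as a whole, so an isolated minimizer could a priori violate a power constraint. To close this gap I would reuse the convex reformulation of Appendix~\ref{app:convex_reformulation}, replacing $\set{T}_{\vec{\gamma}}$ by the convex set $\tilde{\set{T}}_{\vec{\gamma}} \eqdef \{\rvec{T}\in\set{T} \mid (\forall k\in\set{K})~|c_k(\rvec{T})|-\nu_k\Re(b_k(\rvec{T}))\leq 0\}$. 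On $\tilde{\set{T}}_{\vec{\gamma}}$ the functional $\Phi$ is a strictly (indeed uniformly) convex quadratic form, its weights satisfying $1+\lambda_l^\star\geq 1>0$, so it admits at most one minimizer there. Since $\Re(b_k)\leq|b_k|$ gives $\tilde{\set{T}}_{\vec{\gamma}}\subseteq\set{T}_{\vec{\gamma}}$, while any $\rvec{T}\in\set{T}_{\vec{\gamma}}$ can be mapped into $\tilde{\set{T}}_{\vec{\gamma}}$ by the deterministic per-UE rescaling $\rvec{t}_k\mapsto u_k\rvec{t}_k$ with $|u_k|=1$ chosen so that $u_k b_k(\rvec{T})=|b_k(\rvec{T})|$, I would argue as follows. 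Such a rescaling keeps $\rvec{T}\in\set{T}$ (a deterministic unit-modulus scalar preserves $\Sigma_l$-measurability and the zero entries of user-centric clustering), leaves every $|c_k|$ and every $\E[\|\rvec{t}_{l,k}\|^2]$ unchanged, and turns $|b_k|$ into $\Re(b_k)$; hence it preserves $\Phi$ and maps minimizers of $\Phi$ over $\set{T}_{\vec{\gamma}}$ to points of $\tilde{\set{T}}_{\vec{\gamma}}$ attaining $\min_{\set{T}_{\vec{\gamma}}}\Phi = \min_{\tilde{\set{T}}_{\vec{\gamma}}}\Phi$. By strict convexity all such rescaled points coincide with the unique minimizer $\hat{\rvec{T}}$ over $\tilde{\set{T}}_{\vec{\gamma}}$, and since the rescaling preserves all second moments, every minimizer of \eqref{prob:DL_QoS_augmented} — $\rvec{T}^\star$ included — shares the same $\E[\|\rvec{t}_{l,k}\|^2]$ as $\hat{\rvec{T}}$.

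Finally I would conclude by transfer. Let $\rvec{T}^\dagger$ be any solution of \eqref{prob:DL_QoS_augmented}. By the previous step $\sum_{k=1}^K\E[\|\rvec{t}_{l,k}^\dagger\|^2]=\sum_{k=1}^K\E[\|\rvec{t}_{l,k}^\star\|^2]\leq P_l$ for every $l\in\set{L}$, so $\rvec{T}^\dagger$ satisfies the per-AP power constraints, and its primal cost $\sum_{k=1}^K\E[\|\rvec{t}_k^\dagger\|^2]=\sum_{l,k}\E[\|\rvec{t}_{l,k}^\dagger\|^2]$ equals $\sum_{l,k}\E[\|\rvec{t}_{l,k}^\star\|^2]=p^\star$. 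Together with $\rvec{T}^\dagger\in\set{T}_{\vec{\gamma}}$, this exhibits $\rvec{T}^\dagger$ as a feasible point of \eqref{prob:DL_QoS} attaining the optimal value $p^\star$, hence as a solution of \eqref{prob:DL_QoS}.
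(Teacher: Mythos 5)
Your proposal is correct and follows essentially the same route as the paper's Appendix~\ref{app:primaldual}: the strong-duality/complementary-slackness chain showing $\rvec{T}^\star$ minimizes the augmented objective, followed by the phase-rotation transfer to the convex set $\set{T}'_{\vec{\gamma}}$ (the paper's Lemma~\ref{lem:phase_invariance} and Lemma~\ref{lem:inf_relation}) and uniqueness of the minimizer there to pin down the per-AP powers of \emph{all} minimizers. The only cosmetic difference is that you derive uniqueness from strict convexity of the squared weighted norm while the paper invokes the Hilbert projection theorem (Lemma~\ref{lem:inf_hilbert}), and your explicit observation that every minimizer inherits identical per-$(l,k)$ second moments states slightly more carefully what the paper's closing argument asserts.
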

\begin{proof}
The proof is given in Appendix~\ref{app:primaldual}, and it is based on the same convex reformulation of the SINR constraints adopted in the proof of Proposition~\ref{prop:strong_duality}.
\end{proof}
Starting from Proposition~\ref{prop:primal-dual}, and in particular by studying and solving Problem~\eqref{prob:DL_QoS_augmented}, in the next section we derive structural properties for optimal joint precoding. However, before moving to the next section, we first complete the discussion on recovering a primal solution from a dual solution by illustrating a simple algorithm for solving Problem~\eqref{prob:partial_dual}, which is a concave maximization problem. In particular, we consider a standard primal-dual iterative algorithm based on the projected subgradient method \cite{polyak1969minimization,nesterov2003introductory}.
\begin{proposition}\label{lem:proj_grad}
Choose $\vec{\lambda}^{(1)} \in \stdset{R}_{+}^L$ and a sequence $\{\alpha^{(i)}\}_{i\in \stdset{N}}$ such that
\begin{equation*}
(\forall i \in \stdset{N})~\alpha^{(i)}\in\stdset{R}_{++}, \quad \lim_{i\to \infty} \alpha^{(i)} = 0, \quad \sum_{i\in \stdset{N}} \alpha^{(i)} = \infty.
\end{equation*}
Define the sequence $(\vec{\lambda}^{(i)})_{i\in\stdset{N}}$ generated via 
\begin{equation*}
(\forall i \in \stdset{N})~\vec{\lambda}^{(i+1)} = \max\left\{\vec{\lambda}^{(i)} + \frac{\alpha^{(i)}}{\|\vec{g}(\vec{\lambda}^{(i)})\|}\vec{g}(\vec{\lambda}^{(i)}) ,\vec{0}\right\},
\end{equation*}
where the $l$th entry of $\vec{g}(\vec{\lambda})$ is given by $(\forall \vec{\lambda}\in \stdset{R}_+^L)(\forall l\in \set{L})$,
\begin{equation*}\label{eq:supergradient}
g_l(\vec{\lambda})\eqdef\sum_{k=1}^K\E[\|\rvec{t}_{l,k}\|^2]- P_l, \quad 
\rvec{T} \in \arg\min_{\set{T}_{\vec{\gamma}}}\sum_{k=1}^K\E[\|\rvec{t}_k\|_{\vec{1}+\vec{\lambda}}^2].
\end{equation*}
Then, the subsequence of $(\vec{\lambda}^{(i)})_{i\in\stdset{N}}$ corresponding to the best objective after $n$ iterations $\max_{i=1,\ldots,n}\tilde{d}(\vec{\lambda}^{(i)})$  converges to a solution $ \vec{\lambda}^\star$ to Problem~\eqref{prob:partial_dual}.
\end{proposition}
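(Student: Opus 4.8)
The plan is to recognize the update as a projected, normalized supergradient ascent for the concave maximization in \eqref{prob:partial_dual} and to invoke the standard divergent-step-size convergence argument \cite{polyak1969minimization,nesterov2003introductory}, after supplying the two structural facts that make it applicable here. First I would verify that $\tilde{d}$ is concave: for each fixed $\rvec{T}\in\set{T}_{\vec{\gamma}}$ the map $\vec{\lambda}\mapsto\sum_{k=1}^K\E[\|\rvec{t}_k\|_{\vec{1}+\vec{\lambda}}^2]-\sum_{l=1}^L\lambda_l P_l$ is affine (indeed $\E[\|\rvec{t}_k\|_{\vec{1}+\vec{\lambda}}^2]=\sum_{l=1}^L(1+\lambda_l)\E[\|\rvec{t}_{l,k}\|^2]$), so $\tilde{d}$ is a pointwise infimum of affine functions, hence concave and upper semicontinuous. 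Moreover, plugging any strictly feasible $\rvec{T}_0$ into the infimum gives $\tilde{d}(\vec{\lambda})\leq\sum_{l=1}^L P_{0,l}+\sum_{l=1}^L\lambda_l(P_{0,l}-P_l)$ with $P_{0,l}:=\sum_{k=1}^K\E[\|\rvec{t}_{0,l,k}\|^2]<P_l$, so $\tilde{d}$ is coercive on $\stdset{R}_+^L$; this confines the relevant analysis to a bounded region and is consistent with the existence of a maximizer $\vec{\lambda}^\star$ guaranteed by Proposition~\ref{lem:partial_dual}.

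Second, I would show that $\vec{g}(\vec{\lambda})$ is a supergradient of $\tilde{d}$ at $\vec{\lambda}$. Granting existence of the inner minimizer $\rvec{T}$ defining $\vec{g}(\vec{\lambda})$ — which follows from the same weak-compactness and convex-reformulation machinery used for Proposition~\ref{prop:strong_duality} — the affine minorant attained at $\rvec{T}$ gives, for every $\vec{\lambda}'\in\stdset{R}_+^L$, $\tilde{d}(\vec{\lambda}')\leq\sum_{k=1}^K\E[\|\rvec{t}_k\|_{\vec{1}+\vec{\lambda}'}^2]-\sum_{l=1}^L\lambda'_l P_l=\tilde{d}(\vec{\lambda})+\vec{g}(\vec{\lambda})^\T(\vec{\lambda}'-\vec{\lambda})$, which is exactly the supergradient inequality. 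If $\vec{g}(\vec{\lambda}^{(i)})=\vec{0}$ at some iteration, this inequality shows $\vec{\lambda}^{(i)}$ is already optimal and the claim is immediate; otherwise the normalization is well defined.

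Third, writing $\vec{g}^{(i)}:=\vec{g}(\vec{\lambda}^{(i)})$ and expressing the update as the Euclidean projection of $\vec{\lambda}^{(i)}+\alpha^{(i)}\vec{g}^{(i)}/\|\vec{g}^{(i)}\|$ onto $\stdset{R}_+^L$, I would use that this projection is nonexpansive and that $\vec{\lambda}^\star\in\stdset{R}_+^L$ to derive the quasi-Fej\'er inequality $\|\vec{\lambda}^{(i+1)}-\vec{\lambda}^\star\|^2\leq\|\vec{\lambda}^{(i)}-\vec{\lambda}^\star\|^2-2\alpha^{(i)}c^{(i)}+(\alpha^{(i)})^2$, where $c^{(i)}:=(\vec{g}^{(i)})^\T(\vec{\lambda}^\star-\vec{\lambda}^{(i)})/\|\vec{g}^{(i)}\|\geq (p^\star-\tilde{d}(\vec{\lambda}^{(i)}))/\|\vec{g}^{(i)}\|\geq 0$ by the supergradient inequality and $p^\star=\tilde{d}(\vec{\lambda}^\star)$. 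Telescoping, dividing by $\sum_{i=1}^n\alpha^{(i)}$, and using $\alpha^{(i)}\to0$ together with $\sum_{i}\alpha^{(i)}=\infty$ forces $\min_{i\leq n}c^{(i)}\to0$; combined with a uniform bound $\|\vec{g}^{(i)}\|\leq G$ this yields $p^\star-\max_{i\leq n}\tilde{d}(\vec{\lambda}^{(i)})\to 0$, i.e., the best objective converges to the optimum $p^\star=\tilde{d}(\vec{\lambda}^\star)$.

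The hard part will be the uniform supergradient bound $\|\vec{g}^{(i)}\|\leq G$ and, more generally, the passage from ``the distance to $\vec{\lambda}^\star$ contracts'' to ``the objective gap vanishes,'' since the normalization $1/\|\vec{g}^{(i)}\|$ obstructs a direct estimate. The lower bound $g_l(\vec{\lambda})\geq -P_l$ is trivial; for the upper bound I would invoke coercivity to argue the iterates stay in a bounded set and that, over bounded sets of multipliers, the minimal weighted transmit power — hence $\sum_{k=1}^K\E[\|\rvec{t}_{l,k}\|^2]$ at the inner minimizer — remains bounded, giving $\|\vec{g}^{(i)}\|\leq G$ along the trajectory. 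Everything else reduces to the textbook diminishing-step-size subgradient analysis; the genuinely delicate, setting-specific steps are the existence of the inner minimizer in the infinite-dimensional space $\set{T}_{\vec{\gamma}}$ and this trajectory-wise supergradient bound.
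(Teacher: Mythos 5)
Your skeleton matches the paper's proof — projected normalized supergradient ascent with diminishing non-summable steps, the supergradient identified from the affine minorant attained at the inner minimizer, existence of that minimizer delegated to the convex-reformulation machinery, and the telescoped quasi-Fej\'er recursion. The supergradient inequality and the recursion $\|\vec{\lambda}^{(i+1)}-\vec{\lambda}^\star\|^2\leq\|\vec{\lambda}^{(i)}-\vec{\lambda}^\star\|^2-2\alpha^{(i)}c^{(i)}+(\alpha^{(i)})^2$ are correct, as is the conclusion $\min_{i\leq n}c^{(i)}\to 0$. But the step you yourself flag as hard is where a genuine gap sits: your conversion from $\min_i c^{(i)}\to 0$ to a vanishing objective gap needs the trajectory-wise bound $\|\vec{g}^{(i)}\|\leq G$, and your justification — ``coercivity implies the iterates stay in a bounded set'' — is not valid as stated. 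Subgradient iterates are not objective-monotone, and under the stated step rules ($\alpha^{(i)}\to 0$, $\sum_i\alpha^{(i)}=\infty$, with possibly $\sum_i(\alpha^{(i)})^2=\infty$, e.g.\ $\alpha^{(i)}=1/\sqrt{i}$) the Fej\'er recursion with $c^{(i)}\geq 0$ only gives $\|\vec{\lambda}^{(i+1)}-\vec{\lambda}^\star\|^2\leq\|\vec{\lambda}^{(i)}-\vec{\lambda}^\star\|^2+(\alpha^{(i)})^2$, which does not confine the trajectory. The paper avoids this issue entirely: it cites Nesterov's Lemma~3.2.1 and the proof of Theorem~3.2.2 \emph{explicitly without the Lipschitz (bounded-subgradient) assumption}. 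The device is geometric: $c^{(i)}$ is the distance from $\vec{\lambda}^\star$ to the supporting hyperplane at $\vec{\lambda}^{(i)}$, so the point $\vec{\mu}^{(i)}:=\vec{\lambda}^\star-c^{(i)}\vec{g}^{(i)}/\|\vec{g}^{(i)}\|$ satisfies $\tilde{d}(\vec{\lambda}^{(i)})\geq\tilde{d}(\vec{\mu}^{(i)})$ with $\|\vec{\mu}^{(i)}-\vec{\lambda}^\star\|=c^{(i)}$, and mere continuity of the finite concave $\tilde{d}$ then yields value convergence — no bound on $\|\vec{g}^{(i)}\|$ is ever needed (this is precisely what the normalization buys). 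Your route is repairable without switching to this trick — your own estimate shows $\|\vec{g}(\vec{\lambda})\|$ grows at most linearly in $\|\vec{\lambda}\|$ (compare the inner minimizer against a strictly feasible $\rvec{T}_0$, using weights $1+\lambda_l\geq 1$), while your coercivity bound makes the dual gap grow at least linearly with slope $\min_l(P_l-P_{0,l})>0$; hence $c^{(i)}$ is bounded away from zero outside a large ball, and since $\alpha^{(i)}\to 0$ the recursion eventually decreases the distance whenever the iterate is far out, which does give boundedness — but this argument must actually be made, since coercivity alone does not supply it.

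Two further discrepancies with the paper. First, you stop at convergence of the best objective value, whereas the proposition asserts convergence of the corresponding subsequence of multipliers to a solution $\vec{\lambda}^\star$; the paper closes this (tersely) via concavity, hence continuity, of $\tilde{d}$ — and your coercivity, giving compact superlevel sets, is exactly what one would add to extract cluster points and conclude. Second, on existence of the inner minimizer: the set $\set{T}_{\vec{\gamma}}$ is nonconvex (SINR constraints), so a direct weak-compactness argument over it is unavailable. The paper's actual mechanism combines three facts: the infima over $\set{T}_{\vec{\gamma}}$ and over the convexly reformulated set $\set{T}'_{\vec{\gamma}}$ coincide (columnwise phase rotations, as in Lemma~\ref{lem:phase_invariance}); the Hilbert projection theorem yields a unique minimizer over the closed convex $\set{T}'_{\vec{\gamma}}$ for the weighted norm $\sum_{k=1}^K\E[\|\rvec{t}_k\|_{\vec{1}+\vec{\lambda}}^2]$; and $\set{T}'_{\vec{\gamma}}\subseteq\set{T}_{\vec{\gamma}}$, so that minimizer also attains the infimum over $\set{T}_{\vec{\gamma}}$. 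Your appeal to ``the same machinery as Proposition~\ref{prop:strong_duality}'' points in the right direction but glosses over the nonconvexity that makes this three-step detour necessary.
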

\begin{proof} The proof is given in Appendix~\ref{app:proof_subgradient}.
\end{proof}
Note that the above algorithm requires a method for solving Problem~\eqref{prob:DL_QoS_augmented} for arbitrary Lagrangian multipliers. A possible algorithm is provided in the following section.

\section{Uplink-downlink duality}
\label{sec:duality}
Building on the above analysis based on Lagrangian duality, in this section we present our main result, which states that the problem of optimal joint precoding design under per-AP power and information constraints can be reformulated as a joint combining design and long-term power control problem in a dual uplink channel with a properly designed noise vector $\vec{\sigma}$ (see Section \ref{ssec:dual_UL_model}). More precisely, we show later in Proposition~\ref{prop:UL-DL} that an optimal solution to Problem~\eqref{prob:DL_QoS} can be recovered from a solution to 
\begin{equation}\label{prob:dualUL}
\begin{aligned}
\underset{\rvec{V} \in \set{T}, \vec{p} \in \stdset{R}_+^K} {\text{minimize}} \quad  &  \sum_{k=1}^Kp_k\\
\text{subject to} \quad & (\forall k \in \set{K})~\mathrm{SINR}^{\mathrm{UL}}_k(\rvec{v}_k,\vec{p},\vec{\sigma}) \geq \gamma_k,\\
& (\forall k \in \set{K})~\E[\|\rvec{v}_k\|_{\vec{\sigma}}^2] = 1,
\end{aligned}
\end{equation} 
for some $\vec{\sigma} \in \stdset{R}_{++}^L$ (recalling \eqref{eq:norm}), and where $\rvec{V} := [\rvec{v}_1,\ldots,\rvec{v}_k]$.
In addition, we present an efficient numerical method that solves the above problem.

\begin{remark} Our derivation differs significantly from the derivation of the related result in \cite{yu2007transmitter}. In particular, \cite{yu2007transmitter} exploits the peculiar structure of optimal centralized precoding in deterministic channels, its relation to a certain Rayleigh quotient, and a series of properties from the theory of semidefinite programming and quadratic forms. Unfortunately, these arguments do not seem applicable to our setup, which covers distributed precoding and random channels. Specifically, equations similar to \cite[Eq. (20)]{yu2007transmitter}, \cite[Eq. (25)]{yu2007transmitter}, and \cite[Eq. (29)]{yu2007transmitter} seem difficult to derive. To address this limitation, we follow a different path. We replace the above arguments by a variation of well-known uplink-downlink duality results under a sum power constraint, reviewed, e.g., in \cite{schubert2004solution,massivemimobook}.
\end{remark}  

\subsection{Joint precoding optimization over a dual uplink channel}
The desired connection between the downlink channel and its dual uplink channel is established by studying Problem~\eqref{prob:DL_QoS_augmented}, the solutions of which are optimal joint precoders solving Problem~\eqref{prob:DL_QoS}. The key idea lies in interpreting $\sum_{k=1}^K\E[\|\rvec{t}_k\|_{\vec{1}+\vec{\lambda}^\star}^2]$ as an unconventional weighted definition of the average sum transmit power. To keep the discussion general and, for instance, applicable to the algorithm given by Proposition~\ref{lem:proj_grad}, we consider arbitrary Lagrangian multipliers, i.e., we consider the following problem: $(\forall \vec{\sigma}\in \stdset{R}_{++}^L)$
\begin{equation}\label{prob:DL_QoS_augmented_general}
\begin{aligned}
\underset{\rvec{T} \in \set{T}} {\text{minimize}} \quad  & \sum_{k=1}^K \E[\|\rvec{t}_k\|_{\vec{\sigma}}^2]\\
\text{subject to} \quad & (\forall k \in \set{K})~\mathrm{SINR}^{\mathrm{DL}}_k(\rvec{T}) \geq \gamma_k.
\end{aligned}
\end{equation} 
Since the SINR constraints are feasible by assumption, following the same arguments as in the proof of Proposition~\ref{lem:proj_grad}, we observe that the above problem always admits a solution.

\begin{proposition}\label{prop:UL-DL}
For all $\vec{\sigma}\in \stdset{R}_{++}^L$, Problem~\eqref{prob:DL_QoS_augmented_general} and Problem~\eqref{prob:dualUL} have the same optimum.  Furthermore, a solution to Problem~\eqref{prob:DL_QoS_augmented_general} is given by
\begin{equation*}
(\forall k \in \set{K})~\rvec{t}_k^\star = \sqrt{q^\star_k}\rvec{v}_k^\star,
\end{equation*}
where $(\rvec{V}^\star,\vec{p}^\star)\in \set{T}\times\stdset{R}_{++}^K$ is a solution to Problem~\eqref{prob:dualUL}, and $\vec{q}^\star\eqdef (q_1^\star,\ldots,q^\star_K) \in \stdset{R}_{++}^K$ is given by
\begin{equation*}
\vec{q}^\star = (\vec{D}-\vec{B})^{-1}(\vec{D}-\vec{B}^\T)\vec{p}^\star,
\end{equation*}
where 
\begin{equation*}
\vec{B}\eqdef \begin{bmatrix}
\E[|\rvec{h}_1^\herm\rvec{v}^\star_1|^2] & \ldots & \E[|\rvec{h}_1^\herm\rvec{v}^\star_K|^2] \\
\vdots & \ddots & \vdots \\
\E[|\rvec{h}_K^\herm\rvec{v}^\star_1|^2] & \ldots & \E[|\rvec{h}_K^\herm\rvec{v}^\star_K|^2]
\end{bmatrix}, \quad \vec{D}\eqdef \mathrm{diag}(\vec{d}),
\end{equation*}
\begin{equation*}
\vec{d}\eqdef \left[(1+\gamma_1^{-1})|\E[\rvec{h}^\herm_1\rvec{v}^\star_1]|^2,\ldots,(1+\gamma_K^{-1})|\E[\rvec{h}^\herm_K\rvec{v}^\star_K]|^2\right].
\end{equation*}

\end{proposition}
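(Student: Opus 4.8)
The plan is to collapse the infinite-dimensional problem onto the classical finite-dimensional uplink--downlink power duality by separating \emph{directions} from \emph{powers}. First I would write each precoder as $\rvec{t}_k = \sqrt{q_k}\,\rvec{v}_k$ with $\rvec{v}_k \in \set{T}_k$ normalized so that $\E[\|\rvec{v}_k\|_{\vec{\sigma}}^2] = 1$ and $q_k \geq 0$; since each $\set{T}_k$ is a subspace, this rescaling preserves the information constraint and loses no generality. Substituting into $\mathrm{SINR}_k^{\mathrm{DL}}$ and using $\V(\rvec{h}_k^\herm\rvec{v}_k) = \E[|\rvec{h}_k^\herm\rvec{v}_k|^2] - |\E[\rvec{h}_k^\herm\rvec{v}_k]|^2$, elementary rearrangement turns the downlink SINR constraints into the linear system $(\vec{D}-\vec{B})\vec{q}\geq \vec{1}$, while the objective becomes $\sum_{k=1}^K \E[\|\rvec{t}_k\|_{\vec{\sigma}}^2] = \vec{1}^\T\vec{q}$. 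Carrying out the same computation for $\mathrm{SINR}_k^{\mathrm{UL}}$ --- where the normalization makes the noise term equal to one and the cross terms $\E[|\rvec{h}_j^\herm\rvec{v}_k|^2]$ transpose the coupling --- the uplink constraints become $(\vec{D}-\vec{B}^\T)\vec{p}\geq\vec{1}$ with objective $\vec{1}^\T\vec{p}$. The essential point is that all dependence on the (infinite-dimensional) directions enters only through the finite matrices $\vec{B}$ and $\vec{D}$.

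Next I would fix a feasible direction tuple $\{\rvec{v}_k\}$ and solve the two power-allocation problems. Feasibility forces $\rho(\vec{D}^{-1}\vec{B}) < 1$, and since $\vec{D}^{-1}\vec{B}^\T$ and $\vec{D}^{-1}\vec{B}$ share the same spectral radius one also has $\rho(\vec{D}^{-1}\vec{B}^\T)<1$; hence $\vec{D}-\vec{B}$ and $\vec{D}-\vec{B}^\T$ are nonsingular M-matrices with nonnegative inverses (Neumann series). Consequently the component-wise minimal feasible power vectors are $\vec{q} = (\vec{D}-\vec{B})^{-1}\vec{1}$ and $\vec{p} = (\vec{D}-\vec{B}^\T)^{-1}\vec{1}$, both strictly positive, attaining their constraints with equality and minimizing the increasing objectives $\vec{1}^\T\vec{q}$ and $\vec{1}^\T\vec{p}$. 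The duality for fixed directions is then a one-line identity: from $\vec{1} = (\vec{D}-\vec{B}^\T)\vec{p}$ I obtain $\vec{1}^\T\vec{q} = \vec{p}^\T(\vec{D}-\vec{B})\vec{q} = \vec{p}^\T\vec{1}$, so the minimal downlink and uplink powers coincide for every feasible direction.

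With this per-direction equality, both problems reduce to minimizing the \emph{same} value function over $\{\rvec{v}_k\}\in\set{T}$; since the feasible direction sets coincide (identical spectral-radius condition) and Problem~\eqref{prob:DL_QoS_augmented_general} is already known to admit a solution, the two optima are equal, which proves the first claim and, in passing, exhibits a minimizer of Problem~\eqref{prob:dualUL} (the minimal uplink powers evaluated at the optimal directions). For the recovery formula I would take an uplink optimizer $(\rvec{V}^\star,\vec{p}^\star)$, for which $\vec{p}^\star = (\vec{D}-\vec{B}^\T)^{-1}\vec{1}$ and hence $(\vec{D}-\vec{B}^\T)\vec{p}^\star = \vec{1}$, and set $\rvec{t}_k^\star = \sqrt{q_k^\star}\rvec{v}_k^\star$ with $\vec{q}^\star = (\vec{D}-\vec{B})^{-1}(\vec{D}-\vec{B}^\T)\vec{p}^\star$. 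Then $\vec{q}^\star = (\vec{D}-\vec{B})^{-1}\vec{1}>\vec{0}$ makes the downlink constraints tight, $\rvec{T}^\star\in\set{T}$ by subspace closure, and $\vec{1}^\T\vec{q}^\star = \vec{1}^\T\vec{p}^\star$ equals the common optimum, so $\rvec{T}^\star$ solves Problem~\eqref{prob:DL_QoS_augmented_general}.

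The main obstacle I anticipate is not the algebra but making the direction/power decoupling rigorous in the functional setting: I must justify that normalizing within the subspaces $\set{T}_k$ is without loss of generality, that the feasibility assumption of Section~\ref{sec:lagrange} indeed yields $\rho(\vec{D}^{-1}\vec{B})<1$ at the relevant directions (so that the M-matrix machinery applies and the powers are finite and strictly positive), and that the infimum over the infinite-dimensional direction space is attained identically in both channels. Once the coupling is shown to be carried entirely by the finite matrices $\vec{B}$ and $\vec{D}$, the remaining power duality is the classical sum-power argument and poses no difficulty.
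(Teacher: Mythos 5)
Your proposal is correct and follows essentially the same route as the paper's proof: the change of variables $\rvec{t}_k = \sqrt{q_k}\rvec{v}_k$ with $\vec{\sigma}$-normalized directions (valid since $\E[\|\cdot\|_{\vec{\sigma}}^2]$ is a norm on the subspaces $\set{T}_k$), reduction to the paired downlink/uplink power-control problems $(\vec{D}-\vec{B})\vec{q}\geq\vec{1}$ and $(\vec{D}-\vec{B}^\T)\vec{p}\geq\vec{1}$ for fixed directions, equality of their optima, and the infimum over directions combined with the known existence of a solution to Problem~\eqref{prob:DL_QoS_augmented_general}. The only difference is cosmetic: where the paper cites the classical sum-power duality results of \cite{schubert2004solution,massivemimobook}, you prove them inline via the spectral-radius/M-matrix argument and the identity $\vec{1}^\T\vec{q}=\vec{p}^\T(\vec{D}-\vec{B})\vec{q}=\vec{p}^\T\vec{1}$, which is exactly the content of those cited results.
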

\begin{proof}
The function $\E[\|\cdot\|_{\vec{\sigma}}^2]$ is a valid norm in $\set{T}_k$. Hence, we can rewrite $\inf_{\rvec{T} \in \set{T}_{\vec{\gamma}}} \sum_{k=1}^K\E[\|\rvec{t}_k\|_{\vec{\sigma}}^2]$ in a normalized form as the following optimization problem:
\begin{equation}
\begin{aligned}
\underset{\rvec{V} \in \set{T}, \vec{q} \in \stdset{R}_+^K} {\text{minimize}} \quad  &  \sum_{k=1}^Kq_k\\
\text{subject to} \quad & (\forall k \in \set{K})~\mathrm{SINR}^{\mathrm{DL}}_k(\rvec{V}\mathrm{diag}(\vec{q})^{\frac{1}{2}}) \geq \gamma_k,\\
& (\forall k \in \set{K})~\E[\|\rvec{v}_k\|_{\vec{\sigma}}^2] = 1,
\end{aligned}
\end{equation} 
where we used the change of variables $(\forall k \in \set{K})$ $\rvec{t}_k =: \sqrt{q_k}\rvec{v}_k$. The vector $\vec{q}$ can be interpreted as a downlink power control vector, by (unconventionally) measuring the power of each $\rvec{t}_k$ in terms of its norm $\E[\|\rvec{t}_k\|_{\vec{\sigma}}^2] = q_k$. For any choice of $\rvec{V}\in\set{T}$ with normalized columns, i.e., such that $(\forall k \in \set{K})~\E[\|\rvec{v}_k\|_{\vec{\sigma}}^2] = 1$, consider now the following downlink power control problem:
\begin{equation}\label{prob:DL_power_control}
\begin{aligned}
\underset{\vec{q} \in \stdset{R}_+^K} {\text{minimize}} \quad  &  \sum_{k=1}^Kq_k\\
\text{subject to} \quad & (\forall k \in \set{K})~ \mathrm{SINR}^{\mathrm{DL}}_k(\rvec{V}\mathrm{diag}(\vec{q})^{\frac{1}{2 }}) \geq \gamma_k.
\end{aligned}
\end{equation}
From known sum power duality arguments in the power control literature (reviewed, e.g., in \cite{schubert2004solution, massivemimobook}), it follows that Problem~\eqref{prob:DL_power_control} is feasible if and only if the following uplink power control problem is feasible, for the same choice of $\rvec{V}$:
\begin{equation}\label{prob:UL_power_control}
\begin{aligned}
\underset{\vec{p} \in \stdset{R}_+^K} {\text{minimize}} \quad  &  \sum_{k=1}^Kp_k\\
\text{subject to} \quad & (\forall k \in \set{K})~\mathrm{SINR}^{\mathrm{UL}}_k(\rvec{v}_k,\vec{p},\vec{\sigma}) \geq \gamma_k.
\end{aligned}
\end{equation}
When feasible, Problem~\eqref{prob:DL_power_control} and Problem~\eqref{prob:UL_power_control} are known to have unique and positive solutions meeting the SINR constraints with equality, and to attain the same optimum. The solutions are related by rearranging the constraints as full rank linear systems $(\vec{D}-\vec{B})\vec{q}^\star = \vec{1}$ and $(\vec{D}-\vec{B}^\T)\vec{p}^\star = \vec{1}$, respectively. When not feasible, we say that the two optima equal $\infty$. By taking the infimum of both optima over the set of $\rvec{V}\in\set{T}$ such that $(\forall k \in \set{K})~\E[\|\rvec{v}_k\|_{\vec{\sigma}}^2] = 1$, we obtain that Problem~\eqref{prob:dualUL} and Problem~\eqref{prob:DL_QoS_augmented_general} have the same optimum. Finally, the proof is completed by recalling that Problem~\eqref{prob:DL_QoS_augmented_general} always has a solution.
\end{proof}

The key message of the uplink-downlink duality principle in Proposition~\ref{prop:UL-DL} is that optimal joint precoders solving Problem~\eqref{prob:DL_QoS} are given by a scaled version of optimal joint combiners solving the dual uplink problem~\eqref{prob:dualUL} with the noise vector $\vec{\sigma}= \vec{1}+\vec{\lambda}^\star$, where $\vec{\lambda}^\star$ are  Lagrangian multipliers solving Problem~\eqref{prob:partial_dual}.

\subsection{Dual uplink power control with implicit optimal combining}\label{ssec:inner}
We now focus on the solution to the dual uplink problem~\eqref{prob:dualUL}. By exploiting the property that the uplink SINR constraints are only coupled via the power vector $\vec{p}$, we observe that the optimum to Problem~\eqref{prob:dualUL} is equivalently given by the optimum to the following power control problem:
\begin{equation}\label{prob:implicit_power_control}
\underset{\vec{p} \in \stdset{R}_+^K} {\text{minimize}}\: \sum_{k=1}^Kp_k  
\text{ subject to }  (\forall k \in \set{K})~u_k(\vec{p},\vec{\sigma})\geq \gamma_k,
\end{equation}
where the optimization of the combiners is left implicit in the definition of $(\forall k \in \set{K})(\forall \vec{p} \in \stdset{R}_+^{K})(\forall \vec{\sigma} \in \stdset{R}_{++}^L)$
\begin{equation}\label{eq:maxSINR}
u_k(\vec{p},\vec{\sigma}) \eqdef \sup_{\substack{\rvec{v}_k \in \set{T}_k \\ \E[\|\rvec{v}_k\|_{\vec{\sigma}}^2] \neq 0 } }\mathrm{SINR}^{\mathrm{UL}}_k(\rvec{v}_k,\vec{p},\vec{\sigma}).
\end{equation}
A solution $(\rvec{V}^\star,\vec{p}^\star)$ to Problem~\eqref{prob:dualUL} is related to a solution to Problem~\eqref{prob:implicit_power_control} in the sense that $\vec{p}^\star$ is also a solution to Problem~\eqref{prob:implicit_power_control}, and that the columns of $\rvec{V}^\star$ attain the suprema in \eqref{eq:maxSINR} for $\vec{p}=\vec{p}^\star$. Note that, without loss of generality and for mathematical convenience in later steps, we removed in \eqref{eq:maxSINR} the unit-norm constraint, because the value of $\E[\|\rvec{v}_k\|_{\vec{\sigma}}^2]$ does not change the uplink SINR, as long as it is non-zero.

The main implication of the above discussion is that optimal joint combiners (and hence, by Propostion~\ref{prop:UL-DL}, optimal joint precoders) solving Problem~\eqref{prob:dualUL} can be obtained by solving a set of disjoint uplink SINR maximization problems under per-AP information constraints, i.e., by evaluating all $u_k(\vec{p},\vec{\sigma})$ in \eqref{eq:maxSINR} for some coefficients $(\vec{p},\vec{\sigma})\in \stdset{R}_{++}^K\times \stdset{R}_{++}^L$. The next sections discuss challenges and solutions related to this crucial step, and they reveal a useful solution structure. However, before moving to the next section, we first present a numerical method for computing $\vec{p}^\star$, i.e., for solving Problem~\eqref{prob:implicit_power_control}, assuming that $u_k(\vec{p},\vec{\sigma})$ in \eqref{eq:maxSINR} can be indeed evaluated. Specifically, we apply the celebrated framework of interference calculus for power control \cite{yates1995power,martinbook}, and obtain:
\begin{proposition}\label{lem:fixed_point}
Fix $\vec{\sigma}\in \stdset{R}_{++}^L$. For every  $\vec{p}^{(1)} \in \stdset{R}_{++}^K$, the sequence $(\vec{p}^{(i)})_{i\in\stdset{N}}$ generated via the fixed-point iterations $
(\forall i \in \stdset{N})~\vec{p}^{(i+1)} = T_{\vec{\sigma}}(\vec{p}^{(i)})$, where
\begin{equation}\label{eq:int_map}
(\forall \vec{p}\in \stdset{R}_{++}^K)~T_{\vec{\sigma}}(\vec{p}) \eqdef \begin{bmatrix}
\frac{\gamma_1p_1}{u_1(\vec{p},\vec{\sigma})}, & \ldots, & \frac{\gamma_Kp_K}{u_K(\vec{p},\vec{\sigma})}
\end{bmatrix}^\T,
\end{equation}
converges geometrically in norm to the unique solution $\vec{p}^\star$ to Problem~\eqref{prob:implicit_power_control}.
\end{proposition}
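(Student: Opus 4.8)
The plan is to recognize $T_{\vec{\sigma}}$ as a \emph{standard interference function} in the sense of Yates, and then to upgrade the resulting convergence to a geometric rate by exploiting its affine structure. First I would eliminate the supremum hidden in $u_k$. Since $p_k$ does not depend on the combiner, $\gamma_k p_k/u_k(\vec{p},\vec{\sigma}) = \gamma_k\inf_{\rvec{v}_k}\big(p_k/\mathrm{SINR}^{\mathrm{UL}}_k\big)$, so with the per-branch function
\begin{equation*}
f_{\rvec{v}_k}(\vec{p}) \eqdef \gamma_k\frac{p_k\V(\rvec{h}_k^\herm\rvec{v}_k)+\sum_{j\neq k}p_j\E[|\rvec{h}_j^\herm\rvec{v}_k|^2]+\E[\|\rvec{v}_k\|_{\vec{\sigma}}^2]}{|\E[\rvec{h}_k^\herm\rvec{v}_k]|^2}
\end{equation*}
one has $[T_{\vec{\sigma}}(\vec{p})]_k = \inf_{\rvec{v}_k} f_{\rvec{v}_k}(\vec{p})$, the infimum running over nonzero $\rvec{v}_k\in\set{T}_k$. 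The crucial observation is that, for each fixed $\rvec{v}_k$, the map $f_{\rvec{v}_k}$ is \emph{affine in $\vec{p}$ with nonnegative slope coefficients and strictly positive intercept}, the latter because $\E[\|\rvec{v}_k\|_{\vec{\sigma}}^2]>0$ whenever $\rvec{v}_k\neq\vec{0}$ and $\vec{\sigma}\in\stdset{R}_{++}^L$. Hence each coordinate of $T_{\vec{\sigma}}$ is an infimum of such affine maps.

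From this representation I would verify the three defining properties of a standard interference function directly. \emph{Positivity} follows from the strictly positive intercept of every branch. \emph{Monotonicity} holds because $\vec{p}\geq\vec{p}'$ makes each branch larger (nonnegative slopes), hence their infimum larger. \emph{Scalability} follows by evaluating, for $\alpha>1$, the branch $\rvec{v}_k^\star$ attaining $[T_{\vec{\sigma}}(\vec{p})]_k$: writing $f_{\rvec{v}_k^\star}(\vec{p})=\vec{a}^\T\vec{p}+b$ with $\vec{a}\geq\vec{0}$ and $b>0$, we get $[T_{\vec{\sigma}}(\alpha\vec{p})]_k\leq f_{\rvec{v}_k^\star}(\alpha\vec{p})=\alpha\vec{a}^\T\vec{p}+b<\alpha(\vec{a}^\T\vec{p}+b)=\alpha[T_{\vec{\sigma}}(\vec{p})]_k$, the strict inequality coming again from $b>0$. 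By the standing feasibility assumption of Section~\ref{sec:lagrange}, Problem~\eqref{prob:implicit_power_control} is feasible, so $T_{\vec{\sigma}}$ possesses a fixed point; Yates' theorem then gives uniqueness of the fixed point and convergence of $(\vec{p}^{(i)})$ from any positive start. That the fixed point is the unique minimizer of Problem~\eqref{prob:implicit_power_control} follows from the standard equivalence $T_{\vec{\sigma}}(\vec{p})=\vec{p}\iff(\forall k)~u_k(\vec{p},\vec{\sigma})=\gamma_k$, i.e. all SINR constraints active, which characterizes the minimum-sum-power point.

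The remaining and \emph{main obstacle} is to promote plain asymptotic convergence to the claimed \emph{geometric} rate, since the Yates axioms alone furnish no rate. Here I would invoke the refined contraction theory for affine/standard interference mappings from the interference-calculus framework \cite{yates1995power,martinbook}: the representation of each $[T_{\vec{\sigma}}(\cdot)]_k$ as an infimum of affine functions with strictly positive intercept makes $T_{\vec{\sigma}}$ a contraction in Thompson's (part) metric on $\stdset{R}_{++}^K$. Concretely, the positive intercepts keep the scalability gap bounded away from the homogeneous case and yield a modulus $c<1$ by which the Thompson distance to the fixed point contracts at each iteration; equivalence of the Thompson metric with the logarithmic coordinate distance on an order interval containing the iterates then translates this into geometric convergence in norm. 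In carrying this out the one point requiring care is that the infimum defining $u_k$ is attained and strictly positive for every $\vec{p}\in\stdset{R}_{++}^K$, so that $T_{\vec{\sigma}}$ is well defined and maps $\stdset{R}_{++}^K$ into itself; this follows from $\vec{\sigma}\in\stdset{R}_{++}^L$, which lower-bounds the SINR denominator, together with the closedness of the subspace $\set{T}_k$, which guarantees that the underlying MMSE-type minimization admits a minimizer.
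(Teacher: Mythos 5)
Your overall architecture matches the paper's proof: the paper likewise rests on the fact that each coordinate map $\vec{p}\mapsto \gamma_k p_k/u_k(\vec{p},\vec{\sigma})$ is a positive \emph{concave} mapping (hence a standard interference function) --- your representation of $[T_{\vec{\sigma}}(\vec{p})]_k$ as an infimum of affine functions with nonnegative slopes and positive intercepts is precisely this concavity, derived explicitly rather than imported from \cite[Proposition~3]{miretti2022joint} --- and then invokes Yates' framework for existence and uniqueness of the fixed point together with the active-constraint characterization of the minimum-sum-power point. Two of your side steps need care but are repairable: strict scalability argued through ``the attaining branch'' genuinely requires the supremum in \eqref{eq:maxSINR} to be attained (along a minimizing sequence of branches the intercepts could vanish and strictness would be lost in the infimum), which is supplied by the MMSE reformulation of Proposition~\ref{prop:MSE}; and positivity of $u_k(\vec{p},\vec{\sigma})$ does \emph{not} follow from $\vec{\sigma}\in\stdset{R}_{++}^L$ lower-bounding the denominator --- it requires some $\rvec{v}_k\in\set{T}_k$ with $\E[\rvec{h}_k^\herm\rvec{v}_k]\neq 0$, which the paper extracts from solvability of the weighted downlink problem under the standing strict-feasibility assumption. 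The same remark applies to your claim that Problem~\eqref{prob:implicit_power_control} is feasible for \emph{arbitrary} $\vec{\sigma}$: that is a consequence of the duality in Proposition~\ref{prop:UL-DL}, not of the downlink feasibility assumption by itself.

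The genuine gap is the geometric rate. Your assertion that the positive intercepts make $T_{\vec{\sigma}}$ a Thompson-metric contraction on $\stdset{R}_{++}^K$ with a uniform modulus $c<1$ is false in general: such mappings are asymptotically homogeneous, and already the scalar affine map $T(p)=ap+b$ with $0<a<1$ and $b>0$ shows that the per-iteration contraction factor of the Thompson distance to the fixed point tends to $1$ as that distance grows, so no global modulus exists; moreover, since the infimum runs over infinitely many branches, the branch intercepts need not be uniformly bounded below, so ``intercept positivity'' alone cannot keep the modulus away from $1$. Neither \cite{yates1995power} nor \cite{martinbook} contains a rate result of the kind you invoke --- Yates' axioms famously give no rate. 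The paper instead cites the dedicated theorem of \cite{cavalcante2022positive}: fixed-point iterations of positive \emph{concave} mappings converge geometrically in norm whenever a fixed point exists. Your inf-of-affine representation already verifies the hypothesis of that theorem, so the clean repair is simply to invoke it. Alternatively, a self-contained version of your Thompson argument can be salvaged by first using the plain Yates convergence you already established to confine the tail of the iterates to a compact order interval around $\vec{p}^\star$, and only then exhibiting a contraction modulus $c<1$ valid on that interval; as written, however, the claimed global per-iteration contraction is the step that fails.
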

\begin{proof} (Sketch) A simple contradiction argument proves that a solution $\vec{p}^\star$ must satisfy $(\forall k \in \set{K})~u_k(\vec{p}^\star,\vec{\sigma})= \gamma_k$. By trivially extending the arguments in \cite[Proposition~3]{miretti2022joint} for $\vec{\sigma}=\vec{1}$ to an arbitrary $\vec{\sigma}\in \stdset{R}_{++}^L$, it follows that this condition can be equivalently expressed as the fixed point equation $\vec{p}^\star = [\gamma_1f_{1,\vec{\sigma}}(\vec{p}^\star),\ldots,\gamma_Kf_{K,\vec{\sigma}}(\vec{p}^\star)]^\T$, where $(\forall k \in \set{K})~f_{k,\vec{\sigma}}: \stdset{R}_+^K\to \stdset{R}_{++}$ is a given positive concave function (and hence a \textit{standard interference function}, see, e.g., \cite[Definition~2]{miretti2022joint}) satisfying $(\forall k \in \set{K})(\forall \vec{p}\in \stdset{R}_+^K)~u_k(\vec{p},\vec{\sigma}) = p_k/f_{k,\vec{\sigma}}(\vec{p})$. For the above arguments to hold, an important requirement is the property $(\forall k \in \set{K})(\forall \vec{p}\in \stdset{R}_{++}^K)~u_k(\vec{p},\vec{\sigma}) > 0$. However, this property is guaranteed by recalling that Problem~\eqref{prob:DL_QoS_augmented_general} admits a solution, which implies that $\exists \rvec{T} \in \set{T}$ such that $(\forall k \in \set{K})~|\E[\rvec{h}_k^\herm\rvec{t}_k]|^2>0$, and hence the existence of some $\rvec{V} \in \set{T}$ such that all uplink SINRs are strictly positive for any $\vec{p}\in\stdset{R}_{++}^K$. The proof is concluded by invoking known properties of fixed points of standard interference mappings \cite{yates1995power} and positive concave mappings \cite{cavalcante2022positive}. In particular, the results in \cite{yates1995power} ensure that there exists a unique fixed point $\vec{p}^\star$, and that fixed-point iterations converge in norm to $\vec{p}^\star$. Furthermore, geometric convergence follows from \cite{cavalcante2022positive}.
\end{proof} 
The above proposition shows that optimal joint combiners solving Problem~\eqref{prob:dualUL} can be obtained via the iterative evaluation of $T_{\vec{\sigma}}(\vec{p})$ in \eqref{eq:int_map}, which in turn involves solving the aforementioned uplink SINR maximization problems in \eqref{eq:maxSINR}.

\section{Optimal joint precoding structure}
\label{sec:applications}
By exploiting the obtained uplink-downlink duality principle, we now derive the structure of an optimal solution to Problem~\eqref{prob:DL_QoS}. Specifically, we show that it suffices to consider properly scaled and regularized variations of the so-called \textit{team} MMSE precoders \cite{miretti2021team}, parametrized by a set of coefficients $(\vec{p},\vec{\sigma})\in \stdset{R}_{++}^K\times \stdset{R}_{++}^L$. 

\subsection{MMSE precoding under information constraints}\label{ssec:TMMSE}
As discussed in the previous sections, an optimal solution to Problem~\eqref{prob:DL_QoS} can be obtained by solving a set of uplink SINR maximization problems of the type in \eqref{eq:maxSINR}. Solving these problems appears quite challenging for the following reasons: (i) the non-convex fractional utility involving expectations in both the numerator and denominator, and (ii) the information constraints. However, the next proposition shows that (i) is not the main challenge, because we can consider an alternative and simpler convex utility, in the same spirit of the known relation between SINR and MMSE in deterministic channels \cite{shi2007duality}. Let us consider for all $k\in \set{K}$ the MSE between the data bearing signal $x_k\sim \CN(0,1)$ of UE~$k$ and its soft estimate $\hat{x}_k \eqdef \rvec{v}_k^\herm\rvec{y}$ obtained by processing the output of the virtual uplink channel $\rvec{y} \eqdef \sum_{k\in \set{K}}\sqrt{p_k}\rvec{h}_kx_k + \rvec{n}$ with noise $\rvec{n}\sim \CN(\vec{0},\vec{\Sigma})$, $\vec{\Sigma} \eqdef \mathrm{diag}(\sigma_1\vec{I}_N,\ldots,\sigma_L\vec{I}_N)$. The noise and all data bearing signals are mutually independent and independent of $(\rvec{H},\rvec{v}_1,\ldots,\rvec{v}_K)$. The MSE is defined as $(\forall k \in \set{K})(\forall \rvec{v}_k \in \set{T}_k)(\forall \vec{p}\in \stdset{R}_{++}^K)(\forall \vec{\sigma}\in \stdset{R}_{++}^K)$
	\begin{equation*}
		\begin{split}
			\mathrm{MSE}_k(\rvec{v}_k,\vec{p},\vec{\sigma}) \eqdef & \; \E[|x_k - \hat{x}_k|^2]\\
			=& \; \E\left[\|\vec{P}^{\frac{1}{2}}\rvec{H}^\herm\rvec{v}_k-\vec{e}_k\|_2^2\right] + \E\left[\|\rvec{v}_k\|_{\vec{\sigma}}^2\right],
		\end{split}
	\end{equation*} 
	where $\vec{P}\eqdef \mathrm{diag}(\vec{p})$, $\rvec{H}\eqdef[\rvec{h}_1,\ldots,\rvec{h}_K]$, and where the second equality can be verified via simple algebraic  manipulations. We then have:
\begin{proposition}
\label{prop:MSE}
For given $k\in\set{K}$, $\vec{p}\in\stdset{R}_{++}^K$, and $\vec{\sigma}\in \stdset{R}_{++}^L$, consider the optimization problem (recalling \eqref{eq:norm})
\begin{equation}\label{eq:MSE}
\underset{\rvec{v}_k \in \set{T}_k}{\emph{minimize}}~\mathrm{MSE}_k(\rvec{v}_k,\vec{p},\vec{\sigma}).
\end{equation}
Problem~\eqref{eq:MSE} has a unique solution $\rvec{v}_k^\star\in \set{T}_k$, and this solution satisfies
\begin{equation*}
u_k(\vec{p},\vec{\sigma}) = \mathrm{SINR}_k^{\mathrm{UL}}\left(\rvec{v}_k^\star,\vec{p},\vec{\sigma}\right) = \frac{1}{\mathrm{MSE}_k\left(\rvec{v}_k^\star,\vec{p},\vec{\sigma}\right)}-1.
\end{equation*}
\end{proposition}
\begin{proof}
The proof follows readily by extending the arguments in \cite[Proposition~4]{miretti2022joint} for $\vec{\sigma}=\vec{1}$ to an arbitrary $\vec{\sigma}\in \stdset{R}_{++}^L$. Additional details are reported in Appendix~\ref{app:MSE}.
\end{proof}
%Following standard arguments and simple algebraic manipulations, we can show (see, e.g., \cite{miretti2021team}) that the objective of Problem~\eqref{eq:MSE} corresponds to the MSE between the uplink signal of UE $k\in \set{K}$ and its soft estimate obtained by applying the joint combiner $\rvec{v}_k$ to the noisy received signal at the APs.
\begin{remark}
Together with Proposition~\ref{prop:UL-DL} and the discussion in Section~\ref{ssec:inner}, Proposition~\ref{prop:MSE} shows that optimal joint precoders solving Problem~\eqref{prob:DL_QoS} can be obtained as solutions to MMSE problems under information constraints, i.e., they are given by (properly scaled) solutions to  Problem~\eqref{eq:MSE}, for some parameters $(\vec{p},\vec{\sigma})\in \stdset{R}_{++}^K\times\stdset{R}_{++}^L$.
\end{remark}
In the particular case where all APs have complete knowledge of the channel $\rvec{H}$, and fully share UEs' data, a solution to Problem~\eqref{eq:MSE} is simply given by the following variation of the well-known regularized zero-forcing solution
\begin{equation}\label{eq:MMSE}
\rvec{V} = \left(\rvec{H}\vec{P}\rvec{H}^\herm + \vec{\Sigma}\right)^{-1}\rvec{H}\vec{P}^{\frac{1}{2}},
\end{equation}
where, similar to \cite{yu2007transmitter}, the typical regularization of the matrix inversion stage via a scaled identity matrix is replaced by a more general diagonal regularization matrix $\vec{\Sigma} = \mathrm{diag}(\sigma_1\vec{I}_N,\ldots,\sigma_L\vec{I}_N)$ parametrized by $\vec{\sigma}$ which takes into account the per-AP power constraints. 
\begin{remark}\label{rem:longterm}
A major difference between the solution in \cite{yu2007transmitter} and \eqref{eq:MMSE} is that the former considers short-term optimization of its parameters, i.e.,  for every channel realization. In contrast, our work considers long-term optimization of the parameters $(\vec{p},\vec{\sigma})\in \stdset{R}_{++}^K\times \stdset{R}_{++}^L$ based on channel statistics.
\end{remark}
For more general and nontrivial information constraints, the solution to Problem~\eqref{eq:MSE} can be interpreted as the best distributed approximation of regularized channel inversion. It can be obtained via a minor variation of the recently developed \emph{team} MMSE precoding method given in \cite{miretti2021team}. In the next sections we discuss this aspect in detail.

\subsection{Uplink channel estimation and CSI sharing}
\label{ssec:canonical_model}
To provide concrete examples of solutions to Problem~\eqref{eq:MSE}, we first slightly restrict the model in Section~\ref{sec:model}, while still covering most scenarios studied in the (cell-free) massive MIMO literature.  
For all AP $l\in \set{L}$ and UE $k\in\set{K}$, we let each sub-vector $\rvec{h}_{l,k}$ of $\rvec{h}_k^{\herm} =: [\rvec{h}_{1,k}^\herm, \ldots \rvec{h}_{L,k}^\herm]$ be independently distributed as $\rvec{h}_{l,k} \sim \CN\left(\vec{\mu}_{l,k}, \vec{K}_{l,k}\right)$ for some channel mean $\vec{\mu}_{l,k} \in \stdset{C}^N$ and covariance matrix $\vec{K}_{l,k}\in \stdset{C}^{N\times N}$. Independence can be easily motivated by assuming that UEs and APs are not colocated. As customary in the literature, we further assume that the APs (or, more generally, the processing units controlling the APs) perform pilot-based over-the-uplink MMSE channel estimation, based on the channel reciprocity property of time division duplex systems. Specifically, we assume each AP~$l \in \set{L}$ to acquire local estimates $\hat{\rvec{H}}_l\eqdef[\hat{\rvec{h}}_{l,1},\ldots,\hat{\rvec{h}}_{l,K}]$ of the local channel $\rvec{H}_l\eqdef[\rvec{h}_{l,1},\ldots,\rvec{h}_{l,K}]$ with error $\rvec{Z}_l \eqdef[\rvec{z}_{l,1},\ldots,\rvec{z}_{l,K}] \eqdef  \rvec{H}_l-\hat{\rvec{H}}_l$ independent from $\hat{\rvec{H}}_l$, and satisfying $(\forall l \in \set{L})(\forall k \in \set{K})$ $\rvec{z}_{l,k} \sim \CN(\vec{0},\vec{\Psi}_{l,k})$ for some error covariance $\vec{\Psi}_{l,k} \in \stdset{C}^{N\times N}$. Moreover, again motivated by the geographical separation of the APs, we assume that $(\hat{\rvec{H}}_l,\rvec{Z}_l)$ is independent from $(\hat{\rvec{H}}_j,\rvec{Z}_j)$ for all $(l,j)\in \set{L}^2$ such that $l\neq j$.

After this local channel estimation step, we assume that the APs may acquire additional information of the global channel matrix $\rvec{H}$ via some CSI sharing step. More specifically, we assume that each AP $l\in \set{L}$ must form its precoders based on some side information $S_l \eqdef (\hat{\rvec{H}}_l, \bar{S}_l)$, where $\bar{S}_l$ denotes additional channel information defined by the chosen CSI sharing pattern. Overall, we model this two steps channel acquisition scheme (local channel estimation followed by CSI sharing) by assuming the following Markov chain:
\begin{equation*}
(\forall l \in \set{L})(\forall j\in \set{L})\quad \rvec{H}_l - \hat{\rvec{H}}_l - S_l - S_j - \hat{\rvec{H}}_j - \rvec{H}_j.
\end{equation*}
The first and last part of this Markov chain correspond to the local channel estimation step, while the middle part corresponds to the subsequent CSI sharing step. Essentially, this Markov chain ensures that each AP $l\in \set{L}$ can only acquire a degraded version of the local estimate $\hat{\rvec{H}}_j$ of the local channel $\rvec{H}_j$ contained within the CSI $S_j$ of another AP $j\neq l$. In other words, AP $l\neq j$ cannot have a better knowledge of $\rvec{H}_j$ than AP $j$, and vice versa\footnote{Note that, while being natural for general time division duplex systems with reciprocity-based channel estimation, this assumption may not hold for frequency division duplex systems with nontrivial CSI feedback patterns. Since less attractive from both a practical and mathematical perspective, we do not consider the latter systems. However, we remark that these systems may also be covered by a variation of the team MMSE method in \cite{miretti2021team}.}. We map the above assumptions to information constraints $\set{T}_1,\ldots,\set{T}_K$ in Problem~\eqref{prob:DL_QoS} by letting the information subfield $\Sigma_l$ of each AP $l\in \set{L}$, defining the subspace $\set{H}_l^N$ in \eqref{eq:CSI}, be the sub-$\sigma$-algebra generated by its available CSI $S_l$ on $\Omega$.

\subsection{Centralized precoding with per-AP power constraints}
A similar expression to \eqref{eq:MMSE} covering user-centric network clustering and channel estimation errors can be easily obtained, provided that imperfect channel estimates are perfectly shared within each cluster of APs. This corresponds to a variation of the known centralized MMSE precoding solution in \cite{emil2020scalable,demir2021foundations}. 
\begin{proposition}
\label{prop:CMMSE}
For given $k\in\set{K}$, $\vec{p}\in\stdset{R}_{++}^K$, and $\vec{\sigma}\in \stdset{R}_{++}^L$, and under the model in Section~\ref{ssec:canonical_model} with $(\forall l\in \set{L})~S_l = (\hat{\rvec{H}}_1,\ldots,\hat{\rvec{H}}_L)$ (centralized CSI), the unique solution to Problem~\eqref{eq:MSE} is given by 
\begin{equation}\label{eq:CMMSE}
\rvec{v}_k = \left(\vec{C}_k\hat{\rvec{H}}\vec{P}\hat{\rvec{H}}^\herm\vec{C}_k+\vec{C}_k\vec{\Psi}\vec{C}_k+\vec{\Sigma}\right)^{-1}\vec{C}_k\hat{\rvec{H}}\vec{P}^\frac{1}{2}\vec{e}_k, 
\end{equation}
where $\hat{\rvec{H}}^\herm \eqdef[\hat{\rvec{H}}_1^\herm,\ldots,\hat{\rvec{H}}_L^\herm]$, $\vec{\Sigma}\eqdef \mathrm{diag}(\sigma_1\vec{I}_N,\ldots,\sigma_L\vec{I}_N)$, $\vec{\Psi} \eqdef \sum_{k=1}^Kp_k \mathrm{diag}(\vec{\Psi}_{1,k},\ldots,\vec{\Psi}_{L,k})$, and $\vec{C}_k \eqdef \mathrm{diag}(\vec{C}_{1,k},\ldots,\vec{C}_{L,k})$ is a block-diagonal matrix satisfying
\begin{equation*}
(\forall l \in\set{L})~\vec{C}_{l,k} = \begin{cases}\vec{I}_N & \text{if } l \in \set{L}_k,\\
\vec{0}_{N\times N} & \text{otherwise}.
\end{cases}
\end{equation*}
\end{proposition}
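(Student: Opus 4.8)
The plan is to exploit the special structure of the centralized CSI model, in which every information subfield coincides with $\sigma(\hat{\rvec{H}})$, the $\sigma$-algebra generated by the shared estimate $\hat{\rvec{H}}$. Under this assumption the feasible set $\set{T}_k$ reduces to the collection of square-integrable random vectors that are measurable with respect to $\sigma(\hat{\rvec{H}})$ and supported on the serving cluster, i.e.\ satisfying $\rvec{v}_k = \vec{C}_k\rvec{v}_k$ almost surely. Because every admissible $\rvec{v}_k$ is a function of $\hat{\rvec{H}}$ alone, I would first apply the tower property to the objective in \eqref{eq:MSE}, writing it as $\E[\E[\,\cdot\mid\hat{\rvec{H}}]]$, and then observe that the inner conditional expectation can be minimized pointwise, for each realization of $\hat{\rvec{H}}$, over deterministic vectors $\vec{v}$ in the range of $\vec{C}_k$. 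This converts the infinite-dimensional functional problem into a parametrized family of finite-dimensional convex quadratic programs.

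The second step is to evaluate the conditional moments appearing in the resulting deterministic objective. Writing $\rvec{H} = \hat{\rvec{H}} + \rvec{Z}$ with $\rvec{Z}$ the zero-mean estimation error, the independence of $\rvec{Z}$ from $\hat{\rvec{H}}$ and the cross-AP independence assumed in Section~\ref{ssec:canonical_model} give $\E[\rvec{H}^\herm\mid\hat{\rvec{H}}] = \hat{\rvec{H}}^\herm$ and $\E[\rvec{H}\vec{P}\rvec{H}^\herm\mid\hat{\rvec{H}}] = \hat{\rvec{H}}\vec{P}\hat{\rvec{H}}^\herm + \vec{\Psi}$, where $\vec{\Psi}$ is exactly the block-diagonal error-covariance term in the statement and the cross terms vanish since the error is zero-mean and independent of the estimate. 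Substituting these moments, the per-realization objective becomes the strictly convex quadratic $\vec{v}^\herm(\hat{\rvec{H}}\vec{P}\hat{\rvec{H}}^\herm + \vec{\Psi} + \vec{\Sigma})\vec{v} - 2\Re\{\sqrt{p_k}\,\hat{\rvec{h}}_k^\herm\vec{v}\} + 1$, to be minimized subject to $\vec{v} = \vec{C}_k\vec{v}$; strict convexity follows from $\vec{\Sigma}\succ 0$, yielding a unique minimizer for each realization.

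The main technical point, and the step requiring the most care, is handling the cluster constraint $\vec{v} = \vec{C}_k\vec{v}$ and reconciling it with the claimed closed form, which carries a full $\vec{\Sigma}$ rather than a sandwiched $\vec{C}_k\vec{\Sigma}\vec{C}_k$. I would partition every matrix according to serving ($\set{L}_k$) and non-serving ($\set{L}\setminus\set{L}_k$) AP blocks. Because $\vec{C}_k$ annihilates the non-serving rows and columns while $\vec{\Sigma}$ is block-diagonal, the regularized matrix $\vec{M} \eqdef \vec{C}_k\hat{\rvec{H}}\vec{P}\hat{\rvec{H}}^\herm\vec{C}_k + \vec{C}_k\vec{\Psi}\vec{C}_k + \vec{\Sigma}$ is block-diagonal across this partition and invertible, its non-serving block reducing to $\vec{\Sigma}$ restricted there. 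A direct computation then shows that $\vec{M}^{-1}$ applied to the cluster-supported linear term automatically produces a vector whose non-serving blocks vanish and whose serving block solves the constrained stationarity equation; hence the full-$\vec{\Sigma}$ regularization is precisely what renders $\vec{M}$ invertible without perturbing the constrained optimizer, and the pointwise minimizer is identified with the stated expression.

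Finally, I would close the argument by noting that this pointwise-optimal map is a measurable, indeed rational, function of $\hat{\rvec{H}}$ and is square-integrable, so it defines a genuine element of $\set{T}_k$; since it minimizes the inner conditional expectation almost surely, it minimizes the original objective, with uniqueness transferring from the per-realization strict convexity. Equivalently, the statement can be read as the centralized specialization of the team MMSE machinery of \cite{miretti2021team}, for which the common-information case collapses the team problem to the single conditional-expectation problem solved above.
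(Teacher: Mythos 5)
Your proposal is correct and takes essentially the same route as the paper's own proof: the paper likewise encodes the user-centric cluster through $\vec{C}_k$, exploits the common CSI to reduce Problem~\eqref{eq:MSE} to disjoint finite-dimensional conditional MMSE problems (one per realization of $\hat{\rvec{H}}$), evaluates the conditional moments via the estimation-error model to obtain $\hat{\rvec{H}}\vec{P}\hat{\rvec{H}}^\herm+\vec{\Psi}$, and concludes by standard quadratic minimization. Your block-partition argument explaining why the full $\vec{\Sigma}$ (rather than $\vec{C}_k\vec{\Sigma}\vec{C}_k$) appears in the inverse without perturbing the constrained optimizer, together with the measurability and square-integrability check, merely makes explicit details that the paper's proof sketch leaves to the reader.
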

\begin{proof}(Sketch) We equivalently model user-centric network clustering by replacing $\rvec{v}_k$ with $\vec{C}_k\rvec{v}_k$ as in \cite{emil2020scalable}, instead of modifying the sets $\set{T}_k$ in \eqref{eq:CSI} as described in Section~\ref{ssec:info}. Then, since all APs share the same CSI $\hat{\rvec{H}}$, optimal precoders can be obtained as solutions to disjoint, unconstrained, and finite dimensional conditional MMSE problems, one for each CSI realization:
\begin{equation}\label{eq:condMMSE}
\rvec{v}_k = \arg\min_{\vec{v}_k \in \stdset{C}^{LN}}\E\left[\|\vec{P}^{\frac{1}{2}}\rvec{H}^\herm\vec{C}_k\vec{v}_k-\vec{e}_k\|^2+\|\vec{C}_k\vec{v}_k\|_{\vec{\sigma}}^2 \middle|\hat{\rvec{H}}\right].
\end{equation}
The rest of the proof follows by evaluating the conditional expectations using the CSI error model, and by standard results on unconstrained minimization of quadratic forms.
\end{proof}

Although derived by assuming full CSI sharing, we observe that the computation of the optimal precoder \eqref{eq:CMMSE} for UE $k\in \set{K}$ only requires knowledge of $(\hat{\rvec{H}}_l)_{l\in \set{L}_k}$, i.e., only the channel estimates of the APs serving UE $k$. Furthermore, similarly to the discussion in \cite{demir2021foundations}, the computation of the inverse in \eqref{eq:CMMSE} only involves the inversion of a submatrix of size $N|\set{L}_k|$.

\begin{remark}
The special case of \eqref{eq:CMMSE} with $\vec{\sigma}=\vec{1}$ was already proposed in \cite{emil2020scalable,demir2021foundations} as a good heuristic under a sum power constraint. In particular, it was motivated by first observing that \eqref{eq:CMMSE} maximizes a coherent uplink rate bound, different than \eqref{eq:UatF_bound}, and then by invoking uplink-downlink duality between \eqref{eq:UatF_bound} and \eqref{eq:hardening_bound} under a sum power constraint. Although \cite{emil2020scalable,demir2021foundations} observed that \eqref{eq:CMMSE} also solves Problem~\eqref{eq:condMMSE}, the connection with the maximization of \eqref{eq:UatF_bound} given by Propostion~\ref{prop:MSE}, and hence the formal optimality of \eqref{eq:CMMSE} in terms of downlink rates in \eqref{eq:hardening_bound}, was not reported. 
\end{remark}
Following similar arguments as in Remark~\ref{rem:longterm}, we further notice the following difference between \eqref{eq:CMMSE} and  similar expressions obtained using the technique in \cite{yu2007transmitter}, reported, for instance, in \cite{emil2013optimal}. Although both \eqref{eq:CMMSE} and these expressions involve a form of regularized channel inversion based on instantaneous CSI, the regularization parameters $(\vec{p},\vec{\sigma})$ in \eqref{eq:CMMSE} are selected based on channel statistics, instead of instantaneous CSI as in \cite{yu2007transmitter,emil2013optimal}. Despite some potential performance loss, this feature may simplify practical implementation, since it avoids performing complex parameter optimization (essentially, power control) for each channel realization.

\subsection{Distributed precoding with per-AP power constraints} 
When the APs have different CSI, it is not possible to decompose the problem into disjoint conditional MMSE problems as in the proof of \eqref{eq:CMMSE}, and more advanced methods must be used \cite{miretti2021team}. In this section we illustrate the extension of some key results in \cite{miretti2021team} to the case of per-AP power constraints. We start with the following general result:
\begin{proposition}
\label{prop:TMMSE}
For given $k\in\set{K}$, $\vec{p}\in\stdset{R}_{++}^K$, and $\vec{\sigma}\in \stdset{R}_{++}^L$, and under the model in Section~\ref{ssec:canonical_model}, the unique solution to Problem~\eqref{eq:MSE} is also the unique $\rvec{v}_k \in \set{T}_k$ satisfying $(\forall l \in \mathcal{L}_k)$
\begin{equation}\label{eq:TMMSE}
\rvec{v}_{l,k} =  \rmat{V}_l\left(\vec{e}_k-\sum_{j \in \set{L}_k\backslash  \{l\}} \vec{P}^{\frac{1}{2}}\E\left[\hat{\rmat{H}}_{j}^\herm\rvec{v}_{j,k}\Big|S_l\right] \right) \quad \mathrm{a.s.}, 
\end{equation} 
where $\rmat{V}_l:=\left(\hat{\rmat{H}}_{l}\vec{P}\hat{\rmat{H}}_{l}^\herm+\sum_{k\in\set{K}}p_k\vec{\Psi}_{l,k} + \sigma_l\vec{I}_N\right)^{-1}\hat{\rmat{H}}_{l}\vec{P}^{\frac{1}{2}}$. 
\end{proposition}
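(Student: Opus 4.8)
The plan is to characterize the unique minimizer of the strictly convex problem~\eqref{eq:MSE} through its first-order (variational) optimality conditions, and then to evaluate the resulting conditional expectations under the canonical model of Section~\ref{ssec:canonical_model}. Existence and uniqueness of the solution $\rvec{v}_k^\star\in\set{T}_k$ are already granted by Proposition~\ref{prop:MSE}, so it suffices to show that the optimality conditions are both necessary and sufficient and that they rearrange exactly into \eqref{eq:TMMSE}; uniqueness of the solution to \eqref{eq:TMMSE} then follows because these conditions characterize the unique minimizer.

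First I would view the objective in \eqref{eq:MSE} as a quadratic functional $J(\rvec{v}_k)$ on the real Hilbert space $\set{H}^{NL}$, and note that $\set{T}_k=\set{H}_1^N\times\cdots\times\set{H}_L^N$ is a closed subspace (with the blocks of APs $l\notin\set{L}_k$ forced to $\vec{0}$ to model user-centric clustering). Since $J$ is strictly convex and Gateaux differentiable, $\rvec{v}_k^\star$ is optimal if and only if the directional derivative $\Re\E[\rvec{u}^\herm\nabla J(\rvec{v}_k^\star)]$ vanishes for every $\rvec{u}\in\set{T}_k$, and convexity makes this condition also sufficient. Because $\set{T}_k$ is a product subspace, I would vary each block $\rvec{v}_{l,k}$ over $\set{H}_l^N$ separately; exploiting that the test functions $\rvec{u}_l$ are $\Sigma_l$-measurable, the block-$l$ stationarity condition becomes the orthogonal-projection identity
\begin{equation*}
\E\Big[\rvec{H}_l\vec{P}\textstyle\sum_{j}\rvec{H}_j^\herm\rvec{v}_{j,k}^\star-\rvec{H}_l\vec{P}^{\frac{1}{2}}\vec{e}_k+\sigma_l\rvec{v}_{l,k}^\star\,\Big|\,S_l\Big]=\vec{0}\quad\mathrm{a.s.},
\end{equation*}
for every $l\in\set{L}_k$, where I have collected the quadratic contributions using $\rvec{H}^\herm\rvec{v}_k^\star=\sum_j\rvec{H}_j^\herm\rvec{v}_{j,k}^\star$.

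The core of the argument is to evaluate the three types of conditional expectations appearing above. For the estimate and diagonal terms I would use $\rvec{H}_l=\hat{\rvec{H}}_l+\rvec{Z}_l$ with $\rvec{Z}_l$ zero-mean and independent of $\hat{\rvec{H}}_l$, together with $\E[\rvec{Z}_l\mid S_l]=\vec{0}$ (a consequence of the first link of the Markov chain), to obtain $\E[\rvec{H}_l\mid S_l]=\hat{\rvec{H}}_l$ and $\E[\rvec{H}_l\vec{P}\rvec{H}_l^\herm\mid S_l]=\hat{\rvec{H}}_l\vec{P}\hat{\rvec{H}}_l^\herm+\sum_{k'}p_{k'}\vec{\Psi}_{l,k'}$. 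The delicate part, where the distributed/team structure enters, is the cross term $\E[\rvec{H}_l\vec{P}\rvec{H}_j^\herm\rvec{v}_{j,k}^\star\mid S_l]$ for $j\neq l$. Here I would invoke the full Markov chain $\rvec{H}_l-\hat{\rvec{H}}_l-S_l-S_j-\hat{\rvec{H}}_j-\rvec{H}_j$: a first application of the tower property with inner conditioning on $(S_l,S_j,\rvec{H}_j)$ gives $\E[\rvec{H}_l\mid S_l,S_j,\rvec{H}_j]=\hat{\rvec{H}}_l$, and a second step with inner conditioning on $(S_l,S_j)$, using the $S_j$-measurability of $\rvec{v}_{j,k}^\star$, gives $\E[\rvec{H}_j^\herm\mid S_l,S_j]=\hat{\rvec{H}}_j^\herm$, so that the cross term collapses to $\hat{\rvec{H}}_l\vec{P}\,\E[\hat{\rvec{H}}_j^\herm\rvec{v}_{j,k}^\star\mid S_l]$. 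Establishing these two reductions rigorously from the stated Markov chain is the main obstacle, since this is precisely the step that formalizes the assumption that no AP has better knowledge of another AP's channel than that AP itself.

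Finally, substituting these evaluations and defining $\rmat{M}_l\eqdef\hat{\rvec{H}}_l\vec{P}\hat{\rvec{H}}_l^\herm+\sum_{k'}p_{k'}\vec{\Psi}_{l,k'}+\sigma_l\vec{I}_N$, the block-$l$ condition reads $\rmat{M}_l\rvec{v}_{l,k}^\star=\hat{\rvec{H}}_l\vec{P}^{\frac{1}{2}}\vec{e}_k-\sum_{j\in\set{L}_k\setminus\{l\}}\hat{\rvec{H}}_l\vec{P}\,\E[\hat{\rvec{H}}_j^\herm\rvec{v}_{j,k}^\star\mid S_l]$. Factoring $\hat{\rvec{H}}_l\vec{P}^{\frac{1}{2}}$ out of the right-hand side and multiplying by $\rmat{M}_l^{-1}$ yields precisely \eqref{eq:TMMSE}, with $\rmat{V}_l=\rmat{M}_l^{-1}\hat{\rvec{H}}_l\vec{P}^{\frac{1}{2}}$. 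Since the stationarity conditions are necessary and sufficient for the unique minimizer of the strictly convex problem~\eqref{eq:MSE}, any $\rvec{v}_k\in\set{T}_k$ satisfying the system~\eqref{eq:TMMSE} (over all $l\in\set{L}_k$) must coincide with that minimizer, which establishes the claimed uniqueness and completes the proof.
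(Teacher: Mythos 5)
Your proposal is correct and follows essentially the same route as the paper, which defers the details to the proofs of \cite[Lemma~2]{miretti2021team} and \cite[Lemma~1]{miretti2021team2} but describes exactly your strategy: blockwise (person-by-person) stationarity conditions obtained by conditioning on $S_l$ and fixing the other subvectors, with the ``key step'' of sufficiency handled, as you do, by convexity of the quadratic objective over the product subspace $\set{T}_k$. Your evaluation of the conditional expectations via the Markov chain $\rvec{H}_l-\hat{\rvec{H}}_l-S_l-S_j-\hat{\rvec{H}}_j-\rvec{H}_j$ (including the two tower-property reductions of the cross terms) is precisely the mechanism used in those references, so the step you flag as the ``main obstacle'' does go through as you sketch it.
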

\begin{proof}
The proof follows readily by replacing $\vec{\sigma}=\vec{1}$ with an arbitrary $\vec{\sigma}\in \stdset{R}_{++}^L$ in the proofs of \cite[Theorem~3]{miretti2021team}, \cite[Lemma~2]{miretti2021team}, and \cite[Lemma~1]{miretti2021team2}. The proof is based on mapping Problem~\eqref{eq:MSE} to a slight generalization of the known class of so-called \textit{quadratic team decision} problems \cite{radner1962team,yukselbook}. Informally, \eqref{eq:TMMSE} is obtained by minimizing the objective in \eqref{eq:MSE} with respect to a subvector $\rvec{v}_{l,k}$, by conditioning on $S_l$, and by fixing the other subvectors $\rvec{v}_{j,k}$ for $j\neq l$. This readily gives a set of necessary optimality conditions, reminiscent of the game theoretical notion of Nash equilibium. The key step of the proof shows that these conditions are also sufficient.
\end{proof}
Proposition~\ref{prop:TMMSE} shows that optimal distributed precoding is composed by a local MMSE precoding stage, followed by a corrective stage taking into account the possibly unknown effect of the other APs based on the available CSI. From an optimization point of view,  Proposition~\ref{prop:TMMSE} states that the solution to \eqref{eq:MSE} is the unique solution to an infinite dimensional linear feasibility problem, that can be solved in closed form for many nontrivial yet relevant setups, or via efficient approximation techniques when the expectations in \eqref{eq:TMMSE} cannot be easily evaluated. We note that the difference between Proposition~\ref{prop:TMMSE} and the related results given in \cite{miretti2021team,miretti2021team2} for a sum power constraint is the introduction of a more flexible regularization of the local MMSE stage, which now involves the selection of a potentially different parameter $\sigma_l$ for each AP $l\in \set{L}$.

\begin{algorithm}[t]
\caption{for solving Problem~\eqref{prob:DL_QoS}}\label{algo}
\begin{algorithmic}[1]
\Require $\vec{p}\in \stdset{R}_{++}^K$, $\vec{\lambda}\in \stdset{R}_{+}^L$, $\alpha \in \stdset{R}_{++}$
\State $i\gets 1$
\Repeat
\State $\triangleright$ \textit{uplink joint power control and combining design}
\State $\vec{\sigma} \gets \vec{1}+\vec{\lambda}$
\Repeat	
\For{$k\in\set{K}$}
\State $\rvec{v}_k \gets \arg\min_{\rvec{v}_k\in\set{V}_k}\mathrm{MSE}_k(\rvec{v}_k,\vec{p},\vec{\sigma})$ 
\EndFor
\State $\vec{p} \gets \begin{bmatrix}
			\frac{\gamma_1p_1}{\mathrm{SINR}^{\mathrm{UL}}_1(\rvec{v}_1,\vec{p},\vec{\sigma})}, & \ldots, & \frac{\gamma_Kp_K}{\mathrm{SINR}^{\mathrm{UL}}_K(\rvec{v}_K,\vec{p},\vec{\sigma})}
		\end{bmatrix}^\T$ 
\Until{no significant progress is observed.}
\State $\rvec{V} \gets \begin{bmatrix}
\frac{\rvec{v}_1}{\sqrt{\E[\|\rvec{v}_1\|_{\vec{\sigma}}^2]}}, &\ldots, & \frac{\rvec{v}_K}{\sqrt{\E[\|\rvec{v}_K\|_{\vec{\sigma}}^2]}}
\end{bmatrix}$
\State $\triangleright$ \textit{uplink to downlink conversion}
\State $
\vec{B} \gets \begin{bmatrix} b_{j,k} 
\end{bmatrix}_{j\in \set{K},k\in \set{K}}$, where $b_{j,k} = \E[|\rvec{h}_j^\herm\rvec{v}_k|^2]$ 
\State $\vec{D} \gets \mathrm{diag}(d_k)_{k\in \set{K}}$, where $d_k = (1+\gamma_k^{-1})|\E[\rvec{h}^\herm_k\rvec{v}_k]|^2$
\State $\vec{q} \gets (\vec{D}-\vec{B})^{-1}(\vec{D}-\vec{B}^\T)\vec{p}$ 
\State $\rvec{T} \gets [q_1\rvec{v}_1,\ldots,q_K\rvec{v}_K]$
\State $\triangleright$ \textit{Lagrangian multipliers update}
\State $\vec{g} \gets [g_1,\ldots,g_L]^\T$ where $g_l = \sum_{k=1}^K\E[\|\rvec{t}_{l,k}\|^2]- P_l$
\State $\vec{\lambda} \gets \max\left(\vec{\lambda} + \frac{\alpha}{\sqrt{i}}\frac{\vec{g}}{\|\vec{g}\|},\vec{0}\right)$ element-wise
\State $i \gets i + 1$ 
\Until{no significant progress is observed.}
\end{algorithmic}
\begin{center}

\begin{tabular}{ |c|c| } 
 \hline
\textbf{Variant - line 7} & $\arg\min_{\rvec{v}_k\in\set{V}_k}\mathrm{MSE}_k(\rvec{v}_k,\vec{p},\vec{\sigma})$ \\ 
\hline
Centralized CSI & \eqref{eq:CMMSE} \\ 
Local CSI & \eqref{eq:LTMMSE} \\
Sequential CSI & solution to \eqref{eq:TMMSE} similar to \\
& \cite[Eq. (16)]{miretti2021team}, \cite[Eq.~(18)]{miretti2021team2} \\
\hline
\end{tabular}
\end{center}
\end{algorithm}

As for the results in \cite{miretti2021team,miretti2021team2}, we remark that Proposition~\ref{prop:TMMSE} applies to fairly general CSI sharing patterns (see Remark~\ref{rem:CSI}). Due to space limitations, in the remainder of this study we focus on the relatively simple case of local precoding with user-centric network clustering, and leave the study of more complex setups for future work. However, we point out that the same approach can be readily applied to extend the results on sequential precoding and undirectional CSI sharing given by  \cite{miretti2021team,miretti2021team2} to the case of per-AP power constraints.
\begin{proposition}\label{prop:LTMMSE}
For given $k\in\set{K}$, $\vec{p}\in\stdset{R}_{++}^K$, and $\vec{\sigma}\in \stdset{R}_{++}^L$, and under the model in Section~\ref{ssec:canonical_model} with  $(\forall l\in \set{L})~S_l = \hat{\rvec{H}}_l$ (local CSI), the unique solution to Problem~\eqref{eq:MSE} is given by 
\begin{equation}\label{eq:LTMMSE}
(\forall l\in \set{L})~\rvec{v}_{l,k} = \rvec{V}_l\vec{c}_{l,k}, 
\end{equation}
where $\rvec{V}_l$ is a local MMSE stage as in Proposition~\ref{prop:TMMSE}, and  $\vec{c}_{l,k}\in \stdset{C}^K$ is a statistical precoding stage given by the unique solution to the linear system of equations
\begin{equation*}
\begin{cases}\vec{c}_{l,k} + \sum_{j \in \set{L}_k \backslash \{l\}}\vec{\Pi}_j \vec{c}_{j,k} = \vec{e}_k & \forall l \in \set{L}_k, \\
\vec{c}_{l,k} = \vec{0}_{K\times 1} & \text{otherwise,}
\end{cases}
\end{equation*}
where $(\forall l\in \set{L})~\vec{\Pi}_l \eqdef \E\left[\vec{P}^{\frac{1}{2}}\hat{\rmat{H}}_l^\herm\rmat{V}_l\right]$.
\end{proposition}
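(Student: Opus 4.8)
The plan is to specialize Proposition~\ref{prop:TMMSE} to the local CSI case $S_l = \hat{\rvec{H}}_l$ and show that the abstract fixed-point characterization \eqref{eq:TMMSE} collapses to the explicit closed form \eqref{eq:LTMMSE}. First I would invoke Proposition~\ref{prop:TMMSE}, which already guarantees that the unique solution to Problem~\eqref{eq:MSE} is the unique $\rvec{v}_k\in\set{T}_k$ satisfying the coupled system \eqref{eq:TMMSE}. Since the objective and constraints are unchanged, it suffices to verify that the ansatz \eqref{eq:LTMMSE}, namely $\rvec{v}_{l,k} = \rvec{V}_l\vec{c}_{l,k}$ with the deterministic vectors $\vec{c}_{l,k}$ solving the stated linear system, satisfies \eqref{eq:TMMSE}; uniqueness then identifies it as \emph{the} solution. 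The core structural insight is that under local CSI the conditional expectation $\E[\,\cdot\,|S_l] = \E[\,\cdot\,|\hat{\rvec{H}}_l]$ decouples a per-AP random matrix stage $\rvec{V}_l$ from a purely statistical, deterministic vector stage $\vec{c}_{l,k}$.

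Concretely, I would substitute $\rvec{v}_{j,k} = \rvec{V}_j\vec{c}_{j,k}$ into the corrective term $\E[\hat{\rmat{H}}_j^\herm \rvec{v}_{j,k}\mid S_l]$ appearing in \eqref{eq:TMMSE}. For $j\neq l$, the key step exploits the independence of $(\hat{\rvec{H}}_j,\rvec{Z}_j)$ across APs established in Section~\ref{ssec:canonical_model}: since $S_l = \hat{\rvec{H}}_l$ is independent of $\hat{\rvec{H}}_j$, the conditional expectation reduces to an unconditional one,
\begin{equation*}
\E\!\left[\hat{\rmat{H}}_j^\herm \rvec{V}_j\vec{c}_{j,k}\,\middle|\,\hat{\rvec{H}}_l\right] = \E\!\left[\hat{\rmat{H}}_j^\herm \rvec{V}_j\right]\vec{c}_{j,k},
\end{equation*}
because $\vec{c}_{j,k}$ is deterministic and $\hat{\rmat{H}}_j^\herm\rvec{V}_j$ is a function of $\hat{\rvec{H}}_j$ alone, hence independent of $\hat{\rvec{H}}_l$. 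Recognizing $\vec{P}^{1/2}\E[\hat{\rmat{H}}_j^\herm\rvec{V}_j] = \vec{\Pi}_j^\herm$ — or, matching the paper's convention $\vec{\Pi}_j = \E[\vec{P}^{1/2}\hat{\rmat{H}}_j^\herm\rmat{V}_j]$, directly $\vec{\Pi}_j$ — I would rewrite \eqref{eq:TMMSE} as $\rvec{V}_l\vec{c}_{l,k} = \rvec{V}_l(\vec{e}_k - \sum_{j\in\set{L}_k\setminus\{l\}}\vec{\Pi}_j\vec{c}_{j,k})$ almost surely. Since this must hold for almost every realization and $\rvec{V}_l$ is a generic (nondegenerate) random matrix, I would argue it forces the deterministic bracket identity $\vec{c}_{l,k} = \vec{e}_k - \sum_{j\in\set{L}_k\setminus\{l\}}\vec{\Pi}_j\vec{c}_{j,k}$, which is exactly the stated linear system; the case $l\notin\set{L}_k$ gives $\vec{c}_{l,k}=\vec{0}$ from the user-centric clustering convention.

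The main obstacle I anticipate is the final deduction that the almost-sure equality $\rvec{V}_l\vec{c}_{l,k} = \rvec{V}_l(\vec{e}_k - \sum_j\vec{\Pi}_j\vec{c}_{j,k})$ of random vectors implies equality of the deterministic coefficient vectors. One cannot simply cancel $\rvec{V}_l$, since it is a random matrix that need not have full column rank pointwise. The cleanest route is to recall from the proof of Proposition~\ref{prop:TMMSE} that $\rvec{v}_{l,k}$ must lie in the range of the local MMSE operator and that \eqref{eq:TMMSE} is a necessary-and-sufficient characterization; substituting the ansatz reduces \eqref{eq:TMMSE} to the requirement that $\vec{c}_{l,k}$ be the \emph{deterministic} argument fed into $\rvec{V}_l$, so that matching the deterministic vector inside the parentheses suffices and is in fact the intended reading of \eqref{eq:TMMSE} under local CSI. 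To make this rigorous I would note that by construction $\vec{c}_{l,k}$ is $S_l$-measurable (indeed constant), so plugging the ansatz back in and appealing to the uniqueness already granted by Proposition~\ref{prop:TMMSE} closes the argument: once a candidate satisfying \eqref{eq:TMMSE} is exhibited, it is automatically the unique solution. Finally I would confirm that the stated linear system is nonsingular — so that $\vec{c}_{l,k}$ is well defined — which follows from the same positive-definiteness and well-posedness structure underpinning the existence and uniqueness in Proposition~\ref{prop:TMMSE}, guaranteeing that the block operator $I + (\vec{\Pi}_j)_{j\neq l}$ acting on the stacked coefficients is invertible.
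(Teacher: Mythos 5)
Your proposal is correct and follows essentially the same route as the paper, which likewise proves the result by verifying that the ansatz \eqref{eq:LTMMSE} satisfies the optimality conditions \eqref{eq:TMMSE} of Proposition~\ref{prop:TMMSE} --- using the cross-AP independence of $(\hat{\rvec{H}}_j,\rvec{Z}_j)$ to turn $\E[\hat{\rmat{H}}_j^\herm\rvec{v}_{j,k}\,|\,S_l]$ into $\vec{\Pi}_j\vec{c}_{j,k}$ --- and then invoking the uniqueness already guaranteed by that proposition (the paper omits the details, deferring to \cite[Theorem~4]{miretti2021team}). Note that your worry about cancelling $\rvec{V}_l$ dissolves entirely in the verification direction: once $\vec{c}_{l,k}$ solves the stated linear system, the bracket in \eqref{eq:TMMSE} equals $\vec{c}_{l,k}$ identically, so the almost-sure equality holds trivially and no pointwise rank argument is needed.
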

\begin{proof}
The proof follows by verifying that \eqref{eq:LTMMSE} satisfies the optimality conditions \eqref{eq:TMMSE}. The details are similar to the full data sharing and sum power constraint case in \cite[Theorem~4]{miretti2021team}, and are reported in Appendix~\ref{app:local}.
\end{proof}
The computation of the local MMSE stage in \eqref{eq:LTMMSE} requires local instantaneous CSI, local statistical CSI, and  the parameters $(\vec{p},\vec{\sigma})$ selected based on global statistical CSI as for \eqref{eq:CMMSE}. The computation of the statistical precoding stage, whose role is to optimally enhance the local MMSE precoding decisions by exploiting additional statistical features of the channel and CSI of the other APs, requires global statistical CSI. A possible implementation could be to let a central processing unit perform the computation of the statistical precoding stages and the selection of $(\vec{p},\vec{\sigma})$, and to let the APs perform the remaining computations locally, without any instantaneous CSI sharing. Interestingly, we observe that the matrix $\vec{\Pi}_l$ for $l\in \set{L}$ takes the form of a $K\times K$ covariance matrix that can be locally estimated at the $l$th AP using local CSI only, and then shared on a long-term basis for the computation of the statistical precoding stages. As observed in \cite{miretti2021team} for the simpler case of a sum power constraint, \eqref{eq:LTMMSE} may provide significant performance gain over the local MMSE stage alone (where, for each $l\in \set{L}$ and $k\in \set{K}$, $\vec{c}_{l,k}$ is replaced by $c_{l,k}\vec{e}_k$ for a single tunable large-scale fading coefficient $c_{l,k}\in \stdset{R}_+$ \cite{demir2021foundations}) for non-zero mean channel models and/or in the presence of pilot contamination.

\section{Numerical examples}
\label{sec:sim}
In this section we illustrate some applications of our results by simulating the performance of optimal joint precoding in a simple example of user-centric cell-free massive MIMO network under either centralized or local CSI. Note that due to the theoretical nature and space limitations of this paper, the numerical results that follow are for illustrative purposes only and should by no means be understood as an exhaustive evaluation of the performance of cell-free networks.

\subsection{Simulation setup}
\label{ssec:sim_setup}
\begin{figure}
\centering
\includegraphics[width=0.8\columnwidth]{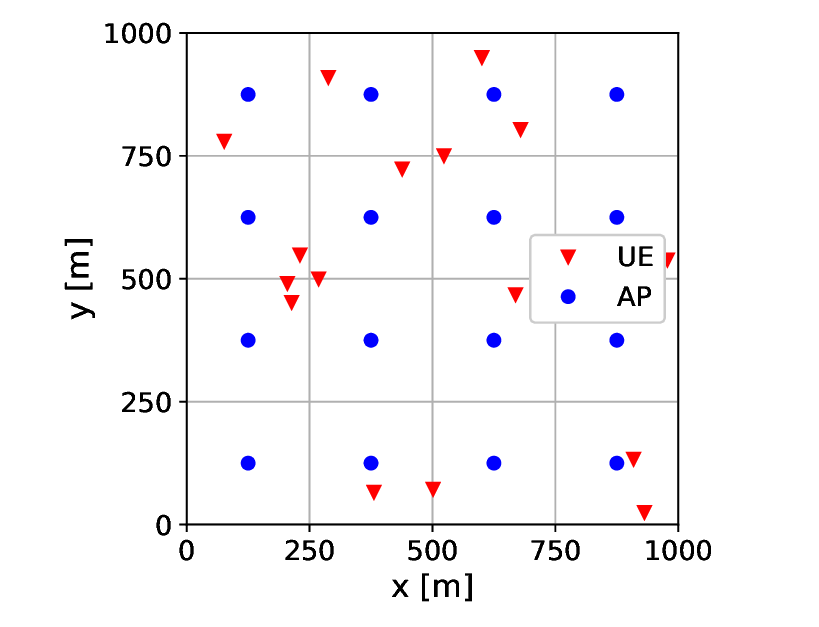}
\caption{Pictorial representation of the simulated setup: $K=16$ UEs uniformly distributed within a squared service area of size $1\times 1~\text{km}^2$, and $L=16$ regularly spaced APs with $N=4$ antennas each. Each UE is jointly served by a cluster of $Q=4$ APs offering the strongest channel gains.}
\label{fig:network}
\end{figure}
We consider the downlink of the network depicted in Figure~\ref{fig:network}, where $K=16$ UEs are independently and uniformly distributed within a squared service area of size $1\times 1~\text{km}^2$, and served by $L=16$ regularly spaced APs with $N=4$ antennas each. For all $(l,k)\in \set{L}\times \set{K}$, we assume for simplicity zero mean uncorrelated channels, i.e., $\vec{\mu}_{l,k} = \vec{0}$ and $\vec{K}_{l,k} = \kappa_{l,k}\vec{I}_N$, where $\kappa_{l,k}$ denotes the channel gain between AP~$l$ and UE~$k$. We adopt the following 3GPP-like channel gain model, suitable for an urban microcell scenario at $3.7$~GHz carrier frequency \cite[Table 7.4.1-1]{3GPP}
\begin{equation*}
\kappa_{l,k} = -35.3 \log_{10}\left(\Delta_{l,k}/1 \; \mathrm{m}\right) -34.5 + Z_{l,k} -P_{\mathrm{noise}} \quad \text{[dB]},
\end{equation*}
where $\Delta_{l,k}$ is the distance between AP $l$ and UE $k$ including a difference in height of $10$ m, and $Z_{l,k}\sim \mathcal{N}(0,\rho^2)$ [dB] are shadow fading terms with deviation $\rho = 7.82$. The shadow fading is correlated as $\E[Z_{l,k}Z_{j,i}]=\rho^22^{-\frac{\delta_{k,i}}{13 \text{ [m]}}}$ for all $l=j$ and zero otherwise, where $\delta_{k,i}$ is the distance between UE $k$ and UE $i$. The noise power is 
$P_{\mathrm{noise}} = -174 + 10 \log_{10}(B) + F$ [dBm],
where $B = 100$ MHz is the bandwidth, and $F = 7$ dB is the noise figure. The per-AP power constraints are set to $(\forall l \in\set{L})~P_l = 30$ dBm.

We assume that each UE~$k\in\set{K}$ is served by its $Q=4$ strongest APs only, i.e., by the subset of APs indexed by $\set{L}_k\subseteq \set{L}$, where each set $\set{L}_k$ is formed by ordering $\set{L}$ w.r.t. decreasing $\kappa_{l,k}$ and by keeping only the first $Q$ elements. Finally, we assume that the APs estimate the small-scale fading channel coefficients of the served UEs only. In particular, we consider the following simple model where  the channel coefficients are either perfectly known or completely unknown:
\begin{equation*}
(\forall k \in\set{K})~\hat{\rvec{h}}_{l,k} \eqdef \begin{cases}
\rvec{h}_{l,k} & \text{if } l \in \set{L}_k,\\
\E[\rvec{h}_{l,k}] & \text{otherwise}.
\end{cases}
\end{equation*}
The above model, sometimes referred to as the \textit{ideal partial CSI} model \cite{gottsch2022subspace}, neglects the impact of estimation noise and pilot contamination in the local channel acquisition step. We chose this model for simplicity and to keep our simulations focused on the impact of the chosen CSI sharing pattern, which is known to have a major effect on system performance \cite{demir2021foundations}. Nevertheless, it has been shown in \cite{gottsch2022subspace} that this model can lead to very accurate approximations of the performance of real-world systems using practical pilot-based over-the-uplink channel estimation schemes.

\subsection{Numerical methods for optimal joint precoding design}
The simulations in this section are produced by solving Problem~\eqref{prob:DL_QoS} using the numerical methods given by Proposition~\ref{lem:proj_grad} and Proposition~\ref{lem:fixed_point}, combined as nested loops. In particular, building on Proposition~\ref{prop:UL-DL}, the subgradients $\vec{g}(\vec{\lambda})$ in Proposition~\ref{lem:proj_grad} are computed by solving the downlink problem $\inf_{\set{T}_{\vec{\gamma}}}\sum_{k=1}^K\E[\|\rvec{t}_k\|_{\vec{1}+\vec{\lambda}}^2]$. This downlink problem is solved through its dual uplink problem \eqref{prob:dualUL} with $\vec{\sigma} = \vec{1}+\vec{\lambda}$, using the fixed-point algorithm in Proposition~\ref{lem:fixed_point}. Algorithm~\ref{algo} presents the considered routine under general information constraints. In our simulations, we specialize Algorithm~\ref{algo} to centralized and local precoding by performing the MSE minimization step in line 7 using the closed-form expressions given by \eqref{eq:CMMSE} in Proposition~\ref{prop:CMMSE} and by \eqref{eq:LTMMSE} in Proposition~\ref{prop:LTMMSE}, respectively. These expressions are evaluated under the clustering and local channel acquisition model presented in Section~\ref{ssec:sim_setup}, and by drawing a sample set of $100$ independent CSI realizations for approximating the expectations via empirical averages. An important practical aspect to underline is that, given that we are dealing with (infinite dimensional) functional optimization problems, storing the function $\rvec{v}_k$ in line 7 during the execution of the algorithm means storing its long-term parameters such as the statistical precoding stages $\vec{c}_{l,k}$ in \eqref{eq:LTMMSE}. 

Although our simulations are based on the simplified setup in Section~\ref{ssec:sim_setup}, we remark that the centralized and local precoding expressions \eqref{eq:CMMSE} and \eqref{eq:LTMMSE} can be readily applied to more general settings covering different clustering and local channel acquisition models. More precisely, this can be done by simply adapting the definitions of $\set{L}_k$, $\hat{\rvec{h}}_{l,k}$, and $\vec{\Psi}_{l,k}$ following the more general model in Section~\ref{ssec:canonical_model}. Furthermore, Algorithm~\ref{algo} can also be applied to more involved settings beyond centralized and local precoding, by solving the optimality conditions \eqref{eq:TMMSE} in Proposition~\ref{prop:TMMSE} as done, e.g., in \cite{miretti2021team, miretti2021team2} for the case of unidirectional CSI sharing. Moreover, we point out that Algorithm~\ref{algo} is by no means the only possible method for solving Problem~\eqref{prob:DL_QoS}. For example, the inner fixed-point iterations in line 9 could be replaced with faster methods for computing the fixed-point of the standard interference mapping $T_{\vec{\sigma}}$ \cite{renato2016elementary,boche2008}. In addition, the convergence of the outer projected subgradient iterations in line 19 can be improved by using known acceleration techniques \cite{polyak1969minimization,nesterov2003introductory}, or simply by choosing step size rules different than the current choice $(\forall i\in \stdset{N})~\alpha^{(i)} = \alpha/\sqrt{i}$. To avoid digressing from the main contribution of this study, i.e., the characterization of the optimal solution structure in Section~\ref{sec:applications}, we leave further analysis on the aforementioned extensions for future works. 

\subsection{Probability of feasibility}
\begin{figure}
\centering
\includegraphics[width=1\columnwidth]{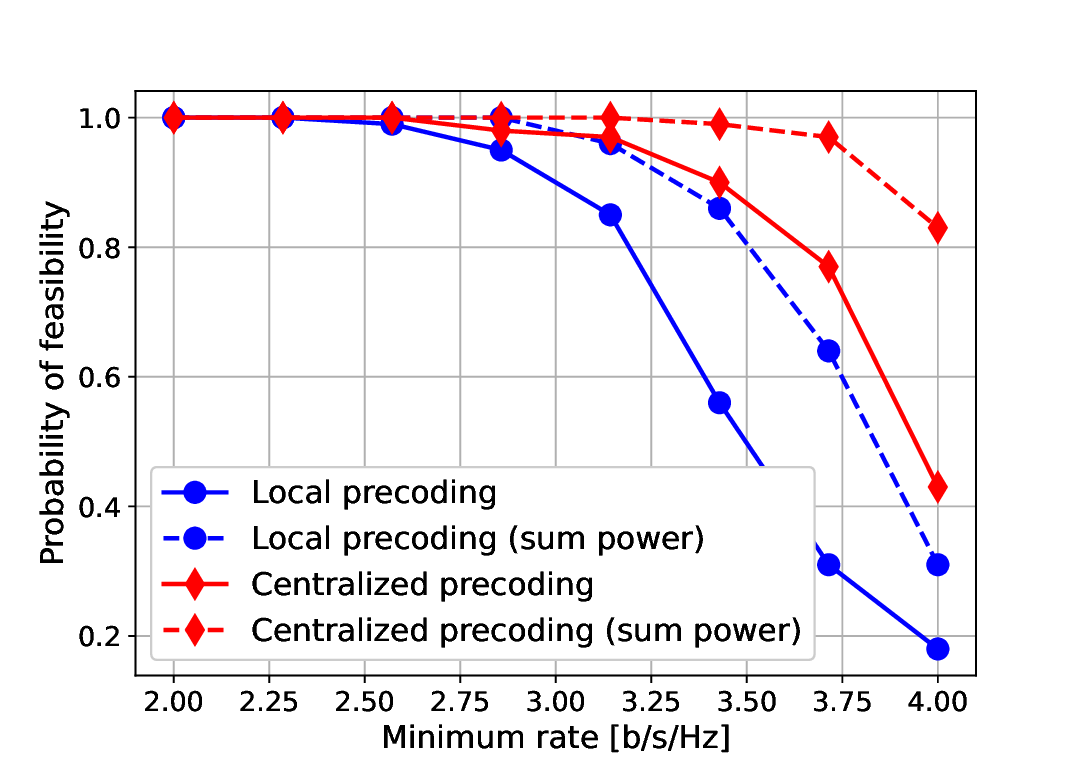}
\caption{Probability of feasibility of different minimum rate requirements, under different information constraints and a per-AP power constraint $P_l= 30$ dBm. Remarkably, our feasibility analysis covers optimal joint precoding under user-centric network clustering and either local CSI (local precoding) or centralized CSI (centralized precoding). The performance of the corresponding solutions under a sum power constraint $\sum_{l=1}^L P_l$ are also evaluated. As expected, due to the more restrictive information constraint, local precoding offers worse performance than centralized precoding. Similarly, a per-AP power constraints offers worse performance than a sum power constraint. Nevertheless, we observe that the local precoding with per-AP power constraint still robustly supports fairly high rates around $2.5$ b/s/Hz to all UEs.}
\label{fig:rate_feas}
\end{figure}
Figure~\ref{fig:rate_feas} plots the probability that the feasible set of Problem \eqref{prob:DL_QoS} is nonempty, by letting $(\forall k \in \set{K})~\gamma_k = \gamma$ for different choices of $\gamma$, corresponding to minimum rate requirements within $1$-$4$ b/s/Hz for all UEs.  We consider both centralized precoding as in Proposition~\ref{prop:CMMSE} and local precoding as in Proposition~\ref{prop:LTMMSE}. As a baseline, we also consider the  corresponding optimal precoders subject to a sum power constraint $P_{\text{sum}} = \sum_{l=1}^LP_l$. The probability of feasibility is approximated by solving $100$ instances of Problem~\eqref{prob:DL_QoS} for $100$ independent random UE drops using Algorithm~\ref{algo}. Although developed for producing a solution to Problem~\eqref{prob:DL_QoS} under the assumption of strict feasibility, we use the same algorithm for testing feasibility by monitoring its evolution. Note that, for the considered setup, the event of having a feasible set with empty interior has zero probability mass, hence we test feasibility by detecting the events of strict feasibility and infeasibility only. Additional details on how the algorithm performs this test are given below.
\begin{figure}[!t]
\centering
\subfloat[]{\includegraphics[width=3.5in]{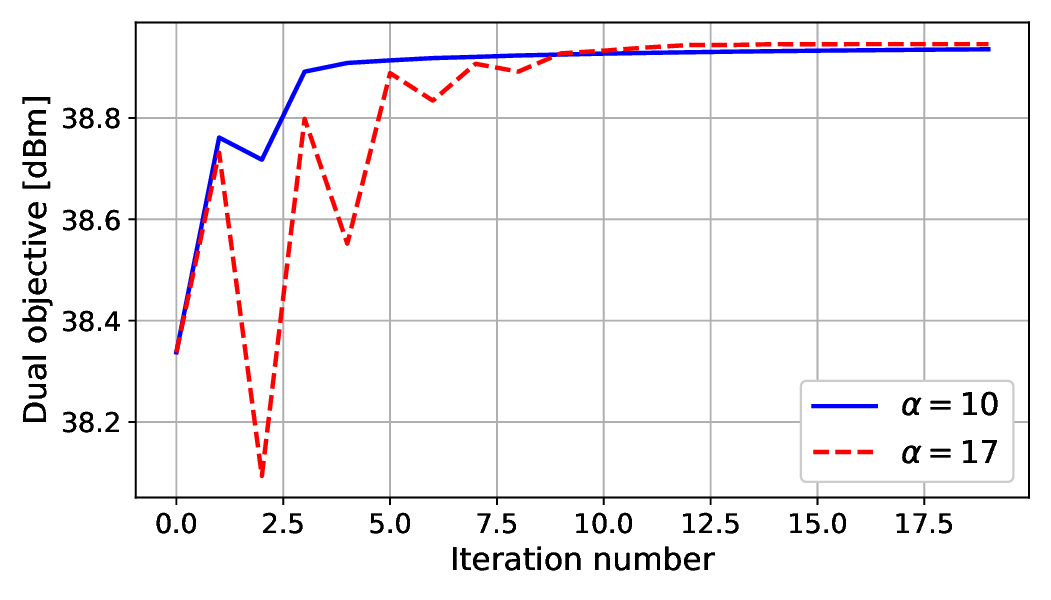}}
\hfil
\subfloat[]{\includegraphics[width=3.5in]{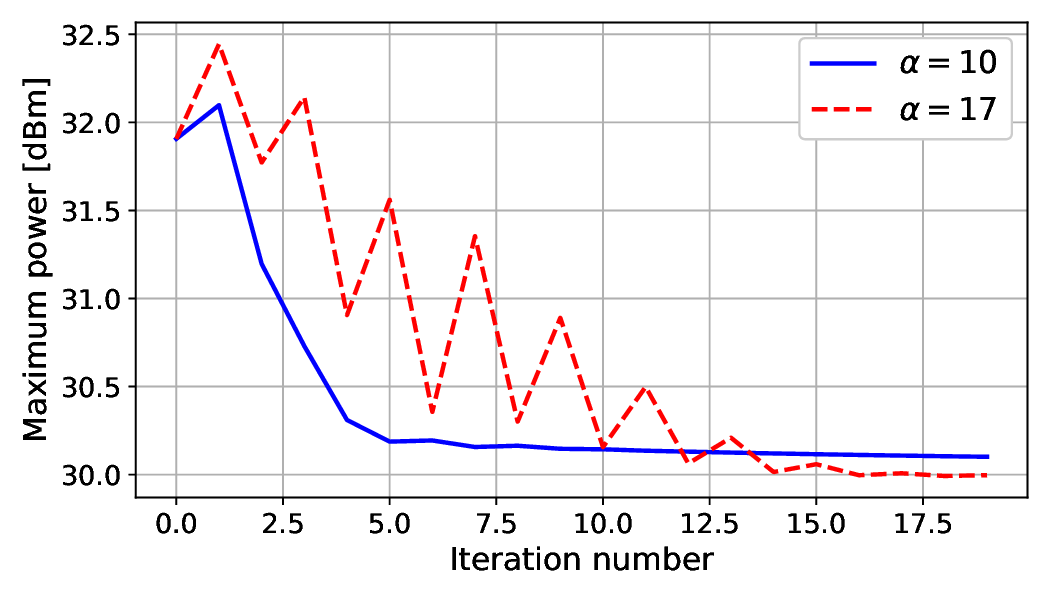}}
\caption{Example of convergence behavior of the proposed algorithm for different step size constants $\alpha$. We consider an arbitrary UE drop, local precoding, and a minimum rate requirement of $3.5$ b/s/Hz. For each iteration $i\in \stdset{N}$ of the outer loop, we plot (a) the dual objective $\tilde{d}(\vec{\lambda}^{(i)})$, and (b) the maximum transmit power over all APs $\max_{l\in\set{L}}g_l(\vec{\lambda}^{(i)})+P_l$. The non-monotonic convergence is a common feature of projected subgradient methods. We observe that, despite a seemingly slower convergence in the first iterations, the more aggressive step size choice $\alpha = 17$ produces a feasible solution satisfying the per-AP power constraint $(\forall l\in\set{L})~P_l = 30$ dBm after $20$ iterations only. This is enough to declare feasibility, since, for each outer iteration, the inner loops ensure that the SINR constraints are always satisfied.}
\label{fig:convergence}
\end{figure}

We first test feasibility under a sum power constraint, i.e., we test if $\inf_{\set{T}_{\vec{\gamma}}}\sum_{k=1}^K\E[\|\rvec{t}_k\|^2]\leq P_{\text{sum}}$ holds. To this end, we initialize the outer loop with $\vec{\lambda}^{(1)} = \vec{0}$, and the inner loop with $(\forall k \in \set{K})~p^{(1)}_k = \gamma_k/\sum_{l\in\set{L}}\kappa_{l,k}$. This choice ensures that $\vec{p}^{(1)}\leq T_{\vec{1}}(\vec{p}^{(1)})$ holds, and hence, by known properties of the considered fixed-point algorithm \cite[Fact~4]{renato2016elementary}, $(\sum_{k\in\set{K}}p_k^{(i)})_{i\in\stdset{N}}$ is a monotonically increasing sequence  converging to $\inf_{\set{T}_{\vec{\gamma}}}\sum_{k=1}^K\E[\|\rvec{t}_k\|^2]$ if $\set{T}_{\vec{\gamma}}\neq \emptyset$, or diverging to $+\infty$ if $\set{T}_{\vec{\gamma}}=\emptyset$ (unfeasible SINR requirements). The inner loop is terminated at some step $i\in \stdset{N}$ if no significant progress is observed, in which case we declare the problem feasible under a sum power constraint, or if the early stop condition $\sum_{k\in\set{K}}p_k^{(i)} > P_{\mathrm{sum}}$ is met, in which case we declare infeasibility under a sum power constraint.

If this initial feasibility test is passed, we continue with the outer loop, using the step size constant $\alpha=10$. Since the condition $\set{T}_{\vec{\gamma}}\neq \emptyset$ was already detected, the inner loops are guaranteed to compute the partial dual function $\tilde{d}(\vec{\lambda})=\inf_{\set{T}_{\vec{\gamma}}}\sum_{k=1}^K\E[\|\rvec{t}_k\|_{\vec{1}+\vec{\lambda}}^2]-\sum_{l=1}^L\lambda_lP_l$ and its subgradient $\vec{g}(\vec{\lambda})$ for all $\vec{\lambda}\in\stdset{R}_{+}^L$, up to some numerical tolerance. The outer loop is terminated at some step $i\in \stdset{N}$ if no significant progress is observed, or if the early stop condition $\tilde{d}(\vec{\lambda}^{(i)})>P_{\mathrm{sum}}$ is met. In the former case, and if the obtained solution is feasible up to some numerical tolerance, we declare feasibility. In all other cases, we declare infeasibility. In fact, if feasible, the obtained precoders are also a solution to Problem~\eqref{prob:DL_QoS}, i.e., a minimum sum power solution. In practice, when the interest is to test only feasibility, we terminate the algorithm as soon as a feasible solution is detected, i.e., if the additional early stop condition $\vec{g}(\vec{\lambda}^{(i)})\leq \vec{0}$ is met at some step $i\in\stdset{N}$. Mostly because of the heuristic step size constant $\alpha$, we remark that the algorithm might have slow convergence for some UE drops (see Figure~\ref{fig:convergence} for an illustration of the convergence behavior for a particular UE drop). Hence, we also introduce a maximum number of iterations and a maximum number of algorithm restarts with different constants $\alpha$. If a conclusion is not reached within these thresholds, we remove the corresponding UEs drop.

\section{Conclusion and future directions}
\label{sec:conclusion}
This study marks a step forward in the process of extending known analytic tools for deterministic channels and instantaneous rate expressions to fading channels and ergodic rate expressions. Such extensions are often advocated in the cellular and cell-free massive MIMO literature because they  allow for a more refined analysis of modern networks covering practical aspects such as imperfect CSI and system optimization based on long-term channel statistics. Just as coding over fading realizations is an efficient way to manage fading dips, system optimization based on the long-term perspective typically lead, in our opinion, to more efficient solutions. 

As first main result, this study advances the current understanding of distributed cell-free networks by showing that the recently introduced team MMSE precoding method provides joint precoders that are optimal not only under a sum power constraint, as stated in \cite{miretti2021team}, but also under a per-AP power constraint. For illustration purposes, this study derives the structure of optimal local precoding under a per-AP power constraint. Although not explicitly covered in this study, an optimal structure can be derived also for other examples of distributed precoding such as the ones based on sequential information sharing over radio stripes \cite{miretti2021team,miretti2021team2}. An interesting future direction is thus to revisit the available studies on performance evaluation of distributed cell-free downlink implementations, in light of the above results.

As second main result, this study provides an alternative tool to \cite{yu2007transmitter} for designing centralized cell-free networks subject to per-AP power constraints, by demonstrating optimality of centralized MMSE precoding with parameters tuned only once for many channel realizations, instead of for each channel realization as in the studies based on \cite{yu2007transmitter}. This is a consequence of considering the hardening bound as figure of merit. Although the hardening bound generally gives more pessimistic capacity estimates than alternative ergodic rate bounds based on coherent or semi-coherent decoding \cite{caire2018ergodic}, in our opinion this drawback is counterbalanced by a significantly enhanced tractability for precoding optimization.

A key aspect of our results is the compact parametrization of optimal joint precoding in terms of a set of coefficients $(\vec{p},\vec{\sigma})$, interpreted as virtual uplink transmit and noise powers. One limitation of this study is that the presented algorithm for tuning these parameters can be quite slow, and it is designed to solve SINR feasibility problems only (more specifically, Problem~\eqref{prob:DL_QoS}). Thus, another interesting future direction is the development of efficient algorithms, perhaps based on heuristics such as neural networks or other statistical learning tools, for tuning $(\vec{p},\vec{\sigma})$ in network utility maximization problems such as sum rate or minimum rate maximization.

Another subtle limitation of this work lies in the assumption of strict feasibility for the constraints of Problem~\eqref{prob:DL_QoS}, which is used for ensuring Slater's condition to hold in Proposition~\ref{prop:duality}. Using an information-theoretical perspective, this means that our results do not cover optimal joint precoding for rate tuples lying on the boundary of the achivable rate region, but only on its interior. However, from an engineering perspective, this limitation is practically irrelevant, since rate tuples arbitrarily close to the boundary are still covered. Nevertheless, resolving this limitation is an interesting yet involved future direction for theoretical analysis. 
 
\appendix
\subsection{Convex reformulation}\label{app:convex_reformulation}
Consider the reformulation of Problem~\eqref{prob:DL_QoS} obtained by replacing each (rearranged) SINR constraint in \eqref{eq:SINR_rearranged} with 
\begin{equation}\label{eq:SINR_reformulation}
(\forall k \in \set{K})~|c_k(\rvec{T})|-\nu_k\Re\left(b_k(\rvec{T})\right)\leq 0.
\end{equation}
More precisely, consider
\begin{equation}\label{prob:convex_reformulation}
\begin{aligned}
\underset{\rvec{T} \in \set{T}} {\text{minimize}} \quad &  \sum_{k=1}^K\E[\|\rvec{t}_{k}\|^2]\\
\text{subject to} \quad & (\forall k \in \set{K})~ |c_k(\rvec{T})|-\nu_k\Re\left(b_k(\rvec{T})\right)\leq 0\\
& (\forall l \in \set{L})~\sum_{k=1}^K\E[\|\rvec{t}_{l,k}\|^2]\leq P_l
\end{aligned}
\end{equation}
and its Lagrangian dual problem  
\begin{equation}\label{prob:dual_reformulation}
\underset{(\vec{\lambda},\vec{\mu})\in \stdset{R}_+^L\times \stdset{R}_+^K} {\text{maximize}}~d'(\vec{\lambda},\vec{\mu}),
\end{equation}
where we define the dual function $d'(\vec{\lambda},\vec{\mu}) \eqdef \inf_{\rvec{T} \in \set{T}} \Lambda'(\rvec{T},\vec{\lambda},\vec{\mu})$ and the Lagrangian $\Lambda'(\rvec{T},\vec{\lambda},\vec{\mu})\eqdef$
\begin{equation*}
\sum_{k=1}^K \E[\|\rvec{t}_k\|_{\vec{1}+\vec{\lambda}}^2] -\sum_{l=1}^L \lambda_l P_l + \sum_{k=1}^K\mu_k(|c_k(\rvec{T})|-\nu_k\Re(b_k(\rvec{T}))).
\end{equation*}
The main advantage of the above reformulation is  that it gives a convex optimization problem, as shown next.
\begin{lemma}\label{lem:convexity}
The objective and all constraints of Problem~\eqref{prob:convex_reformulation} are proper convex functions.\end{lemma}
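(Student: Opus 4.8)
The plan is to verify properness and convexity separately for the three families of functions appearing in Problem~\eqref{prob:convex_reformulation}: the objective $\sum_{k}\E[\|\rvec{t}_k\|^2]$, the $L$ power-constraint functions $\sum_k \E[\|\rvec{t}_{l,k}\|^2]-P_l$, and the $K$ reformulated SINR-constraint functions $|c_k(\rvec{T})|-\nu_k\Re(b_k(\rvec{T}))$. The first two families are quadratic and essentially immediate; the real content lies in the SINR constraints, where the whole purpose of replacing $|b_k(\rvec{T})|$ by $\Re(b_k(\rvec{T}))$ in passing from \eqref{eq:SINR_rearranged} to \eqref{eq:SINR_reformulation} is to convert a difference of two convex functions (generally nonconvex) into a genuinely convex function.

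For the objective and the power constraints, I would first observe that $\rvec{x}\mapsto\E[\|\rvec{x}\|^2]$ is the square of the norm induced by the $L^2$ inner product on the real vector space $\set{H}^N$ (equivalently, a nonnegative quadratic form), and is therefore convex. Convexity of the objective and of each power-constraint function then follows because finite sums of convex functions are convex and subtracting the constant $P_l$ preserves convexity. Properness is a formality: these functions never take the value $-\infty$ (being sums of nonnegative terms minus constants) and are finite on the strictly feasible set, which is nonempty by assumption.

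The substantive step is convexity of $\rvec{T}\mapsto|c_k(\rvec{T})|$. The idea is to absorb the additive constant inside the square root into one extra real coordinate, so that $|c_k|$ becomes a norm evaluated on an affine image of $\rvec{T}$. Concretely, $Q_k(\rvec{T})\eqdef\sum_{j=1}^K\E[|\rvec{h}_k^\herm\rvec{t}_j|^2]$ is a nonnegative quadratic form on the real vector space $\set{T}$, so it admits a representation $Q_k(\rvec{T})=\|B_k\rvec{T}\|^2$ for some linear map $B_k$ from $\set{T}$ into a real inner-product space $\set{W}$ (abstractly, the quotient of $\set{T}$ by the kernel of $Q_k$ with the inner product induced by $Q_k$). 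Endowing $\set{W}\oplus\stdset{R}$ with the Euclidean-sum norm, I would then write
\begin{equation*}
|c_k(\rvec{T})|=\sqrt{Q_k(\rvec{T})+1}=\left\|(B_k\rvec{T},\,1)\right\|_{\set{W}\oplus\stdset{R}}.
\end{equation*}
Since $\rvec{T}\mapsto(B_k\rvec{T},1)$ is affine and every norm is convex, the composition $|c_k|$ is convex. Finally, $b_k(\rvec{T})=\E[\rvec{h}_k^\herm\rvec{t}_k]$ is $\stdset{R}$-linear in $\rvec{T}$, so $\Re(b_k(\rvec{T}))$ is a real-linear functional on $\set{T}$ and $-\nu_k\Re(b_k(\rvec{T}))$ is (affine, hence) convex; adding it to the convex $|c_k|$ keeps the constraint function convex. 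Properness again follows, since $|c_k|\geq0$ is finite and $\Re(b_k)$ is finite by Cauchy--Schwarz ($|b_k|\leq\sqrt{\E[\|\rvec{h}_k\|^2]\,\E[\|\rvec{t}_k\|^2]}<\infty$).

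I expect the only delicate point to be the representation $Q_k=\|B_k\cdot\|^2$ together with the dimension-augmentation that turns $\sqrt{Q_k+1}$ into the norm of an affine map; once this identification is in place, convexity is the textbook rule that a norm composed with an affine map is convex, and everything else is bookkeeping. It is worth stressing that this is exactly where the reformulation pays off: the original term $-\nu_k|b_k(\rvec{T})|$ is concave, so $|c_k|-\nu_k|b_k|$ need not be convex, whereas replacing $|b_k|$ by its real part restores convexity of each constraint.
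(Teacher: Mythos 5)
Your proof is correct and follows essentially the same route as the paper: both arguments hinge on writing $|c_k(\rvec{T})|$ as a norm composed with an affine map whose extra coordinate absorbs the additive constant $1$, together with linearity of $\Re(b_k(\cdot))$ and Cauchy--Schwarz for properness. The only difference is cosmetic: where you abstractly factor the quadratic form $Q_k$ through its kernel quotient to obtain $B_k$, the paper realizes the same representation concretely via the affine map $\set{T}\to\set{H}^{K+1}:\rvec{T}\mapsto[\rvec{h}_k^\herm\rvec{t}_1,\ldots,\rvec{h}_k^\herm\rvec{t}_K,1]^\T$ with the inner product $\langle\rvec{x},\rvec{y}\rangle=\Re\{\E[\rvec{y}^\herm\rvec{x}]\}$ on $\set{H}^{K+1}$, which makes the identification immediate and sidesteps the quotient construction.
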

\begin{proof}
Consider the norm $\| \cdot \|_{\star} \eqdef \sqrt{\langle \cdot,\cdot\rangle}$ on $\set{H}^{K+1}$ induced by the inner product $(\forall \rvec{x},\rvec{y} \in \set{H}^{K+1})~\langle \rvec{x},\rvec{y}\rangle := \Re\{\E[\rvec{y}^\herm\rvec{x}]\}$. We note that $|c_k(\rvec{T})|= \sqrt{\sum_{j=1}^K\E[|\rvec{h}_k^\herm\rvec{t}_j|^2]+1}$ is given by the composition of $\|\cdot\|_{\star}$ with an affine map $\set{T} \to \set{H}^{K+1}: \rvec{T}\mapsto [\rvec{h}_k^\herm\rvec{t}_1,\ldots,\rvec{h}_k^\herm\rvec{t}_K,1]^\T$, hence it is convex. Furthermore, since $\Re\left(b_k(\rvec{T})\right)=\Re\left(\E[\rvec{h}_k^\herm\rvec{t}_k]\right)$ is linear, convexity of the reformulated SINR constraints readily follows. We omit the proof for the convexity of the objective and power constraints, since it is trivial. Finally, repeated applications of Cauchy-Schwarz inequality prove that all the aforementioned functions are also proper functions.
\end{proof}

The next simple lemma can be used to relate Problem~\eqref{prob:DL_QoS} to Problem~\eqref{prob:convex_reformulation}, following a similar idea in \cite{wiesel2006linear,yu2007transmitter}.
\begin{lemma}\label{lem:phase_invariance}
Consider an arbitrary $\rvec{T}\in \set{T}$. Then, there exists $\rvec{T}' \in \set{T}$ such that $(\forall k\in\set{K})(\forall l \in\set{L})$
\begin{equation}
\begin{split}
|b_k(\rvec{T})| &= \Re\left(b_k(\rvec{T}')\right),\\
|c_k(\rvec{T})| &= |c_k(\rvec{T}')|, \\
\E[\|\rvec{t}_{l,k}\|^2] &= \E[\|\rvec{t}_{l,k}'\|^2].
\end{split}
\end{equation}
\end{lemma}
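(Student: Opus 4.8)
The plan is to obtain $\rvec{T}'$ from $\rvec{T}$ by applying a deterministic, per-UE phase rotation that aligns each useful-signal term $b_k(\rvec{T})$ with the positive real axis while leaving every magnitude untouched. Concretely, for each $k \in \set{K}$ I would write $b_k(\rvec{T}) = |b_k(\rvec{T})|\,e^{\jmath \psi_k}$ for some phase $\psi_k \in [0,2\pi)$ (taking $\psi_k = 0$ arbitrarily when $b_k(\rvec{T}) = 0$), and then set $\rvec{t}_k' := e^{-\jmath \psi_k}\rvec{t}_k$ for each $k$ and $\rvec{T}' := [\rvec{t}_1',\ldots,\rvec{t}_K']$.

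The first and conceptually most important step is to verify that $\rvec{T}' \in \set{T}$, i.e., that the rotated precoders still satisfy the per-AP information constraints. This is the one place where the structure $\set{T}_k = \set{H}_1^N \times \cdots \times \set{H}_L^N$ is essential. Since each block obeys $\rvec{t}_{l,k}' = e^{-\jmath\psi_k}\rvec{t}_{l,k}$ and $e^{-\jmath\psi_k}$ is a deterministic constant, multiplication by $e^{-\jmath\psi_k}$ preserves both $\Sigma_l$-measurability and the finite-second-moment property; hence $\rvec{t}_{l,k}' \in \set{H}_l^N$, so $\rvec{t}_k' \in \set{T}_k$ and $\rvec{T}' \in \set{T}$. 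I would stress that it is crucial that the rotation be by a \emph{constant} (rather than random) phase: a $\Sigma$-measurable phase that is not $\Sigma_l$-measurable would in general destroy membership in $\set{H}_l^N$. This is precisely the subtlety that makes the claim nontrivial despite the elementary computation, and the reason a single phase per UE (acting on the whole column $\rvec{t}_k$) is used rather than one phase per AP.

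The remaining three identities then follow by direct substitution, using $|e^{-\jmath\psi_k}| = 1$ throughout. For the useful-signal term, pulling the deterministic scalar out of the expectation gives $b_k(\rvec{T}') = \E[\rvec{h}_k^\herm \rvec{t}_k'] = e^{-\jmath\psi_k}b_k(\rvec{T}) = |b_k(\rvec{T})|$, which is real and nonnegative, so $\Re(b_k(\rvec{T}')) = |b_k(\rvec{T})|$ as required. For the interference-plus-noise term, each summand satisfies $\E[|\rvec{h}_k^\herm \rvec{t}_j'|^2] = |e^{-\jmath\psi_j}|^2\,\E[|\rvec{h}_k^\herm \rvec{t}_j|^2] = \E[|\rvec{h}_k^\herm \rvec{t}_j|^2]$, whence $|c_k(\rvec{T}')| = |c_k(\rvec{T})|$. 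Finally, for the per-AP powers, $\E[\|\rvec{t}_{l,k}'\|^2] = |e^{-\jmath\psi_k}|^2\,\E[\|\rvec{t}_{l,k}\|^2] = \E[\|\rvec{t}_{l,k}\|^2]$.

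I do not expect any genuine difficulty beyond the membership verification above; the three equalities are routine unitary-invariance computations. The main (and only) obstacle is the measurability argument showing $\rvec{T}' \in \set{T}$, which hinges on the rotation being a fixed scalar and on the product structure of the information-constrained set. This lemma then supplies the bridge between the original Problem~\eqref{prob:DL_QoS} and its convex reformulation~\eqref{prob:convex_reformulation}, since it shows any feasible $\rvec{T}$ can be phase-aligned to a $\rvec{T}'$ attaining the same objective and power profile while meeting the reformulated constraint~\eqref{eq:SINR_reformulation} with equality in phase.
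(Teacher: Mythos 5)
Your proof is correct and takes essentially the same approach as the paper: a deterministic, per-column phase rotation $\rvec{t}_k' = e^{-\jmath\psi_k}\rvec{t}_k$ aligning each $b_k(\rvec{T})$ with the positive real axis, under which $|b_k|$, $|c_k|$, and the per-AP powers are invariant. Your only addition is the explicit verification that constant phase rotations preserve membership in $\set{T}$ (i.e., $\Sigma_l$-measurability and finite second moments), a point the paper leaves implicit but which your argument handles correctly.
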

\begin{proof}
Observe that, $(\forall k\in\set{K})(\forall l \in\set{L})$, the terms $|b_k(\rvec{T})|$, $|c_k(\rvec{T})|$, and  $\E[\|\rvec{t}_{l,k}\|^2]$ are invariant to columnwise phase rotations of the argument, i.e., they do not vary if we replace $\rvec{T}$ with $[\rvec{t}_1 e^{j\theta_1},\ldots,\rvec{t}_K e^{j\theta_K}]$ for any $(\theta_1,\ldots,\theta_K)\in [0,2\pi]^K$. In particular, we can always pick $(\theta_1,\ldots,\theta_K)\in [0,2\pi]^K$ such that $|b_k(\rvec{T})| = \Re(b_k(\rvec{T}))$ holds.
\end{proof}
\begin{lemma}\label{lem:convex_optimum}
Let $p^\star$ and $r^\star$ be the optimum of Problem~\eqref{prob:DL_QoS} and Problem~\eqref{prob:convex_reformulation}, respectively. The two problems have the same optimum, i.e., $p^\star=r^\star$.
\end{lemma}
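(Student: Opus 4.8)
The plan is to prove the two inequalities $r^\star \geq p^\star$ and $r^\star \leq p^\star$ separately, exploiting the elementary bound $\Re(z) \leq |z|$ for every $z \in \stdset{C}$, together with the phase-invariance property established in Lemma~\ref{lem:phase_invariance}.

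First I would establish $r^\star \geq p^\star$ by showing that every feasible point of Problem~\eqref{prob:convex_reformulation} is feasible for Problem~\eqref{prob:DL_QoS}. Indeed, if $\rvec{T}\in\set{T}$ satisfies $|c_k(\rvec{T})| - \nu_k \Re(b_k(\rvec{T})) \leq 0$, then since $\Re(b_k(\rvec{T})) \leq |b_k(\rvec{T})|$ we immediately obtain $|c_k(\rvec{T})| - \nu_k |b_k(\rvec{T})| \leq 0$, which by \eqref{eq:SINR_rearranged} is equivalent to the SINR constraint $\mathrm{SINR}^{\mathrm{DL}}_k(\rvec{T}) \geq \gamma_k$. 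Since the power constraints and the objective coincide in both problems, the feasible set of \eqref{prob:convex_reformulation} is contained in that of \eqref{prob:DL_QoS}, and minimizing the same objective over a smaller set can only increase its value.

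For the reverse inequality $r^\star \leq p^\star$, I would use Lemma~\ref{lem:phase_invariance} to map every feasible point of \eqref{prob:DL_QoS} to a feasible point of \eqref{prob:convex_reformulation} attaining the same objective. Given any feasible $\rvec{T} \in \set{T}$ for \eqref{prob:DL_QoS}, Lemma~\ref{lem:phase_invariance} produces $\rvec{T}' \in \set{T}$ satisfying $|b_k(\rvec{T})| = \Re(b_k(\rvec{T}'))$, $|c_k(\rvec{T})| = |c_k(\rvec{T}')|$, and $\E[\|\rvec{t}_{l,k}\|^2] = \E[\|\rvec{t}'_{l,k}\|^2]$ for all $k,l$. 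The last identity preserves both the per-AP power constraints and the objective $\sum_{k}\E[\|\rvec{t}_k\|^2]$, while the first two turn the rearranged SINR constraint $|c_k(\rvec{T})| - \nu_k |b_k(\rvec{T})| \leq 0$ into the reformulated constraint $|c_k(\rvec{T}')| - \nu_k \Re(b_k(\rvec{T}')) \leq 0$. Hence $\rvec{T}'$ is feasible for \eqref{prob:convex_reformulation} with the same objective value as $\rvec{T}$, giving $r^\star \leq p^\star$.

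Combining the two inequalities yields $p^\star = r^\star$. I expect no serious obstacle here, since both problems are feasible under the standing strict-feasibility assumption and the argument reduces to bookkeeping built on $\Re(z)\leq|z|$ and Lemma~\ref{lem:phase_invariance}; the only point requiring minor care is verifying that the phase-rotated precoder $\rvec{T}'$ remains within the information-constrained subspace $\set{T}$, which is precisely what Lemma~\ref{lem:phase_invariance} guarantees (columnwise multiplication by a deterministic scalar $e^{j\theta_k}$ preserves $\Sigma_l$-measurability of each block).
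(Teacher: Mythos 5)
Your proof is correct and follows essentially the same route as the paper: the direction $r^\star \geq p^\star$ from $\Re(z)\leq|z|$ combined with \eqref{eq:SINR_rearranged}, and the reverse direction via the phase-rotation map of Lemma~\ref{lem:phase_invariance}, which preserves feasibility and the objective. The only cosmetic difference is that the paper phrases the second direction through a minimizing sequence (since the infimum need not be attained), whereas you map each feasible point directly and then take the infimum --- an equivalent and, if anything, slightly leaner piece of bookkeeping.
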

\begin{proof}
The simple property $(\forall x \in \stdset{C})~\Re(x)\leq |x|$ shows that $(\forall k \in \set{K})(\forall \rvec{T} \in \set{T})$
\begin{equation}\label{eq:constraint_inclusion}
|c_k(\rvec{T})| - \nu_k|b_k(\rvec{T})|\leq |c_k(\rvec{T})|-\nu_k\Re\left(b_k(\rvec{T})\right),
\end{equation}
from which the inequality $r^\star \geq p^\star$ readily follows (recall also \eqref{eq:SINR_rearranged}). Then, consider a minimizing sequence  for Problem~\eqref{prob:DL_QoS}, i.e., a (not necessarily convergent) sequence $(\rvec{T}^{(n)})_{n\in\stdset{N}}$ such that $(\forall n \in \stdset{N})$ $\rvec{T}^{(n)}$ satisfies all constraints of Problem~\eqref{prob:DL_QoS}, and $\lim_{n\to\infty} \sum_{k=1}^K\E[\|\rvec{t}^{(n)}_{k}\|^2] = p^\star$ \cite[Definition~1.8]{bauschke2011convex}. By Lemma~\ref{lem:phase_invariance}, we can define another sequence $(\rvec{T}^{(n)'})_{n\in\stdset{N}}$ such that $(\forall n \in \stdset{N})$ $\rvec{T}^{(n)'}$ satisfies all constraints of Problem~\eqref{prob:convex_reformulation},  attains the same objective $\sum_{k=1}^K\E[\|\rvec{t}^{(n)'}_{k}\|^2] = \sum_{k=1}^K\E[\|\rvec{t}^{(n)}_{k}\|^2] \geq r^\star$, and hence satisfies $p^\star =  \lim_{n\to\infty} \sum_{k=1}^K\E[\|\rvec{t}^{(n)'}_{k}\|^2] \geq r^\star$. Combining both inequalities $p^\star \leq r^\star$ and $p^\star \geq r^\star$ completes the proof.
\end{proof}

\begin{lemma}\label{lem:dual_equivalence}
The dual functions $d(\vec{\lambda},\vec{\mu})$ and $d'(\vec{\lambda},\vec{\mu})$ in Problem~\eqref{prob:DL_QoS_dual} and Problem~\eqref{prob:dual_reformulation}, respectively, satisfy $(\forall (\vec{\lambda},\vec{\mu})\in \stdset{R}_+^L\times \stdset{R}_+^K)$ $d(\vec{\lambda},\vec{\mu})= d'(\vec{\lambda},\vec{\mu})$.
\end{lemma}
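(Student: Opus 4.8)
The plan is to establish the equality $d = d'$ by proving the two inequalities $d \le d'$ and $d \ge d'$ separately, in the same spirit as the primal-side argument of Lemma~\ref{lem:convex_optimum}. Write $\Lambda(\rvec{T},\vec{\lambda},\vec{\mu})$ for the Lagrangian whose infimum over $\set{T}$ defines $d$, i.e. the expression containing $|b_k(\rvec{T})|$, and recall that $d'$ is the infimum of $\Lambda'$, which differs only by the replacement of each $|b_k(\rvec{T})|$ with $\Re(b_k(\rvec{T}))$.

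First I would establish $d \le d'$. Since $\mu_k \ge 0$ and $\nu_k > 0$, the elementary inequality $\Re(b_k(\rvec{T})) \le |b_k(\rvec{T})|$ gives $-\nu_k|b_k(\rvec{T})| \le -\nu_k\Re(b_k(\rvec{T}))$, and hence the pointwise bound $\Lambda(\rvec{T},\vec{\lambda},\vec{\mu}) \le \Lambda'(\rvec{T},\vec{\lambda},\vec{\mu})$ for every $\rvec{T} \in \set{T}$. Taking the infimum over $\rvec{T}$ on both sides yields $d(\vec{\lambda},\vec{\mu}) \le d'(\vec{\lambda},\vec{\mu})$.

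For the reverse inequality $d \ge d'$, I would invoke the phase-invariance result of Lemma~\ref{lem:phase_invariance}. Fix an arbitrary $\rvec{T} \in \set{T}$ and let $\rvec{T}'$ be the columnwise phase-rotated precoder supplied by that lemma, so that $|b_k(\rvec{T})| = \Re(b_k(\rvec{T}'))$, $|c_k(\rvec{T})| = |c_k(\rvec{T}')|$, and $\E[\|\rvec{t}_{l,k}\|^2] = \E[\|\rvec{t}'_{l,k}\|^2]$ hold for all $k$ and $l$. The last identity implies $\E[\|\rvec{t}_k\|_{\vec{1}+\vec{\lambda}}^2] = \E[\|\rvec{t}'_k\|_{\vec{1}+\vec{\lambda}}^2]$, since the weighted norm in \eqref{eq:norm} is a nonnegative linear combination of the per-AP power terms. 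Substituting these identities into $\Lambda(\rvec{T},\vec{\lambda},\vec{\mu})$ converts it term by term into $\Lambda'(\rvec{T}',\vec{\lambda},\vec{\mu})$, so that $\Lambda(\rvec{T},\vec{\lambda},\vec{\mu}) = \Lambda'(\rvec{T}',\vec{\lambda},\vec{\mu}) \ge d'(\vec{\lambda},\vec{\mu})$. As this holds for every $\rvec{T}$, taking the infimum on the left gives $d(\vec{\lambda},\vec{\mu}) \ge d'(\vec{\lambda},\vec{\mu})$, and combining the two inequalities completes the argument.

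I do not anticipate a genuine obstacle here: the statement is essentially the dual-side counterpart of Lemma~\ref{lem:convex_optimum}. The only point requiring care is verifying that every term of the Lagrangian other than the $\Re$-versus-$|\cdot|$ discrepancy — in particular the weighted power term $\E[\|\rvec{t}_k\|_{\vec{1}+\vec{\lambda}}^2]$ and the interference term $|c_k(\rvec{T})|$ — is invariant under the columnwise phase rotation that realizes $|b_k(\rvec{T})| = \Re(b_k(\rvec{T}'))$. This is exactly what Lemma~\ref{lem:phase_invariance} guarantees, so the two infima are taken over objectives that agree after an appropriate re-phasing, and the equality of dual functions follows.
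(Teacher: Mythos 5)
Your proof is correct and follows essentially the same route as the paper: the inequality $d \leq d'$ via $\Re(x)\leq|x|$, and the reverse inequality via the phase rotations of Lemma~\ref{lem:phase_invariance} preserving every term of the Lagrangian except for turning $|b_k(\rvec{T})|$ into $\Re(b_k(\rvec{T}'))$. The only (harmless) difference is bookkeeping: the paper passes through a minimizing sequence for $\Lambda$, whereas you argue directly that $\Lambda(\rvec{T},\vec{\lambda},\vec{\mu}) = \Lambda'(\rvec{T}',\vec{\lambda},\vec{\mu}) \geq d'(\vec{\lambda},\vec{\mu})$ pointwise and then take the infimum, which is logically equivalent and slightly cleaner.
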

\begin{proof}
Fix $(\vec{\lambda},\vec{\mu})\in \stdset{R}_+^L\times \stdset{R}_+^K$ and define the Lagrangian of Problem~\eqref{prob:DL_QoS} $\Lambda(\rvec{T},\vec{\lambda},\vec{\mu}) :=$
\begin{equation*}
\sum_{k=1}^K \E[\|\rvec{t}_k\|_{\vec{1}+\vec{\lambda}}^2] -\sum_{l=1}^L \lambda_l P_l + \sum_{k=1}^K\mu_k(|c_k(\rvec{T})|-\nu_k|b_k(\rvec{T})|),
\end{equation*}
which satisfies $d(\vec{\lambda},\vec{\mu})= \inf_{\rvec{T} \in \set{T}}\Lambda(\rvec{T},\vec{\lambda},\vec{\mu})$. The property  $(\forall x \in \stdset{C})~\Re(x)\leq |x|$ readily gives $\Lambda(\rvec{T},\vec{\lambda},\vec{\mu})\leq \Lambda'(\rvec{T},\vec{\lambda},\vec{\mu})$, and hence $d(\vec{\lambda},\vec{\mu})\leq d'(\vec{\lambda},\vec{\mu})$. Then, consider a minimizing sequence $(\rvec{T}^{(n)})_{n\in\stdset{N}}$ such that $(\forall n \in \stdset{N})$ $\rvec{T}^{(n)}\in\set{T}$ and $\lim_{n\to\infty} \Lambda(\rvec{T}^{(n)},\vec{\lambda},\vec{\mu}) = d(\vec{\lambda},\vec{\mu})$. By Lemma~\ref{lem:phase_invariance}, we can define another sequence $(\rvec{T}^{(n)'})_{n\in\stdset{N}}$ such that $(\forall n \in \stdset{N})$ $\rvec{T}^{(n)'} \in \set{T}$ and $\Lambda'(\rvec{T}^{(n)'},\vec{\lambda},\vec{\mu})=\Lambda(\rvec{T}^{(n)},\vec{\lambda},\vec{\mu})\geq d'(\vec{\lambda},\vec{\mu})$, hence satisfying $d(\vec{\lambda},\vec{\mu}) = \lim_{n\to\infty} \Lambda'(\rvec{T}^{(n)'},\vec{\lambda},\vec{\mu}) \geq d'(\vec{\lambda},\vec{\mu})$. Combining both inequalities $d(\vec{\lambda},\vec{\mu})\leq d'(\vec{\lambda},\vec{\mu})$ and $d(\vec{\lambda},\vec{\mu})\geq d'(\vec{\lambda},\vec{\mu})$ completes the proof.
\end{proof}

\begin{lemma}\label{lem:strong_duality}
Problem~\eqref{prob:convex_reformulation} admits a unique solution $\rvec{T}' \in \set{T}$. Furthermore, strong duality holds for Problem~\eqref{prob:convex_reformulation}, i.e., Problem~\eqref{prob:dual_reformulation} and Problem~\eqref{prob:convex_reformulation} have the same optimum, and there exist Lagrangian multipliers $(\vec{\lambda}',\vec{\mu}')$ solving Problem~\eqref{prob:dual_reformulation}.
\end{lemma}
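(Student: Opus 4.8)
The plan is to prove the two assertions separately, both leveraging the fact that the constraint set $\set{T}$ is a \emph{closed} subspace of the Hilbert space of square-integrable random matrices. For existence and uniqueness of the primal minimizer I would invoke the Hilbert projection theorem, whereas for strong duality and the existence of optimal multipliers I would apply Proposition~\ref{prop:duality}(ii), whose convexity hypothesis is already secured by Lemma~\ref{lem:convexity} and whose Slater hypothesis I would verify using Lemma~\ref{lem:phase_invariance}.

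For existence and uniqueness, first I would equip $\set{T}$ with the inner product inducing the norm $\|\rvec{T}\|^2 = \sum_{k=1}^K\E[\|\rvec{t}_k\|^2]$, so that the objective of Problem~\eqref{prob:convex_reformulation} is exactly the squared norm of $\rvec{T}$. Since each $\set{H}_l^N$ is the space of $\Sigma_l$-measurable square-integrable functions, it is a closed subspace (the range of the conditional-expectation orthogonal projection), and hence $\set{T}$ is itself a Hilbert space. Next I would verify that the feasible set $\set{C}$ of Problem~\eqref{prob:convex_reformulation} is (i) nonempty, as exhibited by the strictly feasible point constructed in the strong-duality step below; (ii) convex, by Lemma~\ref{lem:convexity}; (iii) closed, since it is an intersection of sublevel sets of the continuous convex functions $\rvec{T}\mapsto|c_k(\rvec{T})|-\nu_k\Re(b_k(\rvec{T}))$ and of the continuous power functionals; and (iv) bounded, because the per-AP power constraints force $\|\rvec{T}\|^2 = \sum_{l=1}^L\sum_{k=1}^K\E[\|\rvec{t}_{l,k}\|^2]\leq \sum_{l=1}^L P_l<\infty$. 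Minimizing $\|\rvec{T}\|^2$ over $\set{C}$ is then precisely the projection of the origin onto $\set{C}$, and the projection theorem for a nonempty closed convex subset of a Hilbert space yields a unique minimizer $\rvec{T}'\in\set{T}$.

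For strong duality, I would instantiate Proposition~\ref{prop:duality}(ii) with $\set{X}=\set{T}$, with $f$ the objective, and with $\vec{g}$ collecting the SINR and power constraints. Lemma~\ref{lem:convexity} supplies that $f$ and $\vec{g}$ are proper convex, so only Slater's condition remains. To establish it, I would take a point $\rvec{T}$ in the (nonempty) strict feasibility set of Problem~\eqref{prob:DL_QoS}, which satisfies $|c_k(\rvec{T})|-\nu_k|b_k(\rvec{T})|<0$ and strict per-AP power inequalities, and then apply Lemma~\ref{lem:phase_invariance} to obtain $\rvec{T}'\in\set{T}$ with $|c_k(\rvec{T}')|-\nu_k\Re(b_k(\rvec{T}')) = |c_k(\rvec{T})|-\nu_k|b_k(\rvec{T})|<0$ and identical per-AP powers; thus $\rvec{T}'$ strictly satisfies all constraints of Problem~\eqref{prob:convex_reformulation}. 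Proposition~\ref{prop:duality}(ii) then delivers $d^\star = r^\star$ and the existence of optimal multipliers $(\vec{\lambda}',\vec{\mu}')\in\stdset{R}_+^L\times\stdset{R}_+^K$ solving Problem~\eqref{prob:dual_reformulation}.

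The hard part is the closedness claim in step (iii), which hinges on the continuity of the maps $\rvec{T}\mapsto|c_k(\rvec{T})|$ on the Hilbert space $\set{T}$. This reduces to the boundedness of the quadratic forms $\rvec{t}\mapsto\E[|\rvec{h}_k^\herm\rvec{t}|^2]$, i.e., to an estimate of the form $\E[|\rvec{h}_k^\herm\rvec{t}|^2]\leq C\,\E[\|\rvec{t}\|^2]$, which is not automatic when the channel is merely square-integrable and requires a standing moment (or essential boundedness) condition on $\rvec{h}_k$. Once this continuity is granted, all four properties of $\set{C}$, and hence the entire argument, follow routinely from standard Hilbert-space convex analysis.
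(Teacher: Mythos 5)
Your proposal takes essentially the same route as the paper's proof: Slater's condition for Problem~\eqref{prob:convex_reformulation} is obtained by applying Lemma~\ref{lem:phase_invariance} to a strictly feasible point of Problem~\eqref{prob:DL_QoS}, strong duality and the multipliers $(\vec{\lambda}',\vec{\mu}')$ then follow from Proposition~\ref{prop:duality}(ii) together with Lemma~\ref{lem:convexity}, and existence/uniqueness of $\rvec{T}'$ follows from the Hilbert projection theorem, viewing the objective as the squared norm induced by $\langle \rvec{T},\rvec{V}\rangle = \sum_{k=1}^K\Re\{\E[\rvec{v}_k^\herm\rvec{t}_k]\}$ and the problem as projecting the origin onto the feasible set. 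Two small remarks: your boundedness step (iv) is superfluous, since the projection theorem needs only a nonempty closed convex subset of a Hilbert space; and your closing paragraph overstates the remaining difficulty, as explained next.

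The estimate $\E[|\rvec{h}_k^\herm\rvec{t}|^2]\leq C\,\E[\|\rvec{t}\|^2]$ you call for is indeed false in general (with perfect CSI one may align $\rvec{t}$ with $\rvec{h}_k$ and weight it so that the left-hand side diverges while $\E[\|\rvec{t}\|^2]<\infty$), so continuity of $\rvec{T}\mapsto|c_k(\rvec{T})|$ on all of $\set{T}$ cannot be had without extra moment assumptions --- on this point your instinct is sound, and the paper's phrase ``continuity of the functions defining each constraint'' is loose. But no standing moment or essential-boundedness condition on $\rvec{h}_k$ is needed, because closedness of the feasible set requires only \emph{lower semicontinuity}, which holds automatically: if $\rvec{t}_j^{(n)}\to\rvec{t}_j$ in mean square, then along a further subsequence $\rvec{t}_j^{(n)}\to\rvec{t}_j$ almost surely, and Fatou's lemma gives $\E[|\rvec{h}_k^\herm\rvec{t}_j|^2]\leq\liminf_n\E[|\rvec{h}_k^\herm\rvec{t}_j^{(n)}|^2]$; the standard subsequence argument then shows each $\rvec{T}\mapsto|c_k(\rvec{T})|$ is lower semicontinuous. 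The subtracted term $\Re(b_k(\rvec{T}))=\Re(\E[\rvec{h}_k^\herm\rvec{t}_k])$ is a genuinely continuous linear functional, since Cauchy--Schwarz yields $|\E[\rvec{h}_k^\herm\rvec{t}]|\leq\sqrt{\E[\|\rvec{h}_k\|^2]}\sqrt{\E[\|\rvec{t}\|^2]}$ with $\rvec{h}_k\in\set{H}^{NL}$, so every sublevel set of $|c_k(\rvec{T})|-\nu_k\Re(b_k(\rvec{T}))$ is closed, as are the power-constraint sets, and their intersection is closed. Points at which $\E[|\rvec{h}_k^\herm\rvec{t}_j|^2]=+\infty$ cause no trouble: since $\Re(b_k(\rvec{T}))$ is finite everywhere, such points simply violate the constraint and lie outside the feasible set, which is consistent with the constraints being proper extended-real-valued convex functions as in Lemma~\ref{lem:convexity}. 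Finally, note that the strong-duality half of the lemma never uses any topology at all: Proposition~\ref{prop:duality}(ii) asks only for proper convexity and Slater's condition, so your flagged ``hard part'' affects only the projection step, where Fatou closes it for free.
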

\begin{proof}
From Lemma~\ref{lem:phase_invariance}, it immediately follows that since
Slater's condition is assumed to hold for Problem~\eqref{prob:DL_QoS}, then it holds also for Problem~\eqref{prob:convex_reformulation}. Then, strong duality and existence of Lagrangian multipliers follow by recalling Proposition~\ref{prop:duality} and Lemma~\ref{lem:convexity}. Existence of a unique solution $\rvec{T}'$ to Problem~\eqref{prob:convex_reformulation} follows by the Hilbert projection theorem (see, e.g., \cite{stark1998vector}), since the objective $\sum_{k=1}^K \E[\|\rvec{t}_k\|^2]$ is the norm induced by the inner product $(\forall \rvec{T},\rvec{V} \in \set{T})~\langle \rvec{T},\rvec{V}\rangle := \sum_{k=1}^K\Re\{\E[\rvec{v}_k^\herm\rvec{t}_k]\}$ on $\set{T}$, and the constraints define a nonempty closed convex subset of $\set{T}$  (closedness follows by continuity of the functions defining each constraint). In other words, in this Hilbert space, the solution to the problem is the projection of the zero vector onto the closed convex set defined by the constraints.
\end{proof}
The proof of Proposition~\ref{prop:strong_duality}
is readily given by combining Lemma~\ref{lem:convex_optimum}, Lemma~\ref{lem:dual_equivalence}, Lemma~\ref{lem:strong_duality}, and by noticing that the unique solution $\rvec{T}'$ to Problem~\eqref{prob:convex_reformulation} is also a solution to Problem~\eqref{prob:DL_QoS} (note: the converse does not hold in general).

\subsection{Recovering a primal solution from a dual solution}\label{app:primaldual}
Starting from the strong duality property in Proposition~\ref{lem:partial_dual}, we obtain that a primal-dual pair $(\rvec{T}^\star,\vec{\lambda}^\star)$ jointly solving Problem~\eqref{prob:DL_QoS} and Problem~\eqref{prob:partial_dual} must satisfy
\begin{equation*}
\begin{split}
p^\star &=  \inf_{\rvec{T} \in \set{T}_{\vec{\gamma}}} \sum_{k=1}^K \E[\|\rvec{t}_k\|_{\vec{1}+\vec{\lambda}^\star}^2] -\sum_{l=1}^L \lambda^\star_l P_l\\
&\leq \sum_{k=1}^K\E[\|\rvec{t}^\star_k\|_{\vec{1}+\vec{\lambda}^\star}^2] -\sum_{l=1}^L \lambda^\star_l P_l\\
&= \sum_{k=1}^K\E[\|\rvec{t}^\star_k\|^2]+\sum_{l=1}^L\lambda^\star_l\left(\sum_{k=1}^K\E[\|\rvec{t}^\star_{l,k}\|^2] - P_l\right)\\
&\leq \sum_{k=1}^K\E[\|\rvec{t}^\star_k\|^2] = p^\star,
\end{split}
\end{equation*}
where the first inequality follows by the definition of infimum, and the last inequality follows since $(\rvec{T}^\star,\vec{\lambda}^\star)$ satisfy the primal and dual constraints. The above chain of inequalities shows that $\rvec{T}^\star$ attains the infimum. If there was a unique $\rvec{T}\in \set{T}_{\vec{\gamma}}$ attaining the infimum, then it would also be the unique solution $\rvec{T}^\star$ to Problem~\eqref{prob:DL_QoS}. However, a similar property does not hold in general whenever the infimum is attained by multiple elements of $\set{T}_{\vec{\gamma}}$. In particular, the infimum may be attained not only by $\rvec{T}^\star$, but also by some other $\rvec{T}\in \set{T}_{\vec{\gamma}}$ violating the power constraints. Nevertheless, we now prove that this case can be here excluded, by using the next two lemmas. For convenience, we define the set $\set{T}'_{\vec{\gamma}}\subseteq\set{T}$ of precoders satisfying the reformulated SINR constraints \eqref{eq:SINR_reformulation}. 
\begin{lemma}\label{lem:inf_hilbert}
For every $\vec{\sigma}\in \stdset{R}_{++}^L$, there exists a unique $\rvec{T}'\in \set{T}'_{\vec{\gamma}}$ attaining $\inf_{\rvec{T} \in \set{T}'_{\vec{\gamma}}} \sum_{k=1}^K \E[\|\rvec{t}_k\|_{\vec{\sigma}}^2]$. 
\end{lemma}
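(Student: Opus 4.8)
The plan is to recognize Lemma~\ref{lem:inf_hilbert} as a Hilbert-space projection statement, essentially identical in spirit to the existence/uniqueness argument already used in Lemma~\ref{lem:strong_duality}, but with the standard norm replaced by the $\vec{\sigma}$-weighted one and with the feasible set taken to be $\set{T}'_{\vec{\gamma}}$ (only the reformulated SINR constraints, without the power constraints).

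First I would equip $\set{T}$ with the real inner product $\langle \rvec{T},\rvec{V}\rangle_{\vec{\sigma}} \eqdef \sum_{k=1}^K\sum_{l=1}^L\sigma_l\Re\{\E[\rvec{v}_{l,k}^\herm\rvec{t}_{l,k}]\}$, whose induced norm squared is exactly the objective $\sum_{k=1}^K\E[\|\rvec{t}_k\|_{\vec{\sigma}}^2]$ (recall \eqref{eq:norm}). Since every $\sigma_l \in \stdset{R}_{++}$, this form is positive definite and the induced norm is equivalent to the unweighted norm used in Lemma~\ref{lem:strong_duality}; hence $\set{T}$, which is a closed subspace of the real Hilbert space $\set{H}^{NL\times K}$ (closedness coming from the fact that each measurability-constrained space $\set{H}_l^N$ is the closed range of a conditional-expectation projection), remains a Hilbert space under $\langle\cdot,\cdot\rangle_{\vec{\sigma}}$.

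Next I would establish that $\set{T}'_{\vec{\gamma}}$ is a nonempty closed convex subset of $\set{T}$. Convexity is immediate from Lemma~\ref{lem:convexity}, since each reformulated constraint function $\rvec{T}\mapsto|c_k(\rvec{T})|-\nu_k\Re(b_k(\rvec{T}))$ is convex, so $\set{T}'_{\vec{\gamma}}$ is an intersection of $0$-sublevel sets. Closedness follows because these constraint functions are continuous, a fact already invoked in Lemma~\ref{lem:strong_duality}. For nonemptiness I would invoke the standing strict-feasibility assumption on Problem~\eqref{prob:DL_QoS}: there exists $\rvec{T}\in\set{T}$ with $\mathrm{SINR}_k^{\mathrm{DL}}(\rvec{T})>\gamma_k$, equivalently $|c_k(\rvec{T})|-\nu_k|b_k(\rvec{T})|<0$ (recall \eqref{eq:SINR_rearranged}); applying Lemma~\ref{lem:phase_invariance} then produces a $\rvec{T}'\in\set{T}$ with $|c_k(\rvec{T}')|-\nu_k\Re(b_k(\rvec{T}'))<0$ for all $k\in\set{K}$, so $\set{T}'_{\vec{\gamma}}\neq\emptyset$.

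With these pieces in place, the conclusion is a direct application of the Hilbert projection theorem (as in Lemma~\ref{lem:strong_duality}, cf.\ \cite{stark1998vector}): minimizing the $\vec{\sigma}$-norm over a nonempty closed convex set in a Hilbert space is the projection of the origin onto that set, which yields a unique minimizer $\rvec{T}'$. Since minimizing the norm and minimizing its square coincide, this $\rvec{T}'$ is the unique attainer of $\inf_{\rvec{T}\in\set{T}'_{\vec{\gamma}}}\sum_{k=1}^K\E[\|\rvec{t}_k\|_{\vec{\sigma}}^2]$. I expect the only genuinely delicate point to be the closedness/completeness bookkeeping, namely that the weighted form really induces a Hilbert-space structure and that both the measurability-constrained subspace $\set{T}$ and the feasible set $\set{T}'_{\vec{\gamma}}$ are closed; but each of these reduces to facts already used earlier in the appendix, so the argument should be short.
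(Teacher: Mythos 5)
Your proposal is correct and takes essentially the same route as the paper: both equip $\set{T}$ with the $\vec{\sigma}$-weighted inner product $\langle \rvec{T},\rvec{V}\rangle_{\vec{\sigma}} = \sum_{k=1}^K\sum_{l=1}^L\sigma_l\Re\{\E[\rvec{v}_{l,k}^\herm\rvec{t}_{l,k}]\}$ and invoke the Hilbert projection theorem for the nonempty closed convex set $\set{T}'_{\vec{\gamma}}$, identifying the minimizer as the projection of the origin. Your write-up merely makes explicit some bookkeeping the paper leaves implicit (nonemptiness via strict feasibility and Lemma~\ref{lem:phase_invariance}, closedness and completeness under the weighted norm, and the equivalence of minimizing the norm and its square), all of which is sound.
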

\begin{proof}
The proof follows by the Hilbert projection theorem (see, e.g., \cite{stark1998vector}), since the objective $\sum_{k=1}^K \E[\|\rvec{t}_k\|_{\vec{\sigma}}^2]$ is the norm induced by the inner product $(\forall \rvec{T},\rvec{V} \in \set{T})~\langle \rvec{T},\rvec{V}\rangle := \sum_{k=1}^K\sum_{l=1}^L\sigma_l\Re\{\E[\rvec{v}_{l,k}^\herm\rvec{t}_{l,k}]\}$ on $\set{T}$, and the constraints define a nonempty closed convex subset of $\set{T}$. Specifically, in this Hilbert space, the infimum is attained by the projection of the zero vector onto the closed convex set $\set{T}'_{\vec{\gamma}}$.
\end{proof}
\begin{lemma}\label{lem:inf_relation}
For every $\vec{\sigma}\in \stdset{R}_{++}^L$, if $\rvec{T}$ attains $\inf_{\rvec{T} \in \set{T}_{\vec{\gamma}}} \sum_{k=1}^K \E[\|\rvec{t}_k\|_{\vec{\sigma}}^2]$, then $\exists \rvec{T}'$ attaining $\inf_{\rvec{T} \in \set{T}_{\vec{\gamma}}'} \sum_{k=1}^K \E[\|\rvec{t}_k\|_{\vec{\sigma}}^2]$ with the same power consumption. 
\end{lemma}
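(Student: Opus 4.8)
The plan is to build the desired $\rvec{T}'$ directly from $\rvec{T}$ via the columnwise phase rotation of Lemma~\ref{lem:phase_invariance}, and then to sandwich the two infima so that $\rvec{T}'$ is forced to be optimal over $\set{T}'_{\vec{\gamma}}$. First I would record the easy inequality direction. Since $(\forall x\in\stdset{C})~\Re(x)\le|x|$ and $\nu_k=\sqrt{1+1/\gamma_k}>0$, inequality \eqref{eq:constraint_inclusion} shows that every $\rvec{T}$ feasible for the reformulated constraints \eqref{eq:SINR_reformulation} is also feasible for the rearranged constraints \eqref{eq:SINR_rearranged}; hence $\set{T}'_{\vec{\gamma}}\subseteq\set{T}_{\vec{\gamma}}$, and taking the infimum of the same objective over the smaller set yields $\inf_{\rvec{T}\in\set{T}_{\vec{\gamma}}}\sum_{k}\E[\|\rvec{t}_k\|_{\vec{\sigma}}^2]\le\inf_{\rvec{T}\in\set{T}'_{\vec{\gamma}}}\sum_{k}\E[\|\rvec{t}_k\|_{\vec{\sigma}}^2]$.

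Next, given a minimizer $\rvec{T}\in\set{T}_{\vec{\gamma}}$ of the left-hand infimum, I would invoke Lemma~\ref{lem:phase_invariance} to obtain $\rvec{T}'\in\set{T}$ satisfying, for all $k\in\set{K}$ and $l\in\set{L}$, the three identities $|b_k(\rvec{T})|=\Re(b_k(\rvec{T}'))$, $|c_k(\rvec{T})|=|c_k(\rvec{T}')|$, and $\E[\|\rvec{t}_{l,k}\|^2]=\E[\|\rvec{t}'_{l,k}\|^2]$. The power identities, weighted by $\sigma_l$ and summed over $l$ and $k$ (recalling \eqref{eq:norm}), immediately give $\sum_{k}\E[\|\rvec{t}'_k\|_{\vec{\sigma}}^2]=\sum_{k}\E[\|\rvec{t}_k\|_{\vec{\sigma}}^2]$ — in fact the per-AP consumption is preserved exactly — which already establishes the \emph{same-power-consumption} part of the claim. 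Combining the first two identities with $\rvec{T}\in\set{T}_{\vec{\gamma}}$ yields, for every $k$, $|c_k(\rvec{T}')|-\nu_k\Re(b_k(\rvec{T}'))=|c_k(\rvec{T})|-\nu_k|b_k(\rvec{T})|\le0$, so $\rvec{T}'$ satisfies \eqref{eq:SINR_reformulation}, i.e.\ $\rvec{T}'\in\set{T}'_{\vec{\gamma}}$.

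To finish, I would close the sandwich. The feasible point $\rvec{T}'\in\set{T}'_{\vec{\gamma}}$ attains the objective value $\sum_k\E[\|\rvec{t}'_k\|_{\vec{\sigma}}^2]=\inf_{\rvec{T}\in\set{T}_{\vec{\gamma}}}\sum_k\E[\|\rvec{t}_k\|_{\vec{\sigma}}^2]$, which produces the reverse inequality $\inf_{\rvec{T}\in\set{T}'_{\vec{\gamma}}}\sum_k\E[\|\rvec{t}_k\|_{\vec{\sigma}}^2]\le\inf_{\rvec{T}\in\set{T}_{\vec{\gamma}}}\sum_k\E[\|\rvec{t}_k\|_{\vec{\sigma}}^2]$; together with the first step the two infima coincide, and therefore $\rvec{T}'$ is itself a minimizer over $\set{T}'_{\vec{\gamma}}$ with the same power consumption as $\rvec{T}$. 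I do not expect a genuine obstacle here, since the argument is essentially bookkeeping on top of Lemma~\ref{lem:phase_invariance}. The only point deserving care is to verify that a single phase rotation simultaneously (i) leaves every per-AP power term, and hence the weighted objective for any $\vec{\sigma}$, invariant, and (ii) turns the modulus $|b_k|$ of the nonconvex constraint into the real part $\Re(b_k)$ of the convexified constraint, so that feasibility is transferred at no cost in objective value.
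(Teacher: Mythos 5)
Your proof is correct and takes essentially the same route as the paper: the paper likewise combines the constraint inclusion $\set{T}'_{\vec{\gamma}}\subseteq\set{T}_{\vec{\gamma}}$ (to get equality of the two infima, argued as in Lemma~\ref{lem:convex_optimum}) with an application of Lemma~\ref{lem:phase_invariance}. Your only deviation is a harmless streamlining: since a minimizer over $\set{T}_{\vec{\gamma}}$ is assumed to exist, you phase-rotate that minimizer directly to obtain the reverse inequality, rather than invoking a minimizing-sequence argument, which is precisely the ``simple application of Lemma~\ref{lem:phase_invariance}'' the paper's proof alludes to.
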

\begin{proof}
Following similar lines as for Lemma~\ref{lem:convex_optimum}, we can show that $ \inf_{\rvec{T} \in \set{T}_{\vec{\gamma}}} \sum_{k=1}^K \E[\|\rvec{t}_k\|_{\vec{\sigma}}^2] = \inf_{\rvec{T} \in \set{T}'_{\vec{\gamma}}} \sum_{k=1}^K \E[\|\rvec{t}_k\|_{\vec{\sigma}}^2]$ holds. The rest of the proof is a simple application of Lemma~\ref{lem:phase_invariance}.
\end{proof}
Informally, the rest of the proof exploits the power-preserving mapping between $\inf_{\rvec{T} \in \set{T}_{\vec{\gamma}}} \sum_{k=1}^K \E[\|\rvec{t}_k\|_{\vec{\sigma}}^2]$ and $\inf_{\rvec{T} \in \set{T}'_{\vec{\gamma}}} \sum_{k=1}^K \E[\|\rvec{t}_k\|_{\vec{\sigma}}^2]$ given by Lemma~\ref{lem:inf_relation}, and the fact that $\inf_{\rvec{T} \in \set{T}'_{\vec{\gamma}}} \sum_{k=1}^K \E[\|\rvec{t}_k\|_{\vec{\sigma}}^2]$ is a singleton by Lemma~\ref{lem:inf_hilbert}, to exclude that $\inf_{\rvec{T} \in \set{T}_{\vec{\gamma}}} \sum_{k=1}^K \E[\|\rvec{t}_k\|_{\vec{\sigma}}^2]$ is attained by points with different power consumption. More precisely, since the optimum $\rvec{T}^\star$ to the original Problem~\eqref{prob:DL_QoS} attains $\inf_{\rvec{T} \in \set{T}_{\vec{\gamma}}} \sum_{k=1}^K \E[\|\rvec{t}_k\|_{\vec{1}+\vec{\lambda}^\star}^2]$ and satisfies the power constraints, then, by Lemma~\ref{lem:inf_relation}, $\exists \rvec{T}'$ attaining $\inf_{\rvec{T} \in \set{T}_{\vec{\gamma}}'} \sum_{k=1}^K \E[\|\rvec{t}_k\|_{\vec{1}+\vec{\lambda}^\star}^2]$ and satisfying the power constraints. Since, by  Lemma~\ref{lem:inf_hilbert}, this $\rvec{T}'$ must be unique, there cannot be some $\rvec{T}\neq \rvec{T}^\star$ attaining $\inf_{\rvec{T} \in \set{T}_{\vec{\gamma}}} \sum_{k=1}^K \E[\|\rvec{t}_k\|_{\vec{1}+\vec{\lambda}^\star}^2]$ and violating the power constraints.

\subsection{Convergence of the projected subgradient method}
\label{app:proof_subgradient}
We apply \cite[Algorithm~3.2.8]{nesterov2003introductory} to the the minimization of $-\tilde d$ over $\vec{\lambda}\in \stdset{R}_{+}^L$. From known subgradient calculus rules, since $-\tilde{d}$ is the supremum of a family of affine functions indexed by $\rvec{T}\in\set{T}_{\vec{\gamma}}$, a subgradient at $\vec{\lambda}\in \stdset{R}^L$ is given by the gradient of any of these function attaining the supremum. This leads to the proposed algorithm. For all $ \vec{\lambda}\in \stdset{R}_+^L$, nonemptineess of $\arg\min_{\set{T}_{\vec{\gamma}}}\sum_{k=1}^K\E[\|\rvec{t}_k\|_{\vec{1}+\vec{\lambda}}^2]$ follows by combining: (i) $ \inf_{\rvec{T} \in \set{T}_{\vec{\gamma}}} \sum_{k=1}^K \E[\|\rvec{t}_k\|_{\vec{\sigma}}^2] = \inf_{\rvec{T} \in \set{T}'_{\vec{\gamma}}} \sum_{k=1}^K \E[\|\rvec{t}_k\|_{\vec{\sigma}}^2]$, similarly to Lemma~\ref{lem:convex_optimum}; (ii) Lemma~\ref{lem:inf_hilbert}; (iii) $\set{T}_{\vec{\gamma}}' \subseteq \set{T}_{\vec{\gamma}}$. Convergence of the best objective to the optimum $\tilde{d}(\vec{\lambda}^\star)$ for $n\to \infty$ follows from \cite[Lemma~3.2.1]{nesterov2003introductory} and the proof of \cite[Theorem~3.2.2]{nesterov2003introductory}, without using the Lipschitz continuity assumption. Furthermore, convergence of the corresponding argument follows since $\tilde{d}:\stdset{R}^L \to \stdset{R}$ is concave and hence continuous. 

\subsection{SINR maximization via MSE minimization}
\label{app:MSE}
Existence and uniqueness of the solution $\rvec{v}_k^\star$ follows from the Hilbert projection theorem (see, e.g., \cite{stark1998vector}) as in \cite[Lemma~4]{miretti2021team}. For the second statement, we observe that  
		\begin{align*}
			&\mathrm{MSE}_k(\rvec{v}^\star_k,\vec{p},\vec{\sigma})=\inf_{\substack{\rvec{v}_k\in\set{T}_k}} \mathrm{MSE}_k(\rvec{v}_k,\vec{p},\vec{\sigma})\\
			&\overset{(a)}{=} \inf_{\substack{\rvec{v}_k\in\set{T}_k\\ \E[\|\rvec{v}_k\|_{\vec{\sigma}}^2] \neq 0}} \inf_{\beta \in \stdset{C}} \mathrm{MSE}_k(\beta\rvec{v}_k,\vec{p},\vec{\sigma}) \\
			&\overset{(b)}{=} \inf_{\substack{\rvec{v}_k\in\set{T}_k\\ \E[\|\rvec{v}_k\|_{\vec{\sigma}}^2] \neq 0}} 1-\dfrac{p_k|\E[\rvec{h}_k^\herm\rvec{v}_k]|^2}{\sum_{j\in\set{K}}p_j\E[|\rvec{h}_j^\herm\rvec{v}_k|^2]+\E\left[\|\rvec{v}_k\|_{\vec{\sigma}}^2\right]} \\
			&=  \inf_{\substack{\rvec{v}_k\in\set{T}_k\\ \E[\|\rvec{v}_k\|_{\vec{\sigma}}^2] \neq 0}}\dfrac{1}{1+\mathrm{SINR}_k^{\mathrm{UL}}(\rvec{v}_k,\vec{p},\vec{\sigma})}= \dfrac{1}{1+u_k(\vec{p},\vec{\sigma})} 
		\end{align*}
		where $(a)$ follows from $(\forall \beta\in \stdset{C})(\forall \rvec{v}_k \in \mathcal{T}_k)$ $\beta \rvec{v}_k \in\mathcal{T}_k$, and $(b)$ follows from standard minimization of scalar quadratic forms. More specifically, $(b)$ follows by noticing that 
\begin{equation*}
\mathrm{MSE}_k(\beta\rvec{v}_k,\vec{p},\vec{\sigma}) = a|\beta|^2-2\Re(c^*\beta)+1,
\end{equation*}		
where $a \eqdef \sum_{j\in\set{K}}p_j\E[|\rvec{h}_j^\herm\rvec{v}_k|^2]+\E\left[\|\rvec{v}_k\|_{\vec{\sigma}}^2\right] \in \stdset{R}_{++}$ and $c \eqdef \E[\rvec{v}_k^\herm\rvec{h}_k]$, is a positive definite quadratic form in $\beta \in \stdset{C}$ with minimum $1-|c|^2/a$ attained at $\beta = c/a$. 

To conclude the proof we observe that, for the case $\rvec{v}_k^\star \neq \vec{0}$ (equivalently, $\E[\|\rvec{v}_k^\star\|_{\vec{\sigma}}^2] \neq 0$), the identity $u_k(\vec{p},\vec{\sigma}) = \mathrm{SINR}^{\mathrm{UL}}_k\left(\rvec{v}_k^\star,\vec{p},\vec{\sigma}\right)$ follows readily from the above chain of equalities. Furthermore, the case $\rvec{v}_k^\star=\vec{0}$ can be excluded due to the property $u_k(\vec{p},\vec{\sigma})>0$, already discussed in the proof of Proposition~\ref{lem:fixed_point}, since 
\begin{equation*}
u_k(\vec{p},\vec{\sigma})>0 \implies \mathrm{MSE}_k(\rvec{v}^\star_k,\vec{p},\vec{\sigma})<1 \implies \rvec{v}_k \neq \vec{0}.
\end{equation*}

\subsection{Optimal local precoding}
\label{app:local}
Substituting \eqref{eq:LTMMSE} into the  the optimality conditions \eqref{eq:TMMSE} for a given UE $k\in\set{K}$, we obtain $(\forall l \in \set{L}_k)$
\begin{align*}
\rvec{v}_{l,k} &=\rmat{V}_l\left(\vec{e}_k-\sum_{j \in \set{L}_k\backslash  \{l\}} \vec{P}^{\frac{1}{2}}\E\left[\hat{\rmat{H}}_{j}^\herm\rvec{V}_{j}\vec{c}_{j,k}\Big|\hat{\rvec{H}}_l\right] \right) \\
&= \rvec{V}_{l}\left(\vec{e}_{k} - \sum_{j \in \set{L}_k\backslash  \{l\}} \vec{\Pi}_j\vec{c}_{j,k} \right) = \rvec{V}_l \vec{c}_{l,k},
\end{align*}
where the second equality follows from the independence between $\hat{\rvec{H}}_l$ and $(\hat{\rvec{H}}_j,\rvec{V}_j)$ for $j\neq l$, and where the last equality follows since $(\forall l \in \set{L}_k)~\vec{c}_{l,k} + \sum_{j \in \set{L}_k\backslash  \{l\}} \vec{\Pi}_j\vec{c}_{j,k} = \vec{e}_k$ holds by construction. This last system of equations always admits a unique solution, as proven next.

We focus for simplicity on the case $\set{L}_k= \set{L}$, since the extension to arbitrary $\set{L}_k$ is immediate. We rewrite the system as $(\vec{M}+\vec{U}\vec{\Pi}) \vec{c}_k= \vec{U}\vec{e}_k$, where $\vec{c}_k:= \begin{bmatrix}
\vec{c}_{1,k}^\T & \ldots &\vec{c}_{L,k}^\T 
\end{bmatrix}^\T$, $\vec{\Pi}:= \begin{bmatrix}
\vec{\Pi}_1 & \ldots &\vec{\Pi}_L 
\end{bmatrix}$, $\vec{U}:= \begin{bmatrix}
\vec{I}_K & \ldots &\vec{I}_K 
\end{bmatrix}^\T $, $\vec{M}:=\mathrm{diag}(\vec{I}_K-\vec{\Pi}_1,\ldots,\vec{I}_K-\vec{\Pi}_L)$. We shall prove that $\vec{M}+\vec{U}\vec{\Pi}$ is invertible. By the matrix inversion lemma (see, e.g., \cite[Appendix~C]{boyd2004convex}), $\vec{M}+\vec{U}\vec{\Pi}$ is invertible if both $\vec{M}$ and $\vec{I}_K+\vec{\Pi}\vec{M}^{-1}\vec{U}$ are invertible. Furthermore, standard arguments show that $(\forall l \in \set{L})~\vec{0}\preceq \vec{P}^{\frac{1}{2}}\hat{\rvec{H}}_l^\herm\rvec{V}_l \preceq \vec{P}^{\frac{1}{2}}\hat{\rvec{H}}_l^\herm(\hat{\rmat{H}}_{l}\vec{P}\hat{\rmat{H}}_{l}^\herm + \sigma_l\vec{I}_N)^{-1}\hat{\rmat{H}}_{l}\vec{P}^{\frac{1}{2}} \prec \vec{I}_K$ holds almost surely, where we use $\preceq$ ($\prec$) to denote the partial ordering (strict partial ordering) with respect to the cone of Hermitian positive semidefine matrices as in \cite{boyd2004convex}. Therefore, $(\forall l \in \set{L})~\vec{0}\preceq \vec{\Pi}_l \prec \vec{I}_K$ holds. This in turn implies that $\vec{M}$ and $\vec{I}_K+\vec{\Pi}\vec{M}^{-1}\vec{U} = \vec{I}_K + \sum_{l\in \set{L}}\vec{\Pi}_l(\vec{I}-\vec{\Pi}_l)^{-1}$ are Hermitian positive definite, hence invertible.
(The eigenvalues of $\vec{\Pi}_l(\vec{I}-\vec{\Pi}_l)^{-1}$ take the form $\xi/(1-\xi)\geq 0$, where $0\leq\xi<1$ is an eigenvalue of $\vec{\Pi}_l$.)

\bibliographystyle{IEEEbib}
\bibliography{IEEEabrv,refs}

\end{document}